\newif\ifabstract
\newif\iffull
\newcommand{\myparskip}{3pt}
\newcommand{\td}{\tilde d}
\newcommand{\cCMG}{c_{\mathsf{CMG}}}
\newcommand{\Elong}{E^{\mathsf{long}}}
\newcommand{\otilde}{\widetilde O}
\newcommand{\ohat}{\widehat O}
\newcommand{\MBM}{\mbox{\sf{maximum bipartite matching}}\xspace}
\newcommand{\MWU}{\mbox{\sf{MWU}}\xspace}
\newcommand{\SSSP}{\mbox{\sf{SSSP}}\xspace}
\newcommand{\APSP}{\mbox{\sf{APSP}}\xspace}
\newcommand{\algembedorcut}{\ensuremath{\operatorname{AlgEmbedOrCut}}\xspace}
\newcommand{\val}{\operatorname{val}}
\newcommand{\ceil}[1]{\ensuremath{\left\lceil#1\right\rceil}}
\newcommand{\floor}[1]{\ensuremath{\left\lfloor#1\right\rfloor}}
\newcommand{\opt}{\mathsf{OPT}}
\newcommand{\set}[1]{\left\{ #1 \right\}}
\newcommand{\tset}{{\mathcal T}}
\newcommand{\pset}{{\mathcal{P}}}
\newcommand{\qset}{{\mathcal{Q}}}
\newcommand{\xset}{{\mathcal{X}}}
\newcommand{\oset}{{\mathcal{O}}}
\newcommand{\be}{\begin{enumerate}}
\newcommand{\ee}{\end{enumerate}}
\newcommand{\bd}{\begin{description}}
\newcommand{\ed}{\end{description}}
\newcommand{\bi}{\begin{itemize}}
\newcommand{\ei}{\end{itemize}}
\newtheorem{theorem}{Theorem}[section]
\newtheorem{observation}[theorem]{Observation}
\newtheorem{corollary}[theorem]{Corollary}
\newtheorem{claim}[theorem]{Claim}
\newtheorem{definition}[theorem]{Definition}
\newenvironment{proof}{\par \smallskip{\bf Proof:}}{\hfill\stopproof}
\def\stopproof{\square}
\def\square{\vbox{\hrule height.2pt\hbox{\vrule width.2pt height5pt \kern5pt
\vrule width.2pt} \hrule height.2pt}}
\newenvironment{proofof}[1]{\noindent{\bf Proof of #1.}}%
        {\hfill\stopproof}
\newcommand{\algmwu}{{\sc ALG-MWU}\xspace}
\newenvironment{prog}[1]{
\begin{minipage}{5.8 in}
\begin{center}
{\sc #1}
\end{center}
}
{
\end{minipage}
}
\newcommand{\program}[3]{\begin{figure} \fbox{\vspace{2mm}\begin{prog}{#1} #3 \end{prog}\vspace{2mm}} 
			\caption{#1 \label{#2}} \end{figure}}
\renewcommand{\phi}{\varphi}
\newcommand{\eps}{\epsilon}
\newcommand{\poly}{\operatorname{poly}}
\newcommand{\dist}{\mbox{\sf dist}}
\newcommand{\tdist}{\widetilde{\mathsf {dist}}}
\newenvironment{properties}[2][0]
{
\begin{enumerate} \setcounter{enumi}{#1}}{\end{enumerate}}
\newcommand{\vol}{\operatorname{Vol}}
\newcommand{\attime}[1][\tau]{^{(#1)}}
\newcommand{\skipp}{\operatorname{skip}}
\newcommand{\spann}{\operatorname{span}}
\newcommand{\dmax}{\delta_{\mbox{\textup{\footnotesize{max}}}}}
\newcommand{\DLSSSP}{{\sf {DAG-like SSSP}}\xspace}
\newcommand{\ATO}{\ensuremath{\mathcal{ATO}}\xspace}
\newcommand{\shortpath}{\mbox{\sf{short-path}}\xspace}
\newcommand{\pquery}{\mbox{\sf{path-query}}\xspace}
\newcommand{\constructexpander}{\ensuremath{\mathsf{ConstructExpander}}\xspace}
\newcommand{\shortestpath}{\mbox{\sf{shortest-path}}\xspace}
\newcommand{\maintaincluster}{\ensuremath{\mathsf{MaintainCluster}}\xspace}
\newcommand{\pathquery}{\mbox{\sf{path-query}}\xspace}
\newcommand{\EST}{\mbox{\sf{ES-Tree}}\xspace}
\begin{document}

\begin{titlepage}
	
	\title{A Faster Combinatorial Algorithm for Maximum Bipartite Matching}
\author{Julia Chuzhoy\thanks{Toyota Technological Institute at Chicago, IL. Email: {\tt cjulia@ttic.edu}. Supported in part by NSF grant CCF-2006464 and NSF HDR TRIPODS award 2216899.}\and Sanjeev Khanna \thanks{University of Pennsylvania, Philadelphia, PA. Email: {\tt sanjeev@cis.upenn.edu}. Supported in part by NSF awards CCF-1934876 and CCF-2008305. }}
	\maketitle
\pagenumbering{gobble}
	
	\thispagestyle{empty}
\begin{abstract}
	The maximum bipartite matching problem is among the most fundamental and well-studied problems in combinatorial optimization. A beautiful and celebrated combinatorial algorithm of Hopcroft and Karp~\cite{HK73} shows that maximum bipartite matching can be solved in $O(m \sqrt{n})$ time on a graph with $n$ vertices and $m$ edges. 
For the case of very dense graphs, a different approach based on fast matrix multiplication was subsequently developed~\cite{IM81, MS04}, that achieves a running time of $O(n^{2.371})$. 
For the next several decades, these results represented the fastest known algorithms for the problem 
until in 2013, a ground-breaking work of Madry~\cite{Madry13} gave a significantly faster algorithm for sparse graphs.
Subsequently, a sequence of works developed increasingly faster algorithms for solving maximum bipartite matching, and more generally directed maximum flow, culminating in a spectacular recent breakthrough~\cite{ChenKLPGS22} that gives an $m^{1+o(1)}$ time algorithm for maximum bipartite matching (and more generally, for min cost flows). These more recent developments collectively represented a departure from earlier combinatorial approaches: they all utilized continuous techniques based on interior-point methods for solving linear programs. 

This raises a natural question: are continuous techniques essential to obtaining fast algorithms for the bipartite matching problem? Our work makes progress on this question by presenting a new, purely combinatorial algorithm for bipartite matching, that, on moderately dense graphs outperforms both Hopkroft-Karp and the fast matrix multiplication based algorithms. 
Similar to the classical algorithms for bipartite matching, our approach is based on iteratively augmenting a current matching using augmenting paths in the (directed) residual flow network. A common method for designing fast algorithms for directed flow problems is via the multiplicative weights update (MWU) framework, that effectively reduces the flow problem to decremental single-source shortest paths (SSSP) in directed graphs. Our main observation is that a slight modification of this reduction results in a {\em special case} of SSSP that appears significantly easier than general decremental directed SSSP. Our main technical contribution is an efficient algorithm for this special case of SSSP, that outperforms the current state of the art algorithms for general decremental SSSP with adaptive adversary, leading to a deterministic algorithm for bipartite matching, whose running time is $\tilde{O}(m^{1/3}n^{5/3})$. This new algorithm thus starts to outperform the Hopcroft-Karp algorithm in graphs with $m = \omega(n^{7/4})$, and it also outperforms the fast matrix multiplication based algorithms on dense graphs. We believe that this framework for obtaining faster combinatorial algorithms for bipartite matching by exploiting the special properties of the resulting decremental SSSP instances is one of the main conceptual contributions of our work.

Finally, using a standard reduction from the maximum vertex-capacitated $s$-$t$ flow problem in directed graphs to maximum bipartite matching, we also obtain an $\tilde{O}(m^{1/3}n^{5/3})$-time deterministic algorithm for maximum vertex-capacitated $s$-$t$ flow when all vertex capacities are identical. 
	\end{abstract}

\newpage

\tableofcontents{}
\end{titlepage}

\pagenumbering{arabic}

\section{Introduction}
In the \MBM problem, the input is a bipartite graph $G$  with $n$ vertices and $m$ edges, and the goal is compute a matching of maximum size in $G$. This problem is among the most extensively studied in computer science, combinatorial optimization, and operations research, and holds a central place in graph algorithms with 
intimate connections to many other fundamental graph optimization problems (see e.g.\cite{Sch03}). 

A standard textbook application of the maximum $s$-$t$ flow problem shows that the task of finding an optimal bipartite matching can be reduced to finding a maximum $s$-$t$ flow in a directed graph with unit edge capacities. This connection immediately gives an $O(mn)$-time algorithm for bipartite matching by simply invoking the basic Ford-Fulkerson algorithm~\cite{FF56} on the resulting flow network. The resulting algorithm is based on a simple idea: as long as the current matching is not optimal, it can be augmented by finding an $s$-$t$ path in the residual flow network in $O(m)$ time. A celebrated result of Hopcroft and Karp~\cite{HK73} showed an elegant way to speed up this simple algorithm. Instead of finding a single $s$-$t$ path at a time, the Hopcroft-Karp (HK) algorithm finds a {\em maximal collection} of internally disjoint augmenting  $s$-$t$ paths of shortest possible length in the residual flow network, using a combination of breadth-first search (BFS) and depth-first search (DFS). This modification improves the runtime to $O(m \sqrt{n})$. For the ensuing several decades, the HK algorithm remained the fastest algorithm for bipartite matching except for the case of very dense graphs. For dense graphs, a subsequent work based on fast matrix multiplication was shown to give an $O(n^{\omega})$ time algorithm for bipartite matching, beating the HK algorithm in the dense regime~\cite{IM81, MS04}. 

After many decades of no further progress on bipartite matching, starting in the 2000s, a new paradigm started to emerge for solving combinatorial problems like bipartite matching and maximum $s$-$t$ flows, using continuous techniques. A breakthrough result of Spielman and Teng~\cite{ST04}, achieving a near-linear time algorithm for solving Laplacian systems, paved the way for this new method, called the Laplacian paradigm. As a first illustration of this paradigm, Daitch and Spielman~\cite{DS08} showed that the task of computing maximum $s$-$t$ flow in a directed graph can be reduced to solving $\otilde(m^{1/2})$ Laplacian systems, thus obtaining an $\otilde(m^{3/2})$ time algorithm for directed max flow.
Later, Madry~\cite{Madry13}  showed that maximum flow in directed graphs, and hence bipartite matching, can be computed in $\tilde{O}(m^{10/7})$ time using this paradigm. Exactly 40 years after the discovery of the HK algorithm, this became the first algorithm to substantially outperform it at least on sparse graphs. The following decade saw a sequence of remarkable improvements~\cite{Madry16, LS19, CMSV17, LS20_stoc, AMV20} with another big milestone arriving when van den Brand {\em et al.}~\cite{BrandLNPSS0W20} gave an $\tilde{O}(m + n^{1.5})$ time algorithm for bipartite matching. Finally, in another major breakthrough, Chen {\em et. al.}~\cite{ChenKLPGS22} gave an  $m^{1+o(1)}$-time algorithm for solving maximum flow in directed graphs, thereby obtaining an almost linear time algorithm for bipartite matching in all edge-density regimes.
At a high level, these developments are all based on formulating a directed flow problem as a linear program, solving the resulting linear program by efficiently utilizing interior-point methods (IPM), whereby each iteration of IPM reduces the directed graph problem to either solving a Laplacian system, or another efficiently solvable problem on undirected graphs (e.g., min ratio cycle in~\cite{ChenKLPGS22}). This general approach is then further combined with dynamic graph data structures to avoid re-computation from scratch in each successive IPM iteration, resulting in highly efficient new algorithms for problems like bipartite matching and maximum $s$-$t$ flow. 

This history of algorithmic developments for bipartite matching and related problems raises a natural question: are continuous techniques essential to obtaining fast algorithms for \MBM? On one hand, the use of continuous techniques based on IPMs have led to an almost linear time algorithm for bipartite matching, strongly outperforming the HK algorithm in all edge density regimes. On the other hand, these improvements come at the expense of the simplicity and the transparent nature of the augmenting path based combinatorial algorithms. 
In fact, this question can be asked more broadly for many other graph problems, where significant recent advances have come by strongly utilizing continuous techniques.

We make progress on the above-mentioned question for bipartite matching by presenting a new, purely combinatorial, augmenting path based algorithm, which on moderately dense graphs outperforms the Hopcroft-Karp algorithm, and on dense graphs outperforms the fast matrix multiplication based algorithms, proving the following theorem. 

\begin{theorem}
\label{thm:main}
There is a deterministic combinatorial algorithm for \MBM that, on a graph $G$ with $n$ vertices and $m$ edges, has running time $\otilde(m^{1/3}n^{5/3})$.
\end{theorem}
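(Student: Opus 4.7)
The plan is to return to the classical augmenting-path paradigm: starting from an empty matching $M$, repeatedly augment $M$ along a directed $s$-$t$ path in the residual flow network $G_M$ until no such path remains. To beat the $O(m\sqrt n)$ Hopcroft-Karp bound, the key is to avoid recomputing the shortest augmenting path from scratch each time, and instead to maintain the residual graph together with a shortest-path oracle decrementally.

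To obtain this decremental structure, I would use the standard \MWU framework, but apply it with a twist tailored to bipartite matching. Each edge is assigned an initial length of $1$; after each augmentation, the lengths of the edges on the augmenting path are multiplicatively increased, and any edge whose accumulated length exceeds a fixed cap $\ell^\star$ is effectively removed from further consideration (since it can no longer appear on a short path). The sequence of queries then becomes a decremental directed \SSSP instance on the residual network. The crucial observation is that this particular \SSSP instance is very special: the underlying graph is essentially four-layered ($s$ -- left vertices -- right vertices -- $t$) with unit capacities, each augmentation modifies only the edges on a single short path, and the relevant shortest-path distances grow monotonically, as in Hopcroft-Karp. These properties are far more restrictive than the general decremental directed \SSSP setting against an adaptive adversary, and I would exploit them to build a deterministic data structure based on a hierarchy of neighborhood covers, refreshed locally after each augmentation, with total update time substantially below the state of the art for general decremental \SSSP.

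I would then organize the algorithm in two regimes separated by a length threshold $L$. In the short-path regime, in which every augmenting path has length at most $L$, the decremental \SSSP data structure above supplies each augmenting path in amortized sublinear time. Once every augmenting path exceeds length $L$, a standard Hopcroft-Karp-style vertex-disjoint-augmenting-paths argument shows that at most $O(n/L)$ further augmentations remain; each of them can be computed by a single BFS in $O(m)$ time, contributing $O(mn/L)$ total work. Choosing $L$ to balance the cost of the short-path phase against the $O(mn/L)$ cost of the long-path phase yields the claimed $\otilde(m^{1/3}n^{5/3})$ running time. The reduction from vertex-capacitated $s$-$t$ flow with uniform capacities to \MBM is standard and gives the second stated bound for free.

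The main obstacle will be the design and amortized analysis of the special decremental \SSSP data structure. Even though the sequence of edge-length updates is effectively chosen adversarially, since it depends on the oracle's own answers, the four-layer unit-capacity structure together with the monotone growth of path lengths must be used to bound the total recourse of the neighborhood covers across the entire execution. A secondary challenge is maintaining determinism throughout: the neighborhood-cover maintenance, the short-path queries, and the \MWU weight updates must all be implemented without random sampling, so that the resulting \MBM algorithm is deterministic as claimed.
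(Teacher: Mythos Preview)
Your high-level outline---reduce to decremental directed \SSSP via \MWU, then balance a ``short-path'' phase against a trivial endgame---matches the paper's framework, and your length threshold $L$ is essentially equivalent to the paper's threshold on the residual deficit $\Delta$ (via the Hopcroft--Karp inequality $\Delta\le n/L$). But the proposal has two concrete gaps that prevent it from going through as written.

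First, the structural claim is wrong. The residual network with respect to a nonempty matching is \emph{not} four-layered: each matched edge $(u,v)$ contributes a back-edge $v\to u$ from $R$ to $L$, so augmenting paths alternate and can have length up to $\Theta(n)$. The special structure the paper actually exploits is much weaker---every vertex of $L$ has in-degree at most $1$ and every vertex of $R$ has out-degree at most $1$ in $H\setminus\{s,t\}$---and it is precisely this degree constraint, not any layering, that is used in the \SSSP algorithm. Relatedly, you seem to augment the matching after \emph{each} \MWU round; but augmenting flips edge directions in the residual graph, so the instance is no longer decremental. The paper avoids this by freezing the matching $M$, running the full \MWU loop on the \emph{fixed} residual $H$ (which is then a genuine decremental instance), extracting $\Omega(\Delta/\poly\log n)$ paths with $O(\log n)$ congestion, rounding them to an integral set of disjoint paths, and only then augmenting---so there are just $O(\poly\log n)$ outer iterations.

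Second, the decremental data structure is the whole difficulty, and ``hierarchy of neighborhood covers'' does not address it: neighborhood covers are an undirected tool and there is no known way to make them yield subcubic decremental \SSSP in directed graphs against an adaptive adversary. The paper's route is quite different. It maintains an expander-like decomposition of $H\setminus\{s,t\}$ (the \maintaincluster subroutine), uses it to build an Approximate Topological Order, and runs the \cite{AlmostDAG2} almost-DAG \SSSP on the contracted graph. The running-time improvement over \cite{SCC} comes from a relaxation you do not mention: the oracle is allowed to return \textsc{fail} as soon as the residual graph no longer contains $\Omega(\Delta/\poly\log n)$ edge-disjoint $s$-$t$ paths of length at most $\Lambda$. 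This ``many short disjoint paths'' guarantee is what bounds the total span of right-to-left edges per path and drives the $\otilde(n^{2.5}/\sqrt{\Delta})$ bound for each outer iteration; without it, one is stuck at the $n^{8/3+o(1)}$ barrier of general adaptive decremental directed \SSSP.
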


We note that for moderately dense graphs with $m = \omega(n^{7/4})$, our algorithm  outperforms the 
Hopcroft-Karp algorithm~\cite{HK73}. Furthermore, since the best current upper bound on the matrix multiplication exponent $\omega$ is roughly $2.371866$~\cite{DWZ22}, on dense graphs, our algorithm also considerably outperforms the matrix multiplication based approach for bipartite matching. 

Using a standard reduction from vertex-capacitated flow in directed graphs to \MBM (see Theorem 16.12 in~\cite{Sch03}, for instance), we also obtain an $\otilde(m^{1/3}n^{5/3})$ time deterministic algorithm for maximum vertex-capacitated flow when all vertex capacities are identical. 

\begin{corollary}
\label{cor:main}
There is a deterministic combinatorial algorithm for directed maximum $s$-$t$ flow problem with uniform vertex capacities that runs in $\otilde(m^{1/3}n^{5/3})$ time on a graph $G$ with $n$ vertices and $m$ edges.
\end{corollary}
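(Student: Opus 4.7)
The plan is to deduce the corollary from Theorem~\ref{thm:main} via the standard reduction from directed maximum $s$-$t$ flow with uniform vertex capacities to \MBM referenced in Theorem~16.12 of~\cite{Sch03}.

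The key property that lets the uniform-capacity case collapse to a single matching computation is that, if every non-terminal vertex has the common capacity $c$, then by vertex-capacitated max-flow--min-cut duality combined with Menger's theorem, the optimum flow value equals exactly $c$ times the maximum number $k$ of internally vertex-disjoint $s$-$t$ paths in $G$: sending $c$ units of flow along each of $k$ vertex-disjoint paths is feasible and witnesses optimum $\ge ck$, while any minimum vertex cut of size $k$ has total capacity $ck$ and bounds the optimum from above. Hence it suffices to compute the maximum number of internally vertex-disjoint $s$-$t$ paths in $G$ and multiply by $c$ at the end.

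I would then invoke the textbook reduction from the latter quantity to \MBM. This reduction produces, from $G$, a bipartite graph $H$ on $O(n)$ vertices and $O(m)$ edges (built via the standard vertex-splitting construction with appropriate ``identity'' edges) whose maximum matching size determines, and can be decoded in $O(n+m)$ time into, a maximum collection of vertex-disjoint $s$-$t$ paths in $G$. The reduction itself runs in $O(n+m)$ time and preserves the vertex and edge counts up to constant factors.

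Finally, applying Theorem~\ref{thm:main} to $H$ computes the required maximum matching in deterministic combinatorial time $\otilde(|E(H)|^{1/3}\,|V(H)|^{5/3}) = \otilde(m^{1/3} n^{5/3})$, which dominates the reduction and decoding costs; multiplying the resulting path count by $c$ yields the optimum flow value, and the path collection lifts immediately to an actual flow on $G$. No substantive technical obstacle is anticipated; the proof reduces to verifying that (i) the reduction to matching only inflates the problem size by a constant factor, so that the runtime bound of Theorem~\ref{thm:main} transfers unchanged, and (ii) the uniformity of the vertex capacities genuinely permits a single matching call rather than a $\Theta(c)$-fold blow-up of the graph.
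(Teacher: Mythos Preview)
Your proposal is correct and follows essentially the same approach as the paper: the paper does not give a separate proof of the corollary but simply invokes the standard reduction from vertex-capacitated $s$-$t$ flow to \MBM (Theorem~16.12 in~\cite{Sch03}) and applies Theorem~\ref{thm:main}. Your write-up supplies the additional justification---via Menger's theorem and the scaling argument---for why uniform capacities reduce to a single matching instance without a $\Theta(c)$ blow-up, which the paper leaves implicit in the citation.
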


Similar to the classical algorithms for \MBM, our approach for proving Theorem~\ref{thm:main} is based on iteratively augmenting a current matching using augmenting paths in the residual flow network. We do so by using the multiplicative weights update (\MWU) framework, that effectively reduces the underlying flow problem to decremental single-source shortest paths (\SSSP) in directed graphs, a connection first observed by Madry~\cite{madry2010faster}. Our main observation is that a slight modification of this reduction results in a {\em special case} of \SSSP that appears significantly easier than general decremental directed \SSSP. As typical for such reductions, we need the algorithm for decremental directed \SSSP to work against an \emph{adaptive adversary}; in other words, the sequence of edge deletions from the graph may depend on past behavior of the algorithm, including its responses to queries.

While there is an extensive body of work on the decremental \SSSP problem in undirected graphs, with the best current algorithms even against an adaptive adversary achieving essentially optimal parameters (see \cite{bernstein2022deterministic}), our understanding of the same problem in directed graphs is still far from complete. For simplicity, in this discussion we assume that all edge lengths are integers that are bounded by $\poly(n)$. For the special case where the input graph $G$ is a {\em directed acyclic graph} (DAG), \cite{AlmostDAG2}, building on the results of \cite{Bernstein,gutenberg2020decremental}, provided a deterministic $(1+\eps)$-approximation algorithm with total update time $\otilde(n^2/\eps)$. They also introduced the Approximate Topological Order (\ATO) framework, whose purpose is to reduce the decremental \SSSP problem in general directed graphs to that in graphs that are ``close'' to DAG's\footnote{According to \cite{AlmostDAG2}, while the \ATO technique was first explicitly defined in  \cite{AlmostDAG2}, the work of \cite{gutenberg2020decremental} can be viewed as (implicitly) using this technique.}. Using this approach, they provided a $(1+\eps)$-approximation algorithm for decremental \SSSP in directed graphs with total update time $\otilde(n^2)$, but only for the \emph{oblivious adversary} setting, where the sequence of edge deletions is fixed in advance and may not depend on algorithm's past behavior. To the best of our knowledge, the best current algorithm for directed decremental \SSSP in the adaptive-adversary setting, due to Bernstein~{\em et al.}~\cite{SCC}, obtains a $(1+\eps)$-approximation with total update time  $n^{8/3+o(1)}$. This total update time however is prohibitively high for our purposes.

% give a {\em deterministic} decremental directed \SSSP algorithm with a total update time of $n^{8/3+o(1)}$ which works in general directed graphs but is not fast enough for our goal of outperforming previous best combinatorial algorithms for \MBM. We note here that in a related work, Bernstein~{\em et al.}~\cite{AlmostDAG2} give a {\em randomized} decremental \SSSP algorithm with a significantly better total update time of $\otilde(n^2)$. However, this algorithm only works against oblivious adversaries which is not sufficient for use with \MWU framework since the updates to the underlying graph are generated in response to the algorithms's output. 
Our main technical contribution is a deterministic algorithm for directed decremental \SSSP for the special family of instances that arise in the reduction from \MBM. Our work builds on the algorithm of \cite{SCC}, and exploits the special properties of the instances that we obtain in order to both significantly improve the running time, and simplify the algorithm itself and its analysis.

We believe that the general framework for obtaining faster combinatorial algorithms for bipartite matching presented in this paper is one of our main conceptual contributions, and we hope that it will lead for faster combinatorial algorithms for this and other related problems.

\noindent
{\bf Follow-up Work.} A very recent follow-up work \cite{SKnew} provides a randomized combinatorial $n^{2+o(1)}$-time algorithm for \MBM. This new result relies on the general approach presented here as a starting point. %We show that there is a randomized combinatorial algorithm for \MBM that runs in $n^{2+o(1)}$ time. 
On dense graphs, this new algorithm essentially matches the performance of the fastest known algorithms for \MBM.

We now provide a high-level overview of our algorithm.

\subsection{Our Techniques}

It is well known that the \MBM problem in a bipartite graph $G$ can be equivalently cast as the problem of computing an integral maximum $s$-$t$ flow in a directed flow network $G'$ with unit edge capacities. 
We can view any given matching $M$ in $G$ as defining an $s$-$t$ flow $f$ in $G'$ of value is $|M|$. We let $H=G'_f$ be the corresponding residual flow network, that we refer to as the \emph{residual flow network with respect to matching $M$}. The residual network $H$ contains an integral $s$-$t$ flow of value $\Delta=\opt-|M|$, where $\opt$ is the value of the maximum bipartite matching. Moreover, any integral $s$-$t$ flow $f'$ in $H$ that respects edge capacities, can be transformed into a collection $\qset$ of $\val(f')$ internally disjoint $s$-$t$ paths (augmenting paths), which can then in turn  be used in order to obtain a new matching $M'$ in $G$, of cardinality $|M|+\val(f')$. The task of augmenting $M$ to a larger matching is then equivalent to recovering a large flow in $H$.

One standard approach for obtaining fast algorithms for various flow problems, due to \cite{GK98, Fleischer00,AAP93}, is via the Multiplicative Weight Update (\MWU) method. It was further observed by Madry \cite{madry2010faster}  that this approach can be viewed as reducing a given flow problem to a variant of decremental \SSSP or \APSP.  While strong lower bounds are known for exact algorithms for decremental \SSSP and \APSP (see, e.g. \cite{DorHZ00,RodittyZ11,HenzingerKNS15,abboud2022hardness,DBLP:conf/stoc/AbboudBF23}), for our purposes, it is sufficient to only approximately solve the resulting decremental \SSSP problem, and recover an approximate max flow in $H$. Specifically, we will iteratively augment the current matching, where the goal of a single iteration is to augment $M$ to a new matching $M'$ that satisfies $|M'|\geq |M|+\Omega(\Delta/\poly\log n)$, where $\Delta=\opt-|M|$. This will ensure that the number of iterations is bounded by $O(\poly\log n)$. We now focus on the implementation of a single iteration. 

In a single iteration, we use the \MWU framework to find an $s$-$t$ flow in the residual flow network $H$ of value $\Omega(\Delta/\poly\log n)$ with $O(\log n)$ congestion. By scaling down, we may equivalently view this as giving a \emph{fractional} $s$-$t$  flow in the residual flow network $H$ of value $\Omega(\Delta/\poly\log n)$ with no congestion. 
The support of this flow is $\otilde(n)$ edges, and we can now use the blocking flow approach of the Hopcroft-Karp~\cite{HK73} algorithm to recover an \emph{integral} $s$-$t$ flow of value $\Omega(\Delta/\poly\log n)$ in deterministic $\otilde(n^{3/2})$ time.
We note that it is possible to round the fractional flow into an integral flow in $\otilde(n)$ time using the random walk technique of~\cite{GKK10, LRS13}. However, the improved runtime for this step does not improve our overall runtime and comes at the expense of making the algorithm randomized, so we will rely on the deterministic rounding scheme. The remaining challenge is in the efficient implementation of the \MWU framework, which in turn reduces to solving a special case of decremental \SSSP in directed graphs.

At a high level, we follow the approach of \cite{SCC}, who designed a $(1+\eps)$-approximation algorithm for directed decremental \SSSP, with total update time $O\left (\frac{n^{8/3+o(1)}}{\eps}\right )$ (assuming that all edge lengths are poly-bounded). This running time is prohibitively high for our purposes, so we cannot use their algorithm directly. Instead, we observe that we only need to solve a restricted special case of directed decremental \SSSP; by adapting their algorithm to this setting, we obtain the desired faster running time. Unlike the general \SSSP problem, the special case that we need to solve has some useful properties. First, all shortest-path queries are asked between a pre-specified pair $(s,t)$ of vertices. 
Additionally, we slightly modify the \MWU method, so that the task of responding to shortest-path queries is even easier. Specifically, we can assume that we are given two parameters $d$ and $\Delta$. In response to a shortest-path query, the algorithm needs to either return an $s$-$t$ path of length at most $d$, or return ``FAIL''. The latter can only be done if the largest number of internally disjoint $s$-$t$ paths of length at most $d/8$ in the current graph is bounded by $O(\Delta/\poly\log n)$. In other words, our algorithm only needs to support approximate shortest-path queries as long as there is a large number of short and internally disjoint $s$-$t$ paths in the current graph $H$. This property is crucially used by our algorithm to obtain a faster running time. Additionally, due to the specific setup provided by the \MWU method, the only edges that are deleted from $H$ are those appearing on the paths that the algorithm returns in response to queries. Moreover, unlike the standard setting of the dynamic \SSSP problem, where it is crucial that the time the algorithm takes to respond to each individual query is short, we only need to ensure that the total time the algorithm takes to respond to {\em all queries} is suitably bounded. We exploit all these advantages of the special case of the \SSSP problem that we obtain in order to design an algorithm for it that is significantly faster than the algorithm of \cite{SCC} for general decremental \SSSP.

We now provide additional details of our algorithm for restricted decremental \SSSP. We follow the high-level approach of \cite{SCC}, that consists of two parts. First, they maintain a partition $\xset$ of graph $H\setminus \set{s,t}$ into a collection of expander-like graphs; we abstract the problem of maintaining each such graph, that we call \maintaincluster problem, below. Second, they apply the Approximate Topological Order (\ATO) framework of \cite{AlmostDAG2} (which is in turn based on the works of \cite{gutenberg2020decremental} and \cite{Bernstein}), together with the algorithm of \cite{AlmostDAG2} for decremental \SSSP on ``almost'' DAG's to the graph $\hat H$, that is obtained from $H$ by contracting every almost-expander $X\in \xset$ into a single vertex. We now discuss each of these components in turn, and point out the similarities and the differences between our algorithm and that of \cite{SCC}.

We start by defining the \maintaincluster problem, that, intuitively, is used to manage a {\em single} expander-like graph. We will eventually apply it to every subgraph $X\in \xset$ of $H$. The input to the problem is a subgraph $X\subseteq H\setminus \set{s,t}$, and a parameters $d\geq \poly\log n$ and $\Delta>0$. We will treat all edges of $X$ as unweighted, so the length of each edge is $1$. The algorithm consists of at most $\Delta$ iterations. In each iteration, the algorithm is given a pair $x,y\in V(X)$ of vertices, and it needs to output a simple path $P$ of length at most $d$ connecting $x$ to $y$ in $X$. Some of the edges of $P$ may then be deleted from $H$, and the iteration ends. The algorithm can, at any time, compute a cut $(A,B)$ in $X$, whose sparsity is at most $O\left (\frac{\poly\log n}{d}\right )$. After that, if $|A|\ge |B|$, then vertices of $B$ are deleted from $X$, and otherwise the vertices of $A$ are deleted from $X$. In either case, the algorithm must continue with the new graph $X$. Our goal is to provide an algorithm for the \maintaincluster problem, whose total running time is small. %The running time includes both the total update time and the time required to respond to all queries. 
In particular, unlike a similar problem that was considered in \cite{SCC}, we do not need to ensure that the time that the algorithm takes in order to respond to \emph{each} query is low. Instead, we only need to ensure that the {\em total update time} of the algorithm, together with the {\em total time} spent on responding to queries, is low. The setting that is considered in \cite{SCC} is more general in several other ways: there is no bound on the number of queries, and edges may be deleted from $X$ as part of update sequence\footnote{Like \cite{SCC}, we distinguish between edges that are deleted from $X$ by ``adversary'', as part of the online update sequence, and edges that are deleted by the algorithm itself, when it computes a sparse cut $(A,B)$ in $X$, and deletes all vertices of the smaller side of the cut from $X$.} arbitrarily (in our setting, only edges that appear on the paths that the algorithm returns in response to queries may be deleted). The algorithm of \cite{SCC} for this more general version of the \maintaincluster problem has total update time $\ohat(|E(X)|\cdot d^2)$, and the time to respond to each query is roughly proportional to the number of edges on the path that the algorithm returns. We provide an algorithm for the \maintaincluster problem, whose total running time is bounded by $\otilde(|E(X)|\cdot d+|V(X)|^2)\cdot  \max\set{1,\frac{\Delta\cdot d^2}{|V(X)|}}$, which, for the setting of parameters that we employ becomes $\otilde(|E(X)|\cdot d+|V(X)|^2)$. Our algorithm for the \maintaincluster problem uses a rather standard approach: we embed a large $\Omega(1)$-expander $W$ into the graph $X$, and then use two Even-Shiloach tree (\EST) data structures in $X$, that are rooted at the vertices of $W$; in one of the resulting trees the edges are directed away to the root, and in the other they are directed towards the root. However, due to the specifics of the setting that we have described, we obtain an algorithm that is significantly simpler than that of \cite{SCC}, and other similar algorithms. For example, we use a straightforward (although somewhat slower) algorithm for the cut player in the Cut-Matching game, that is used in order to embed an expander into $X$, and our algorithm for the variant of \APSP in expanders that we obtain is quite elementary; it avoids both the recursion that was used in previous works (see \cite{APSP-old,detbalanced,SCC}), and the expander pruning technique of~\cite{expander-pruning,SCC}. While some parts of our algorithm can be optimized, for example, by using the faster algorithm for the Cut Player due to \cite{SCC}, this would not affect our final running time, and we believe that the simple self-contained proofs provided in this paper are of independent interest.

We now briefly describe the second component of our algorithm -- the use of the Approximate Topological Order (\ATO) framework of  \cite{AlmostDAG2}, that can be viewed as reducing the decremental \SSSP problem on general directed graphs to decremental \SSSP on ``almost DAG's''. This part is almost identical to the algorithms of \cite{AlmostDAG2,SCC}. The main difference is that we save significantly on the running time of the algorithm due to the fact that we only need to support approximate shortest-path queries between $s$ and $t$, as long as graph $H$ contains many short and disjoint $s$-$t$ paths. We start with the decremental graph $H$ that serves as the input to the restricted decremental \SSSP problem, and we maintain a partition $\xset$ of the vertices of $H\setminus\set{s,t}$ into subsets. At the beginning of the algorithm, we let $\xset$ contain a single set $V(H)\setminus\set{s,t}$, and, as the algorithm progresses, the only allowed updates to the collection $\xset$ of vertex subsets is a partition of a given set $X\in \xset$ into two subsets, $A$ and $B$. Intuitively, whenever a set $X$ of vertices is added to $\xset$, we view it as an instance of the \maintaincluster problem on the corresponding graph $H[X]$, with an appropriately chosen distance parameter $d_X$. We then apply the algorithm for the \maintaincluster problem to this instance. Whenever the algorithm produces a cut $(A,B)$ in $H[X]$, we let $Z$ be the smaller-cardinality set of vertices from among $A$, $B$.   Recall that the algorithm for the \maintaincluster problem then deletes the vertices of $Z$ from $X$. We add the set $Z$ as a new vertex set to $\xset$, and initialize the algorithm for the \maintaincluster problem on $H[Z]$. Additionally, we update the set $X$ of vertices in $\xset$, by deleting the vertices of $Z$ from it. For convenience, we will also add to $\xset$ two additional subsets: $S=\set{s}$, $T=\set{t}$, so that $\xset$ becomes a partition of the vertices of $H$ into subsets.

In order to define Approximate Topological Order (\ATO) on graph $H$, with the partition $\xset$ of the vertices of $V(H)\setminus\set{s,t}$ into subsets, we associate, with every set $X\in \xset$, a contiguous collection $I_X$ of $|X|$ integers from $\set{1,\ldots,n}$, that we refer to as an \emph{interval}. We always let $I_S=\set{1}$, $I_T=\set{n}$, and we require that all resulting intervals are mutually disjoint. Notice that the resulting collection $\set{I_X\mid X\in \xset}$ of intervals naturally defines an \emph{ordering} of the sets in $\xset$, that we denote by $\sigma$, and for convenience, we call it a \emph{left-to-right} ordering. Consider now some edge $e=(u,v)$, and let $X,Y\in \xset$ be the sets containing $u$ and $v$, respectively. Assume further that $X\neq Y$. If $X$ appears before $Y$ in the ordering $\sigma$, then we say that $e$ is a \emph{left-to-right} edge, and otherwise, we say that it is a \emph{right-to-left} edge.

Recall that, at the beginning of the algorithm, we let $\xset$ contain three vertex sets: $S=\set{s},T=\set{t}$, and $R=V(H)\setminus\set{s,t}$. We also set $I_S=\set{1}$, $I_T=\set{n}$, and $I_R=\set{2,\ldots,n-1}$. As the algorithm  progresses,
whenever a set $X$ of vertices is split into two sets $A$ and $B$, we partition the interval $I_X$ into two subintervals $I_A$ and $I_B$, of cardinalities $|A|$ and $|B|$, respectively. Recall that we only partition $X$ into sets $A$ and $B$ if $(A,B)$ defines a sparse cut in graph $H[X]$. This means that either $|E_H(A,B)|$, or $|E_H(B,A)|$ is  small relatively to $\min\set{|A|,|B|}$. In the former case, we ensure that interval $I_B$ appears before interval $I_A$, while in the latter case, we ensure that interval $I_A$ appears before interval $I_B$. This placement of intervals ensures that there are relatively few right-to-left edges.

Lastly, we consider the contracted graph $H'=H_{|\xset}$, that is obtained from $H$ by contracting every set $X\in \xset\setminus\set{S,T}$ into a single vertex. As the time progresses, graph $H'$ undergoes edge deletions and vertex-splitting updates. The latter type of updates reflects the partitioning of the sets $X\in \xset$ of vertices into smaller subsets. At all times, the ordering $\sigma$ of the intervals in $\set{I_X\mid X\in \xset}$ that is maintained by the \ATO naturally defines an ordering of the vertices of $H'$, that we denote by $\sigma'$. Notice that, if we could ensure that no right-to-left edges are present, then graph $H'$ would be a DAG, and $\sigma'$ would define a topological ordering of its vertices. We could then use the algorithm of \cite{AlmostDAG2} for decremental \SSSP in DAG's, whose total update time is $\otilde(n^2)$. In \cite{AlmostDAG2} it was further observed that, if we could ensure that the number of right-to-left edges with respect to the \ATO structure is low, and the average length of each such edge (with respect to the ordering $\sigma'$ of the vertices of $H'$) is small, one can apply an algorithm similar to that for \SSSP in DAG's to the contracted graph $H'$. The total update time of the resulting algorithm depends on the number of the right-to-left edges, and on their length with respect to $\sigma'$. Exactly this approach was used in \cite{SCC} in order to obtain an algorithm for decremental \SSSP in general directed graphs. We also employ the same approach, but we use the advantages of our setting, specifically, that all queries are between a pre-specified pair $s,t$ of vertices, and that we only need to support the queries as long as there are many short internally disjoint $s$-$t$ paths, in order to obtain a faster algorithm. Intuitively, since we can assume that the current graph $H$ contains $\tilde \Omega(\Delta)$ short internally disjoint $s$-$t$ paths, and since the right-to-left edges must be distributed among such paths, at least one short $s$-$t$ path must contain a very small number of right-to-left edges. It is this observation that leads to a faster running time of this part of our algorithm.

\subsection{Organization.}
We start with preliminaries in \Cref{sec: prelims}. We then provide a high-level overview of our algorithm in \Cref{sec: high level}, which includes the reduction of the \MBM problem to a special case of decremental \SSSP via a modification of the \MWU framework, and the statement of our main result for this special case of \SSSP. In \Cref{sec: ATO} we provide an overview of the Approximate Topological Order (\ATO) framework of \cite{AlmostDAG2}, and in  \Cref{sec: SSSP in almost DAGS} we discuss their algorithm for \SSSP in DAG-like graphs. In \Cref{sec: expander tools} we develop technical tools for maintaining a decomposition of the input graph $G$ into expander-like subgraphs, with the proof of the main result of this section -- an algorithm for the \maintaincluster problem -- deferred to \Cref{sec: expander algs}. Lastly, in \Cref{sec: SSSP alg} we combine all these tools to obtain the desired algorithm for the special case of decremental directed \SSSP that we consider, completing the algorithm for \MBM.

\section{Preliminaries}
\label{sec: prelims}

In this paper we work with both directed and undirected graphs.
By default graphs are allowed to contain parallel edges but they may not contain loops. Graphs with no parallel edges are explicitly referred to as \emph{simple} graphs.

We use standard graph theoretic notation. Given a graph $G$ (that may be directed or undirected), and two disjoint subsets $A,B$ of its vertices, we denote by $E_G(A)$ the set of all edges of $G$ with both endpoints in $A$, and by $E_G(A,B)$ the set of all edges connecting a vertex of $A$ to a vertex of $B$. 
The degree of a vertex $v$ in $G$, denoted by $\deg_G(v)$, is the number of edges incident to $v$ in $G$; if graph $G$ is directed, this includes both outgoing and incoming edges. Given a subset $A$ of vertices of $G$, its \emph{volume}, denoted by $\vol_G(A)$, is $\sum_{v\in A}\deg_G(v)$. When graph $G$ is clear from context, we may omit the subscript.

\paragraph{Congestion of Paths and Flows.}
Let $G$ be a graph with capacities $c(e)\geq 0$ on edges $e\in E(G)$, and let $\pset$ be a collection of simple paths in $G$. We say that the paths in $\pset$ cause \emph{edge-congestion} $\eta$, if every edge $e\in E(G)$ participates in at most $\eta\cdot c(e)$ paths in $\pset$. When edge capacities are not explicitly given, we assume that they are unit. 
If every edge of $G$ belongs to at most one path in $\pset$, then we say that the paths in $\pset$ are \emph{edge-disjoint}.

Similarly, if we are given a flow value $f(e)\geq 0$ for every edge $e\in E(G)$, we say that flow $f$ causes edge-congestion $\eta$, if for every edge $e\in E(G)$, $f(e)\leq \eta\cdot c(e)$ holds. If $f(e)\leq c(e)$ holds for every edge $e\in E(G)$, we may say that $f$ causes \emph{no edge-congestion}.

Assume now that we are given a graph $G$, a pair $s,t$ of its vertices, and a collection $\pset$ of simple paths connecting $s$ to $t$. We say that the paths in $\pset$ cause \emph{vertex-congestion} $\eta$, if every vertex $v\in V(G)\setminus\set{s,t}$ participates in at most $\eta$ paths. If the paths in $\pset$ cause vertex-congestion $1$, then we say that they are \emph{internally vertex-disjoint}.

\paragraph{Embedding of Graphs.} Let $H,G$ be two graphs with $V(H)\subseteq V(G)$. An \emph{embedding} of $H$ into $G$ is a collection $\pset=\set{P(e)\mid e\in E(H)}$ of paths in $G$, where for every edge $e=(u,v)\in E(H)$, path $P(e)$ connects $u$ to $v$ in $G$. The \emph{congestion} of the embedding is the maximum, over all edges $e'\in E(G)$, of the number of paths in $\pset$ containing $e'$; equivalently, it is the edge-congestion that the set $\pset$ of paths causes in $G$.

\paragraph{Cuts, Sparsity and Expanders for Directed Graphs.}
Let $G$ be a directed graph. A \emph{cut} in $G$ is a an ordered pair $(A,B)$ of subsets of vertices of $G$, such that $A\cap B=\emptyset$, $A\cup B=V(G)$, and $A,B\neq\emptyset$. Note that we do not require that $|A|\leq |B|$ in this definition. The \emph{sparsity} of the cut is:

\[ \Phi_G(A,B)=\frac{|E(A,B)|}{\min{\set{|A|,|B|}}}.\]

We say that a directed graph $G$ is a $\phi$-expander, for a given value $0<\phi<1$, if the sparsity of every cut in $G$ is at least $\phi$. Notice that, if $G$ is a $\phi$-expander, then for every cut $(A,B)$, both $|E(A,B)|\geq \phi\cdot \min\set{|A|,|B|}$ and $|E(B,A)|\geq \phi\cdot \min{\set{|A|,|B|}}$ hold.

We also use the following theorem that provides a fast algorithm for an explicit construction of an expander, and is based on the results of Margulis \cite{Margulis} and Gabber and Galil \cite{GabberG81}. The proof for undirected graphs was shown in \cite{detbalanced}.
\begin{theorem}[Theorem 2.4 in \cite{detbalanced}]
	%[Fast explicit expander construction]
	\label{thm:explicit expander}
	There is a constant $\alpha_0 > 0$ and a deterministic algorithm, that we call \constructexpander, that, given an integer $n>1$, in time $O(n)$ constructs a directed graph $H_n$ with $|V(H_n)|=n$, such that $H_n$ is an $\alpha_0$-expander, and every vertex in $H_n$ has at most $9$ incoming and at most $9$ outgoing edges.
\end{theorem}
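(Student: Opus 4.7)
The plan is to reduce to the undirected case, which is already established in \cite{detbalanced} via the explicit Margulis--Gabber--Galil construction, and then bidirect all edges to obtain a directed expander. The key observation driving the reduction is that replacing each undirected edge by two oppositely-directed edges preserves the count of edges crossing any bipartition (in the directed sense) while keeping both the in-degree and the out-degree of every vertex equal to the original undirected degree, rather than doubling them.

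Concretely, I would first invoke the undirected version of the theorem from \cite{detbalanced} to obtain, in $O(n)$ time, an undirected graph $H'_n$ on $n$ vertices that is an $\alpha'_0$-expander for some universal constant $\alpha'_0 > 0$ and has maximum degree at most $9$. The underlying primitive here is the Margulis--Gabber--Galil construction, which natively yields an $8$-regular undirected expander on $n = m^2$ vertices over $\Z_m \times \Z_m$; the value $9$ rather than $8$ in the degree bound absorbs the small overhead of a padding step that is used to handle values of $n$ that are not perfect squares.

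Next, I would define $H_n$ to be the directed graph with vertex set $V(H'_n)$, obtained by replacing each undirected edge $\{u,v\} \in E(H'_n)$ by the two directed edges $(u,v)$ and $(v,u)$. By construction, every vertex $v \in V(H_n)$ has in-degree and out-degree each equal to $\deg_{H'_n}(v) \leq 9$, as required. To verify the directed expansion, fix an arbitrary cut $(A,B)$ in $H_n$. Each undirected edge of $H'_n$ with one endpoint in $A$ and the other in $B$ contributes exactly one edge to $E_{H_n}(A,B)$ (and one to $E_{H_n}(B,A)$), so $|E_{H_n}(A,B)| = |E_{H'_n}(A,B)| \geq \alpha'_0 \cdot \min\{|A|,|B|\}$, where the inequality uses the undirected expansion of $H'_n$. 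Setting $\alpha_0 = \alpha'_0$ then gives $\Phi_{H_n}(A,B) \geq \alpha_0$ for every cut, which is exactly the definition of $H_n$ being an $\alpha_0$-expander.

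The total running time is $O(n)$ for constructing $H'_n$ plus $O(|E(H'_n)|) = O(n)$ for the bidirection step, yielding the overall $O(n)$ bound. The only substantive ingredient is the black-boxed undirected construction of \cite{detbalanced}, and the rest of the argument is entirely formal. There is no real obstacle beyond carefully verifying the identity $|E_{H_n}(A,B)| = |E_{H'_n}(A,B)|$ for every cut, which ensures that sparsity is preserved under bidirection; by contrast, an Eulerian orientation scheme (which would preserve degree $8$) could in principle create severely unbalanced one-sided cuts and would require a much more delicate analysis, so the symmetric bidirection approach is what makes this reduction go through cleanly at the cost of increasing the in/out-degree from the undirected degree bound.
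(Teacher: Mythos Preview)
Your proposal is correct and matches the paper's approach exactly: the paper simply remarks that the undirected result from \cite{detbalanced} (with maximum degree at most $9$) can be adapted by replacing every undirected edge with a pair of bi-directed edges, noting that if $G$ is an undirected $\phi$-expander then the bidirected graph $G'$ is a directed $\phi$-expander. You have filled in the details of this one-line justification faithfully.
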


The above theorem was proved in \cite{detbalanced} for undirected expanders, with maximum vertex degree in the resulting graph bounded by $9$, but it is easy to adapt it to directed expanders. This is since, if a graph $G$ is an undirected $\phi$-expander, and a directed graph $G'$ is obtained from $G$ by replacing every edge with a pair of bi-directed edges, then $G'$ is a directed $\phi$-expander.

The following simple observation follows from the standard ball-growing technique. We provide its proof for completeness in Section \ref{sec: appx ball growing} of Appendix.

\begin{observation}\label{obs: ball growing}
	There is a deterministic algorithm, that, given  a directed $n$-vertex graph $G=(V,E)$ with unit edge lengths and maximum vertex degree at most $\dmax$, and two vertices $x,y\in V$ with $\dist_G(x,y)\geq d$ for some parameter $d\geq 64\dmax\log n$, computes a cut $(A,B)$ of sparsity at most $\phi=\frac{32\dmax\log n}{d}$ in $G$. The running time of the algorithm is $O(\dmax\cdot\min\set{|A|,|B|})$.
\end{observation}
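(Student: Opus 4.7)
The plan is to apply the classical ball-growing technique simultaneously to a forward BFS from $x$ and to a reverse BFS from $y$. Let $N_i := \set{v \in V : \dist_G(x,v) \leq i}$ and $M_j := \set{v \in V : \dist_G(v,y) \leq j}$. The algorithm grows both trees vertex-by-vertex using two priority queues ordered by distance, and at each step it processes one more frontier vertex on whichever side currently has the smaller number of processed vertices. As vertices are added, it incrementally maintains the outgoing cut size $|E_G(N_i, V\setminus N_i)|$ on the forward side and the incoming cut size $|E_G(V\setminus M_j, M_j)|$ on the reverse side; each time a BFS layer is completed on either side, it checks whether the corresponding cut has sparsity at most $\phi$, and if so outputs it and halts.

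For correctness, I would show that, within depth $\floor{d/4}$, the forward BFS either outputs a sparse cut or reaches $|N_i| > n/2$ (and symmetrically for the reverse BFS). Suppose for contradiction that for every $i \leq \floor{d/4}$ we have $|N_i| \leq n/2$ and $|E_G(N_i, V\setminus N_i)| > \phi|N_i|$, so no sparse cut is output. Since every vertex has total degree at most $\dmax$, each vertex in the next layer absorbs at most $\dmax$ edges leaving $N_i$, giving $|N_{i+1}| \geq |N_i|(1+\phi/\dmax)$. Iterating from $|N_0|=1$, together with $\phi = 32\dmax\log n/d$, $d \geq 64\dmax\log n$, and the elementary bound $\ln(1+x)\geq x/2$ for $x \in [0,1]$, yields $|N_{\floor{d/4}}| \geq \exp\!\bigl(d\phi/(8\dmax)\bigr) = \exp(4\log n) > n/2$, a contradiction. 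If neither side produces a sparse cut by depth $\floor{d/4}$, then both $|N_{\floor{d/4}}|$ and $|M_{\floor{d/4}}|$ exceed $n/2$; but the two sets are disjoint, because any common vertex would produce an $x$-to-$y$ walk of length at most $d/2 < d$, contradicting $\dist_G(x,y)\geq d$. Hence some iteration must output a cut of sparsity at most $\phi$.

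For the running time, because the algorithm always expands the smaller tree, the two processed-vertex counts differ by at most one throughout. When a sparse cut $(A,B)$ is output from, say, the forward side at layer $i^*$, the forward tree has processed exactly $|N_{i^*}| = \min\set{|A|,|B|}$ vertices, and the reverse tree has processed at most one more. Each vertex $v$ is handled in $O(\deg_G(v)) = O(\dmax)$ time, during which the relevant incremental cut-size counter is updated within the same budget, for a total running time of $O(\dmax \cdot \min\set{|A|,|B|})$, as required.

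The only step requiring any care is the geometric-growth estimate in the correctness argument, where the specific constants in $\phi$ and the distance hypothesis $d \geq 64\dmax\log n$ must be combined with a careful use of $\ln(1+x) \geq x/2$; the remaining pieces (incremental cut-size maintenance, disjointness of the two balls, smaller-first scheduling) are routine.
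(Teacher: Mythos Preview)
Your proposal is correct and follows essentially the same approach as the paper: bidirectional ball-growing from $x$ (forward) and $y$ (reverse) run in parallel, with a geometric-growth argument showing that one side must find a $\phi$-sparse layer before its ball exceeds $n/2$. One small point worth making explicit is why the side that actually outputs has $|N_{i^*}|\le n/2$ (so that $|N_{i^*}|=\min\{|A|,|B|\}$ as you assert); this follows because at least one side is guaranteed a sparse layer with ball $\le n/2$, and the smaller-first scheduling forces the first side to terminate to have processed no more vertices than that --- the paper's proof is equally terse at this step.
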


We will use the following simple observation a number of times. Intuitively, the observation allows us to convert a sequence of small sparse cuts into a single balanced sparse cut.

\begin{observation}\label{obs: from many sparse to one balanced}
	There is a deterministic algorithm, whose input consists of a directed $n$-vertex graph $G$, parameters $0<\alpha<1$ and $0<\phi<1$, and a sequence $\xset=(X_1,X_2,\ldots,X_k)$ of non-empty subsets of vertices of $G$, such that the following hold:
	\begin{itemize}
		\item Every vertex of $V(G)$ lies in exactly one set in $\set{X_1,\ldots,X_k}$; 
		\item $\max_i\set{|X_i|}= \alpha n$; and
		
		\item for all $1\leq i<k$, if we denote by $X'_i=\bigcup_{j=i+1}^kX_j$, then either $|E(X_i,X'_i)|\leq \phi\cdot |X_i|$, or $|E(X'_i,X_i)|\leq \phi\cdot |X_i|$.
	\end{itemize}
	The algorithm computes a cut $(A,B)$ in $G$ with $|A|,|B|\geq \min \set{\frac{1-\alpha}{2}\cdot n, \frac n 4}$, and $|E_G(A,B)|\leq \phi\cdot \sum_{i=1}^{k-1}|X_i|$. The running time of the algorithm is $O(|E(G)|)$.  
\end{observation}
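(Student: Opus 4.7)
The strategy is to reduce the problem to a one-dimensional partitioning argument based on the \emph{type} of each cluster. First, for each index $i\in\{1,\ldots,k-1\}$, I would classify it as type L if $|E_G(X_i, X'_i)|\le\phi|X_i|$, and as type R otherwise; since the hypothesis guarantees at least one of the two directional bounds holds, this is well-defined. Labeling the type is accomplished in $O(|E(G)|)$ time by examining each edge once and incrementing the appropriate counter at its smaller-indexed endpoint's cluster. Let $L,R\subseteq\{1,\ldots,k-1\}$ be the corresponding index sets and let $n_L=\sum_{i\in L}|X_i|$, $n_R=\sum_{i\in R}|X_i|$.

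Since $|X_k|\le\alpha n$, we have $n_L+n_R=n-|X_k|\ge(1-\alpha)n\ge 2T$ where $T=\min\bigl(\tfrac{1-\alpha}{2}n,\tfrac{n}{4}\bigr)$, so $\max(n_L,n_R)\ge T$. Assume without loss of generality that $n_L\ge T$ (the $n_R\ge T$ case is fully symmetric). I would then scan indices $i=1,2,\ldots$ in order, greedily accumulating the mass of L-clusters, and set $i^{*}$ to be the smallest index at which the cumulative L-mass first reaches $T$. Define the cut $A=\bigcup_{i\in L,\, i\le i^{*}}X_i$ and $B=V(G)\setminus A$.

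To verify balance, note that the last L-cluster added has size at most $\alpha n$, so $|A|\in[T,T+\alpha n)$, and hence $|B|=n-|A|>n-T-\alpha n\ge T$, where the last inequality uses $T\le(1-\alpha)n/2$. For the edge count, consider any $(u,v)\in E_G(A,B)$ with $u\in X_j$, $v\in X_l$; by construction $j\in L$ and $j\le i^{*}$. If $j<l$ then $v\in X_l\subseteq X'_j$, so this edge contributes to $|E_G(X_j, X'_j)|\le\phi|X_j|$. If $j>l$ then $X_l$ must lie in $B$, which forces $l\in R$ (since all L-indices $\le i^{*}$ are in $A$, and $l<j\le i^{*}<k$); consequently $u\in X_j\subseteq X'_l$ and the edge contributes to $|E_G(X'_l,X_l)|\le\phi|X_l|$, using that $l$ is type R. Each edge is charged to exactly one index, namely the smaller of the two, so summing the two charge types gives
\[
|E_G(A,B)|\le\sum_{j\in L,\,j\le i^{*}}\phi|X_j|+\sum_{l\in R}\phi|X_l|\le\phi\sum_{i=1}^{k-1}|X_i|.
\]
The total running time is $O(|E(G)|)$ from the initial type-classification pass plus $O(n)$ for the greedy scan.

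The main obstacle I anticipate is reconciling the cut bound with the balance requirement: the ``clean'' partition that puts all L-clusters in $A$ and all R-clusters in $B$ gives the right cut bound but can be arbitrarily unbalanced when one of $n_L, n_R$ is small. The greedy L-prefix construction resolves this because stopping at $i^{*}$ causes at most an $\alpha n$ overshoot in $|A|$, which is exactly what the hypothesis $\max_i|X_i|\le\alpha n$ buys us, while the ``L-forward vs.\ R-backward'' charging scheme still applies verbatim. The only subtlety is that L-clusters at indices $>i^{*}$ end up in $B$, but any edge leaving $A$ has its source at some L-index $j\le i^{*}$, so such backward-into-L edges from $X_l\in B$ never arise in $E_G(A,B)$; the remaining backward edges into $B$ all target R-clusters, where the R-backward bound is available.
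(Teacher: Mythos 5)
Your proof is correct, and it arrives at essentially the same cut as the paper's, but by a more direct route. The paper first builds an explicit ordering of the clusters (R-type sets in increasing index order, then $X_k$, then L-type sets in decreasing index order), maintains the invariant that every edge not placed in an exceptional set $E'$ goes left-to-right, and only then extracts a balanced prefix/suffix; your $A$ is exactly such a suffix, but you never construct the ordering. Instead you charge each cut edge locally to the smaller of its two cluster indices and observe that the charge lands either on an L-index $j\le i^*$ (via $E(X_j,X'_j)$) or on an R-index $l$ (via $E(X'_l,X_l)$) -- this replaces the paper's global left-to-right invariant and is arguably cleaner. Your balance argument also differs: the paper case-splits on $\alpha\geq 1/4$ versus $\alpha<1/4$, whereas your single threshold $T=\min\set{\frac{1-\alpha}{2}n,\frac n4}$ with the $\alpha n$ overshoot bound handles both at once. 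The one point you should make explicit is the orientation of the output in the symmetric case $n_R\geq T$: since the observation asks to bound the \emph{directed} quantity $|E_G(A,B)|$, the accumulated R-prefix must be placed as the sink side $B$ (equivalently, run your argument on the edge-reversed graph and swap the two sides of the resulting cut); once that is stated, the case really is symmetric and the proof is complete.
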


\begin{proof}
	We will gradually construct an ordering $\oset$ of the sets in $\xset$, that will allow us to compute the cut. We compute the ordering in $k-1$ iterations, where for all $1\leq i< k$, in iteration $i$, we compute an ordering $\oset_i$ of the sets in $\set{X_1,\ldots,X_i,X'_i}$; notice that these sets of vertices partition $V(G)$. We will also gradually construct a set $E'$ of edges, will ensure that, at the end of every iteration $i$, if an edge $e\in E(G)\setminus E'$ connects a vertex in set $Z$ to a vertex in set $Z'$, with  $Z,Z'\in \set{X_1,\ldots,X_i,X'_i}$  and $Z\neq Z'$, then $Z$ appears before $Z'$ in the ordering $\oset_i$ (in other words, the edge is directed from left to right with respect to the ordering). We will also ensure that, at the end of each iteration $i$, $|E'|\leq \phi\cdot \sum_{j=1}^i|X_j|$.
	
	We start by constructing the first ordering $\oset_1$ of the sets $X_1,X'_1$. If  $|E(X_1,X'_1)|\leq \phi\cdot |X_1|$, then we place $X_1'$ before $X_1$ in the ordering, and we set $E'=E(X_1,X'_1)$. Otherwise, we place $X_1$ before $X_1'$ in the ordering (in this case, $|E(X'_1,X_1)|\leq \phi\cdot |X_1|$ must hold), and we set $E'=E(X_1',X_1)$. Notice that, after the first iteration, $|E'|\leq \phi\cdot |X_1|$ holds, and, if $e$ is an edge of $E(G)\setminus E'$ whose endpoints lie in different sets of $\set{X_1,X_1'}$, then $e$ is directed from left to right with respect to the ordering $\oset_1$.
	
	We now assume that the $i$th iteration successfully computed an ordering $\oset_i$ of sets $X_1,\ldots,X_i,X'_i$, together with a set $E'$ of edges, such that $|E'|\leq \phi\cdot \sum_{j=1}^i|X_j|$, and moreover, if an edge $e\in E(G)\setminus E'$ connects a vertex in set $Z$ to a vertex in set $Z'$, with  $Z,Z'\in \set{X_1,\ldots,X_i,X'_i}$  and $Z\neq Z'$, then $Z$ appears before $Z'$ in the ordering. We now consider the partition $(X_{i+1},X'_{i+1})$ of set $X'_{i}$. If $|E(X_{i+1},X'_{i+1})|\leq \phi\cdot |X_{i+1}|$,
	then we obtain ordering $\oset_{i+1}$ of $X_1,\ldots,X_{i+1},X'_{i+1}$ by replacing $X'_i$ in the ordering $\oset_i$ with the sets $X'_{i+1},X_{i+1}$ in this order. We then add the edges of $E(X_{i+1},X'_{i+1})$ to $E'$. Otherwise, $|E(X'_{i+1},X_{i+1})|\leq \phi\cdot |X_{i+1}|$ must hold, and we obtain ordering $\oset_{i+1}$ of $X_1,\ldots,X_{i+1},X'_{i+1}$ by replacing $X'_i$ in the ordering $\oset_i$ with the sets $X_{i+1},X'_{i+1}$ in this order. We then add the edges of $E(X'_{i+1},X_{i+1})$ to $E'$. Notice that, in either case, $|E'|\leq \phi\cdot \sum_{j=1}^{i+1}|X_j|$ now holds, and moreover, 
	if an edge $e\in E(G)\setminus E'$ connects a vertex in set $Z$ to a vertex in set $Z'$, with $Z\neq Z'$ and $Z,Z'\in \set{X_1,\ldots,X_i,X_{i+1},X'_{i+1}}$, then $Z$ appears before $Z'$ in the ordering $\oset_{i+1}$.

	We consider the ordering $\oset'$ of $\set{X_1,\ldots,X_k}$ obtained after the last iteration. For convenience, we denote the sets $X_1,\ldots,X_k$ by $Y_1,\ldots,Y_k$, so that their indices are consistent with the ordering $\oset'$. We are now guaranteed that $|E'|\leq \phi \cdot \sum_{i=1}^{k-1}|X_i|$ (since set $X_k$ did not contribute any new edges to $E'$), and that for every edge $e\in E(G)\setminus E'$, if $e$ connects a vertex of set $Y_j$ to a vertex of set $Y_{j'}$ with $j\neq j'$, then $j<j'$.

		Let $i$ be the index of largest-cardinality set $Y_i$. 
		Consider first the case where $\alpha\geq \frac{1}{4}$.
		Since $|Y_i|= \alpha\cdot n$, either $\left | \bigcup_{j=1}^{i-1}Y_j\right |\geq \frac{1-\alpha}{2}n$, or 
		$\left | \bigcup_{j=i+1}^{k}Y_j\right |\geq \frac{1-\alpha}{2}\cdot n$ must hold.
		
		In the former case, we define a cut $(A,B)$ by setting $B=\bigcup_{j=1}^{i-1}Y_j$ and $A=V(G)\setminus B$, while in the latter case, we define a cut $(A,B)$ by setting $A= \bigcup_{j=i+1}^{k}Y_j$ and $B=V(G)\setminus A$. Notice that in either case, one of the sides of the cut is guaranteed to contain at least $ \frac{1-\alpha}{2}\cdot n$ vertices, while the other side contains $Y_i$ and thus has cardinality at least $\alpha n\geq n/4$. Moreover, all edges in $E_G(A,B)$ must lie in $E'$, so $|E_G(A,B)|\leq \phi\cdot  \sum_{i'=1}^{k-1}|X_{i'}|$ must hold. We return the cut $(A,B)$.

	Assume now that $\alpha<\frac{1}{4}$. Then we compute the smallest index $a$, such that $\sum_{j=1}^a|Y_j|\geq n/4$. Let $B=\bigcup_{j=1}^aY_j$, and let $A=V(G)\setminus B$. Since for all $1\leq i\leq k$, $|Y_i|\leq n/4$, it must be the case that $|A|,|B|\geq n/4$. Moreover, all edges in $E_G(A,B)$ must lie in $E'$, so $|E_G(A,B)|\leq \phi\cdot  \sum_{i'=1}^{k-1}|X_{i'}|$ must hold. We return the cut $(A,B)$.
\end{proof}

\paragraph{Even-Shiloach Tree for Weighted Directed Graphs.}

We use the generalization of the Even-Shiloach data structure of \cite{EvenS,Dinitz} to weighted directed graphs, due to \cite{ES-tree-directed}, who extended similar results of \cite{HenzingerKing} for unweighted directed graphs.

\begin{theorem}[\cite{ES-tree-directed}, see Section 2.1]\label{thm: directed weighted ESTree}
	There is a deterministic algorithm, whose input consists of a directed graph $H$ with integral  lengths $\ell(e)\geq 1$ for edges $e\in E(H)$, that undergoes an online sequence of edge deletions, a source vertex $s\in V(H)$, and a distance parameter $d\geq 1$. The algorithm maintains a shortest-path tree $T\subseteq H$ rooted at $s$, up to depth $d$. In other words, for every vertex $v\in V(H)$ with $\dist_H(s,v)\leq d$, vertex $v$ lies in the tree $T$, and the length of the shortest $s$-$v$ path 
	in $T$ is equal to that in $H$. The data structure also maintains, for all $v\in V(T)$, the current value $\dist_H(s,v)$. The total update time of the algorithm is $O(m\cdot d)$, where $m$ is the number of edges in the initial graph $H$.
\end{theorem}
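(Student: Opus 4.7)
The plan is to extend the classical Even-Shiloach decremental BFS tree to integral edge weights and directed edges by maintaining, at each vertex $v$ currently within distance $d$ of $s$, a label $\dist(v)$ equal to $\dist_H(s,v)$ and a parent pointer $p(v)$ to an in-neighbor $u$ certifying $\dist(v) = \dist(u) + \ell(u,v)$. The tree $T$ is the union of parent pointers; a vertex with $\dist(v) > d$ is declared unreachable and removed from $T$. Initialization is a one-shot Dijkstra computation from $s$ truncated at $d$, at cost $O(m\log n)$, which is absorbed by the $O(md)$ budget. For each vertex I also keep a list of its currently surviving incoming edges.

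When an edge $(u,v)$ is deleted, the only nontrivial case is $p(v)=u$. In that case I rescan all surviving incoming edges of $v$ to compute $\dist'(v) = \min_{(u',v)\in E(H)} \dist(u') + \ell(u',v)$, set to $\infty$ if this exceeds $d$. If $\dist'(v)>\dist(v)$ I insert $v$ into a priority queue keyed by its tentative new label. The queue is then drained in non-decreasing key order. When $v$ is extracted with tentative key $k$, I rescan $v$'s incoming edges once more to obtain $\dist''(v)$; if $\dist''(v)>k$, I re-insert $v$ with the larger key, and otherwise I commit $\dist(v)\gets \dist''(v)$, update $p(v)$ to the best surviving predecessor, and for each out-neighbor $w$ with $p(w)=v$ enqueue $w$ with its tentative new label. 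The priority-queue order is the directed-weighted analogue of level-by-level processing in the classical ES-tree: when $v$ is committed, every vertex with current label strictly less than $\dist(v)$ is already settled, so the rescan yields the true updated distance.

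For the running time, the key invariant is that each label $\dist(v)$ is monotone non-decreasing along the deletion sequence, since edges only disappear. Because edge lengths are positive integers and labels above $d$ are clamped to $\infty$, each $\dist(v)$ takes at most $d+1$ finite values; hence $v$ is committed at most $d+1$ times, each commit costing $O(\deg_{in}(v))$ for the rescan and $O(\deg_{out}(v))$ for enqueuing tree-children. Summing, the productive work is $O\bigl(\sum_v d(\deg_{in}(v)+\deg_{out}(v))\bigr)=O(md)$. The priority queue is implemented as an array of $d+1$ buckets indexed by integer distance, each a doubly-linked list, giving $O(1)$ insert/extract-min and removing any logarithmic overhead.

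The main obstacle I expect is bounding the cost of unproductive extractions, where $v$ is extracted with tentative key $k$ but the rescan gives $\dist''(v)>k$ and $v$ must be re-inserted. The charging argument is that whenever this happens, some predecessor of $v$ must have been re-committed at a label in $[k,\dist''(v))$ between $v$'s enqueue and its extraction; each unproductive extraction of $v$ can therefore be charged to a distinct predecessor commit. Since the total number of commits is $O(md)$ and the charging is injective across vertices, the total cost of unproductive extractions is also $O(md)$, yielding the claimed total update time. Correctness then follows inductively: after all pending queue work triggered by a deletion is completed, the relation $\dist(v)=\min_{(u,v)\in E(H)} \dist(u)+\ell(u,v)$ holds at every $v$ with $\dist(v)\leq d$, which forces $\dist(v)=\dist_H(s,v)$.
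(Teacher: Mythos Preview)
The paper does not prove this theorem; it is quoted from prior work, so there is no in-paper argument to compare against. I will therefore assess your argument on its own.

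Your overall architecture is the right template and correctness is essentially fine, but the running-time analysis has a real gap, and it is not where you anticipate (the unproductive extractions) but already in the bound on productive ones. You infer ``$v$ is committed at most $d+1$ times'' from ``$\dist(v)$ takes at most $d+1$ finite values'', but a commit in your scheme need not strictly increase $\dist(v)$: when $v$'s parent $u$ commits and $v$ is enqueued with key $\dist(u)+\ell(u,v)$, the rescan at extraction may find a different in-neighbor $u'$ still certifying the \emph{old} value of $\dist(v)$, and you then commit $v$ at that same value. Concretely, give $v$ in-neighbors $u_1,\dots,u_k$ all currently at the same distance and raise the $\dist(u_i)$ one at a time (this is easy to arrange in a host graph with $m=\Theta(dk)$ edges): each raise triggers an extraction-plus-rescan of $v$ costing $\Theta(k)$, yet $\dist(v)$ only moves up after all $k$ of them, giving $\Theta(dk^2)$ work at $v$ over $d$ rounds versus $md=\Theta(d^2k)$; for $k\gg d$ this exceeds the claimed $O(md)$. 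Your separate charging of unproductive extractions to predecessor commits runs into the same wall, since it is injective only per fixed $v$ and so bounds the extractions of $v$ by $(d{+}1)\deg_{\mathrm{in}}(v)$, i.e.\ $O\bigl(d\cdot\deg_{\mathrm{in}}(v)^2\bigr)$ work. The standard remedy is to replace the full rescan by a bucket heap at each vertex over its in-neighbors, keyed by $\dist(u)+\ell(u,v)$ (edges with $\ell(e)>d$ are irrelevant, so keys lie in $\{1,\dots,d\}$): each increase of $\dist(u)$ then costs $O(\deg_{\mathrm{out}}(u))$ in $O(1)$ heap updates at its out-neighbors, and $v$'s new parent is an $O(1)$ min-lookup rather than a $\Theta(\deg_{\mathrm{in}}(v))$ scan, yielding $\sum_u(d{+}1)\deg_{\mathrm{out}}(u)\cdot O(1)=O(md)$ total.
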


We will sometimes refer to the data structure from \Cref{thm: directed weighted ESTree} as $\EST(H,s,d)$.

\section{High-Level Overview of the Algorithm}
\label{sec: high level}

In this section we provide a high-level overview of our algorithm, define the main technical tool that we use, namely the restricted \SSSP problem, state our main result for this problem, and show that the proof of \Cref{thm:main} follows from it.

Recall that, in the \MBM problem, the input is a simple undirected $n$-vertex bipartite graph $G=(L,R,E)$, and the goal is to compute a matching of maximum cardinality in $G$. 
Let $\opt$ denote the value of the optimal solution to the problem.
It is enough to design an algorithm, that, given an integer $0<C^*\leq \opt$, computes a matching of cardinality  at least $C^*$ in $G$. This is since we can perform a binary search over the value $\opt$. Given the current guess $C^*$ on $\opt$, we run the algorithm with this target value $C^*$. If the algorithm successfully returns a matching of cardinality at least $C^*$, then we adjust our guess $C^*$ to a higher value, and otherwise we adjust it to a lower value. For convenience, we will denote this target value $C^*$ by $\opt$. Therefore, we assume from now on that we are given an integer $\opt$, such that the input graph $G$ contains a matching of value at least $\opt$, and our goal is to find a matching of value at least $\opt$.

It is well known that the \MBM problem can be reduced to computing maximum flow in a directed graph with unit edge capacities. In order to do so, we start with the graph $G$, and direct all its edges from vertices of $L$ towards vertices of $R$. We then add a source vertex $s$ that connects with an edge to every vertex of $L$, and a destination vertex $t$, to which every vertex of $R$ is connected. All edge capacities are set to $1$. Let $G'$ denote the resulting directed flow network. It is easy to verify that the value of the maximum $s$-$t$ flow in $G'$ is equal to the cardinality of maximum matching in $G$. Moreover, given an integral flow $f$ in $G'$, we can compute a matching $M$ of cardinality $\operatorname{val}(f)$ in $G$, in time $O(n)$: we simply include in $M$ all edges of $G$ that carry $1$ flow unit. 

As observed by \cite{GK98, Fleischer00,AAP93}, the Multiplicative Weight Update (\MWU) framework can be adapted in order to efficiently compute approximate solutions for various flow problems. Later, Madry \cite{madry2010faster} observed that an efficient implementation of this framework can be reduced to solving a special case of decremental \SSSP or \APSP. However, the use of this technique typically leads to algorithms that compute \emph{approximate} and not \emph{exact} maximum flow. Since our goal is to compute exact maximum matching, we cannot employ this technique directly. Instead, we will carefully combine the augmenting paths based approach for maximum flow with the \MWU framework.

Our algorithm maintains a matching $M$ in the input graph $G$, whose cardinality grows over time, starting with $M=\emptyset$. Given the current matching $M$,  we denote  $\Delta=\opt-|M|$. Once $\Delta$ becomes sufficiently small, we can use augmenting paths-based approach in a straightforward way, to compute a matching of cardinality $\opt$ in $G$ in time $O(\Delta\cdot m)$; we provide the details later. 
%In oder to do so, we perform $\Delta$ iterations. In every iteration, we compute the residual flow network corresponding to the current matching $M$ together with an augmenting path $P$ in that network, and then augment the matching via $P$. It is easy to verify that each iteration can be implemented in time $O(m)$.
As long as $\Delta$ remains sufficiently large, we  perform iterations. If $M$ is the matching that the algorithm maintains at the beginning of an iteration, and $M'$ is the matching obtained at the end of the iteration, then we will ensure that $|M'|\geq |M|+\Omega(\Delta/\poly\log n)$, where $\Delta=\opt-|M|$. This will ensure that the number of iterations is bounded by $O(\poly\log n)$. 

We now proceed to describe a single iteration. We assume that $M$ is the matching that is given at the beginning of the iteration, and let $\Delta=\opt-|M|$. Note that matching $M$ naturally defines an $s$-$t$ flow $f$ in $G'$, of value $|M|$: for every edge $e=(u,v)\in M$, we set the flow on edges $(s,u)$, $(u,v)$ and $(v,t)$ to $1$; the flow on all other edges is set to $0$. We then consider the residual flow network $G'_f$, that we denote for convenience by $H$. We will sometimes say that $H$ is the \emph{residual graph of $G$ associated with the matching $M$}. It is well known that the value of the maximum $s$-$t$ flow in a residual flow network $G'_f$ is equal to the value of the maximum flow in the original flow network $G'$ minus the value of the flow $f$. In other words, the value of the maximum $s$-$t$ flow in graph $H$ is at least $\Delta$. Equivalently, there is a collection $\pset$ of $\Delta$ edge-disjoint $s$-$t$ paths in graph $H$. We note that graph $H$ has a special structure that our algorithm will exploit extensively. Specifically, for every edge $e=(u,v)$ of $G$ with $u\in L$ and $v\in R$, if $e\in M$, then edges $(u,s),(v,u)$, and $(t,v)$ are present in $H$, and otherwise edge $(u,v)$ is present in $H$. 
Therefore, if $x$ is a vertex of $L$, then the in-degree of $x$ in $H$ is $1$: if $x$ is an endpoint of some edge $(x,y)$ in $M$, then the only edge entering $x$ in $H$ is $(y,x)$, and otherwise the only edge entering $x$ in $H$ is $(s,x)$. Similarly, if $y$ is a vertex of $R$, then its out-degree in $H$ is $1$. This ensures that, for any collection $\qset$ of edge-disjoint $s$-$t$ paths in $H$, the paths in $\qset$ are also internally vertex-disjoint. Our goal in the current iteration is to compute a collection $\pset$ of edge-disjoint $s$-$t$ paths in $H$, of cardinality at least $\Omega(\Delta/\poly\log n)$. From our discussion, in graph $H\setminus\set{s,t}$,
 the out-degree of every vertex in $R$ is at most $1$, and the in-degree of every vertex in $L$ is at most $1$; we use this fact later. 

In order to compute the desired collection of paths in $H$, we perform two steps. In the first step, we compute an initial collection $\pset'$ of $\Omega(\Delta/\poly\log n)$ paths in graph $H$ that cause an edge-congestion at most $O(\log n)$. This is done by employing the \MWU framework, and reducing the problem to a special case of decremental \SSSP in directed graphs, that we call restricted \SSSP.

In the second step, we construct a directed graph $H'\subseteq H$, that consists of all vertices and edges that participate in the paths of $\pset'$. Since the paths in $\pset'$ cause edge-congestion $O(\log n)$, due to the special structure of graph $H$, they also cause vertex-congestion $O(\log n)$. Therefore, $|E(H')|\leq O(n\log n)$. Additionally, by sending $\Theta(1/\log n)$ flow units on every path in $\pset'$, it is easy to verify that maximum $s$-$t$ flow in $H'$ has value at least $\Omega(\Delta/\poly\log n)$ even if no congestion is allowed (the capacity of every edge in $H'$ remains unit -- the same as its capacity in $H$). 
We then compute a maximum integral $s$-$t$ flow in $H'$ via standard Ford-Fulkerson algorithm, in time $\otilde(|E(H')|\cdot n)\leq \otilde(n^2)$; we provide the details below.
%We next {\em round} this feasible fractional $s$-$t$ flow into an integral $s$-$t$ flow of the same value using the fact that all edges in $H'$ are unit capacity. This rounding step can be implemented in 
%$O(|E(H')|\log n) = \tilde O(n)$ time by repeatedly peeling off integral flow paths using random walk from $s$ to $t$ (see Theorem 5 in~\cite{LRS13},\cite{GKK10}).
We thus recover a collection $\pset$ of $\Omega(\Delta/\poly\log n)$ edge-disjoint and vertex-disjoint paths in $H$. 
%To summarize, if we denote by $T(m,n,\Delta)$ the time required to perform the first step in an iteration with parameter $\Delta$, then the running time of the whole algorithm is $\tilde O(T(m,n,\Delta)+m\cdot \Delta+n^{1.5})$. 
We summarize our algorithm for Step 1 of each iteration in the following theorem. The majority of the remainder of this paper is dedicated to proving it.

\begin{theorem}\label{thm: one iteration main}
There is a deterministic algorithm, whose input consists of a simple bipartite $n$-vertex graph $G=(L,R,E)$, an integer $\opt>0$, such that $G$ contains a matching of cardinality at least $\opt$, and a matching $M$ in $G$ with $|M|<\opt$. Let $\Delta=\opt-|M|$, and let $H$ be the residual graph of $G$ with respect to $M$. Assume further that $\Delta\geq 256\log (|E|)$. The algorithm computes a collection $\pset$ of $\Omega(\Delta/
\poly\log n)$ $s$-$t$ paths in $H$, that cause edge-congestion at most $O(\log n)$, in time $\otilde\left(\frac{n^{2.5}}{\sqrt{\Delta}}\right ) $.
\end{theorem}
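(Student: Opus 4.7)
My plan is to prove \Cref{thm: one iteration main} by reducing it to a sequence of shortest-path queries against a restricted decremental directed \SSSP data structure on the residual graph $H$, via the multiplicative weights update (\MWU) framework. The template is: maintain a weight $w_e\ge 1$ on every edge $e\in E(H)$, initialized to $1$; in each round, obtain an approximately shortest $s$-$t$ path $P$ in the (weighted) residual graph; add $P$ to the output collection $\pset$; and then multiplicatively increase $w_e$ for every $e\in P$, treating an edge as deleted once its weight exceeds a threshold $W_{\max}=\Theta(\log n)$. Since the restricted \SSSP oracle described earlier in the paper operates on graphs with unit edge lengths and supports only edge deletions, I would implement the weight increases on the oracle side by either subdividing each edge into $W_{\max}$ parallel unit-length copies (deleting one copy per traversal), or equivalently by replacing each edge by a length-$W_{\max}$ path and deleting a single hop per traversal; either way the update model the oracle requires is preserved.

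With $N=\Theta(\Delta/\poly\log n)$ target iterations and distance parameter $d=\Theta(|E(H)|/\Delta\cdot \poly\log n)$, the congestion bound follows directly from the \MWU potential argument applied to $\Phi=\sum_e w_e$, which grows by at most $\eps\cdot(\text{length of returned path})$ per round and therefore stays $\poly(n)$; this in turn bounds the congestion on any single edge by $O(\log n/\eps)$, and for constant $\eps$ yields the required $O(\log n)$ congestion. Alternatively, if one uses the plain counter-based scheme (no weights, just cap congestion at $W_{\max}$), congestion is forced to be $O(\log n)$ by construction. To argue that the oracle actually returns a valid path in each round rather than FAIL, I would use flow decomposition: at any point, if $D$ edges have been saturated and deleted, the current residual graph still admits an integral $s$-$t$ flow of value at least $\Delta-D$, and a direct averaging argument on its path decomposition produces at least $(\Delta-D)/2$ edge-disjoint $s$-$t$ paths of length at most $2|E(H)|/(\Delta-D)\le d/8$. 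The parameter choices ensure $D\le N\cdot d/W_{\max}\ll \Delta$ throughout, so the oracle's precondition --- existence of $\Omega(\Delta/\poly\log n)$ internally disjoint $s$-$t$ paths of length at most $d/8$ --- is satisfied in every round, and all $N$ rounds succeed.

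Finally, the running time is dominated by the total update and query time of the restricted \SSSP oracle across the $N$ queries, which the paper bounds by $\otilde(n^{2.5}/\sqrt{\Delta})$ for the parameter regime used here (this is taken as a black-box guarantee from the main \SSSP result developed later in the paper). The main obstacle I expect is the joint tuning of the parameters $(d, N, W_{\max}, \eps)$ so that simultaneously: (a) the \MWU analysis yields $O(\log n)$ congestion, (b) the short-path precondition of the oracle holds throughout all $N$ iterations, (c) the resulting collection contains $\Omega(\Delta/\poly\log n)$ paths, and (d) the oracle's total running time matches the $\otilde(n^{2.5}/\sqrt{\Delta})$ target. The genuinely hard technical content, however, is not in this reduction itself --- it is in designing the restricted \SSSP oracle that respects these parameter choices, which is the subject of the remainder of the paper.
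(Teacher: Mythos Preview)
Your high-level plan---\MWU on $H$, with the restricted \SSSP oracle from \Cref{thm: sssp main} as the path-finding subroutine---matches the paper. The running time and congestion accounting are also essentially what the paper does. But the heart of your argument, the claim that ``the parameter choices ensure $D\le N\cdot d/W_{\max}\ll\Delta$,'' does not go through, and this is the step where the paper does something genuinely different.

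Work out the arithmetic. You want $N=\Theta(\Delta/\poly\log n)$ paths, each returned path has (unweighted) length at most $d=\Theta(|E(H)|\cdot\poly\log n/\Delta)$, and $W_{\max}=\Theta(\log n)$. Then
\[
D\;\le\;\frac{N\cdot d}{W_{\max}}\;=\;\Theta\!\left(\frac{\Delta}{\poly\log n}\cdot\frac{|E(H)|\,\poly\log n}{\Delta}\cdot\frac{1}{\log n}\right)\;=\;\Theta\!\left(\frac{|E(H)|}{\poly\log n}\right),
\]
which is $\Theta(m/\poly\log n)$, not $\ll\Delta$. In the regime the theorem must cover (say $m=n^2$, $\Delta=\sqrt n$), this is hopelessly large. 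So you cannot guarantee that the oracle's precondition holds throughout all $N$ rounds; the ``plain counter-based scheme'' suffers the same fate.

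The paper does not try to keep the oracle alive for $N$ rounds. Instead, it runs \algmwu until the oracle returns FAIL, and then proves (\Cref{claim: appox MWU}) via an LP-duality / primal--dual argument that by the time Condition~\ref{stopping condition} is violated, $|\pset|\ge\Delta/(128\log m)$ already. The key step is a thought experiment: continue the doubling process past the FAIL point with a second path set $\pset'$ until every $s$-$t$ path has length $\ge 1/(32\log m)$; duality bounds $|\pset|+|\pset'|\ge\Delta/(64\log m)$, and a short calculation shows that if $|\pset|$ were too small, the paths of $\pset'$ would themselves witness that Condition~\ref{stopping condition} still held when the oracle quit, a contradiction. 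This is the missing idea in your plan.

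A secondary issue: your proposed implementation of the weight updates---$W_{\max}$ \emph{unit-length} parallel copies, deleting one per traversal---does not simulate multiplicative weight growth at all (the shortest surviving copy always has length~$1$). The paper instead replaces each edge by $\log m$ copies of lengths $1,2,4,\ldots$ and deletes the shortest surviving copy; this is exactly length-doubling in disguise, and it is what makes the oracle's input a decremental graph while preserving the \MWU semantics needed for \Cref{claim: appox MWU}. Also, the restricted \SSSP oracle in \Cref{subsec: restricted SSSP} is \emph{not} unit-length; edge lengths are powers of~$2$ up to $16\Lambda$.
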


As discussed above, we now obtain the following immediate corollary, whose proof appears in Appendix~\ref{appsec:cor_one_iteration_main}.

\begin{corollary}\label{cor: one iteration main}
	There is a deterministic algorithm, whose input consists of a simple bipartite $n$-vertex graph $G=(L,R,E)$, an integer $\opt>0$, such that $G$ contains a matching of cardinality at least $\opt$, and a matching $M$ in $G$ with $|M|<\opt$. Let $\Delta=\opt-|M|$, and let $H$ be the residual graph of $G$ with respect to $M$. Assume further that $\Delta\geq 256\log (|E|)$. The algorithm computes a collection $\pset$ of $\Omega(\Delta/
	\poly\log n)$ $s$-$t$ paths in $H$, that are internally vertex-disjoint, in time $\otilde\left(\frac{n^{2.5}}{\sqrt{\Delta}}\right ) $.
\end{corollary}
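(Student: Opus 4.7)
\textbf{Proof proposal for Corollary~\ref{cor: one iteration main}.}
The plan is to apply Theorem~\ref{thm: one iteration main} to obtain a low-congestion path collection, restrict attention to the subgraph supporting these paths, and then run a classical blocking-flow routine on this sparse residual-like graph to recover internally vertex-disjoint paths.

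First, I would invoke Theorem~\ref{thm: one iteration main} on $(G,\opt,M)$ to produce in time $\otilde(n^{2.5}/\sqrt{\Delta})$ a collection $\pset'$ of $k := \Omega(\Delta/\poly\log n)$ paths from $s$ to $t$ in $H$ whose edge-congestion is $\eta = O(\log n)$. Next I would exploit the special structure of the residual graph $H$ noted in Section~\ref{sec: high level}: every vertex $u\in L$ has in-degree at most $1$ in $H\setminus\{s,t\}$, and every vertex $v\in R$ has out-degree at most $1$. Consequently, if $u\in L$ is used by $q$ paths of $\pset'$, all of them must traverse the same incoming edge at $u$, so $q\le\eta$; analogously for $v\in R$. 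Thus $\pset'$ has vertex-congestion $O(\log n)$ as well.

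Let $H'$ be the subgraph of $H$ consisting of all vertices and edges that appear in some path of $\pset'$. The vertex-congestion bound above implies that each $u\in L\cap V(H')$ has in-degree $\le 1$ and out-degree $\le O(\log n)$ in $H'$ (since its at most $O(\log n)$ incident paths each leave through a distinct edge), and symmetrically for $R$, so $|E(H')| = \otilde(n)$ and $|V(H')|\le n$. Scaling each path of $\pset'$ by $1/\eta$ yields a fractional $s$-$t$ flow in $H'$ of value $\Omega(\Delta/\poly\log n)$ that respects the unit edge capacities. Since $H'$ has integer (unit) capacities, the maximum integral $s$-$t$ flow value in $H'$ is also $\Omega(\Delta/\poly\log n)$.

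Now I would compute a maximum integral $s$-$t$ flow in $H'$ using the Hopcroft--Karp-style blocking-flow procedure for unit-capacity graphs, which runs in deterministic time $O(|E(H')|\cdot\sqrt{|V(H')|}) = \otilde(n^{1.5})$. A standard flow decomposition turns the resulting integral flow into a set $\pset$ of $\Omega(\Delta/\poly\log n)$ edge-disjoint $s$-$t$ paths in $H$; because of the in-/out-degree constraints on $L$ and $R$ in $H$, edge-disjointness immediately upgrades to internal vertex-disjointness, as observed in Section~\ref{sec: high level}. The total runtime is $\otilde(n^{2.5}/\sqrt{\Delta}) + \otilde(n^{1.5})$, and since $\Delta\le n$ (as $\opt\le n/2$ for bipartite matching), the $\otilde(n^{1.5})$ term is absorbed by $\otilde(n^{2.5}/\sqrt{\Delta})$, yielding the claimed bound.

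The only real sticking point is the sparsity calculation for $H'$: one must use both the $O(\log n)$ edge-congestion from Theorem~\ref{thm: one iteration main} \emph{and} the asymmetric degree property of the residual graph to conclude $|E(H')|=\otilde(n)$. Everything else is a textbook reduction from a fractional, low-congestion routing to an integral vertex-disjoint routing via blocking flows.
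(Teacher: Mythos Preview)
Your proposal is correct and follows essentially the same route as the paper's proof: apply Theorem~\ref{thm: one iteration main}, use the asymmetric degree structure of $H$ to upgrade edge-congestion to vertex-congestion and hence bound $|E(H')|=\otilde(n)$, then compute an integral maximum flow in $H'$ and decompose it. The only difference is that the paper opts for the plain Ford--Fulkerson algorithm on $H'$ (giving $\otilde(n^2)$, still absorbed by $\otilde(n^{2.5}/\sqrt{\Delta})$ since $\Delta\le n$), while you use the Hopcroft--Karp/blocking-flow bound $\otilde(n^{1.5})$; the paper itself remarks that this faster alternative is available but unnecessary for the final running time.
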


Before we proceed to prove \Cref{thm: one iteration main}, we first show that the proof of \Cref{thm:main} follows from it.

\subsection{Completing the Proof of \Cref{thm:main}}
We fix a threshold $256\log (|E(G)|)<\Delta^*\leq n$ that we will define later. Recall that we can assume that we are given a value $\opt$, such that the input graph $G$ contains a matching of size at least $\opt$. Throughout the algorithm, we maintain a matching $M$, and we denote $\Delta=\opt-|M|$.  We start with $M=\emptyset$, and then perform iterations, as long as $\Delta\geq \Delta^*$. In every iteration, we apply the algorithm from 
\Cref{cor: one iteration main} to graph $G$ and the current matching $M$, to obtain a collection $\pset$ of $\Omega(\Delta/
\poly\log n)$ $s$-$t$ paths in the residual graph $H$ of $G$ corresponding to $M$, that are internally vertex-disjoint. We then augment the matching $M$ via these paths, and obtain a new matching $M'$ with $|M'|\geq |M|+\Omega\left(\frac{\Delta}{\poly\log n}\right )$. Therefore, $\opt-M'\leq \Delta- \Omega\left(\frac{\Delta}{\poly\log n}\right )$, and so the parameter $\Delta$ decreases by at least factor $\left(1-\Omega\left(\frac{1}{\poly\log n}\right )\right)$ in every iteration. We replace $M$ with $M'$, and continue to the next iteration.
Notice that the number of such iterations must be bounded by $O(\poly\log n)$, and the running time of every iteration is bounded by $\otilde\left(\frac{n^{2.5}}{\sqrt{\Delta}}\right ) \leq \otilde\left(\frac{n^{2.5}}{\sqrt{\Delta^*}}\right ) $. Once $\Delta\leq \Delta^*$ holds, we use a straightforward algorithm that iteratively computes an augmenting path in the current residual graph $H$, and then augments the matching $M$ using this path. The number of such iterations is bounded by $\Delta^*$, and each iteration can be executed in time $O(m)$. Overall, the running time of the algorithm is bounded by $O(\Delta^* m)+\otilde\left(\frac{n^{2.5}}{\sqrt{\Delta^*}}\right )$.
Setting $\Delta^*=\frac{n^{5/3}}{m^{2/3}}$ gives running time ${\otilde(m^{1/3}\cdot n^{5/3})}$.

In the remainder of this paper we focus on proving \Cref{thm: one iteration main}.
In order to prove \Cref{thm: one iteration main}, we use the \MWU framework. We start by describing the standard application of this framework, and later present our slight modification of this framework.

\subsection{The MWU Framework} 

Consider the residual flow graph $H$ of $G$ with respect to the current matching $M$. We 
consider the problem of computing a largest cardinality collection  $\pset$ of edge-disjoint paths connecting $s$ to $t$ in $H$. Let $\pset^*$ denote the collection of all simple $s$-$t$ paths in $H$. Then the problem can be expressed in a linear program, where every path $P\in \pset^*$ is associated with a variable $f(P)$; we also provide its dual LP.

\begin{tabular}[t]{|l|l|}\hline &\\
	$\begin{array}{lll}
	\text{\underline{Primal}}&&\\
	\text{Max}&\sum_{P\in \pset^*} f(P)&\\
	\text{s.t.}&&\\
	%&\sum_{P\in \pset^*(a)}f(P)\leq 1&\forall a\in A'\\
	%&\sum_{P\in \pset^*(b)}f(P)\leq 1&\forall b\in B'\\
	&\sum_{\stackrel{P\in\pset^*:}{e\in E(P)}}f(P)\leq
	1&\forall e\in E(H)\\
	&f(P)\geq 0&\forall P\in \pset^*\\
	\end{array}$
	&$
	\begin{array}{lll}
	\text{\underline{Dual}}&&\\
	\text{Min}&\sum_{e\in E(H)}\ell(e)\\
	\text{s.t.}&&\\
	&\sum_{e\in E(P)}\ell(e)\geq 1&\forall P\in \pset^*\\
	&\ell(e)\geq 0&\forall e\in E(H)\\
	&&\\
	\end{array}$\\ &\\ \hline
\end{tabular}

\vspace{1cm}

In the dual LP, there is a variable $\ell(e)$ for every edge $e\in E(H)$, that we will refer to as the \emph{length} of the edge. We now consider the standard application of the \MWU framework, that can be viewed as a primal-dual algorithm. The algorithm that we describe below is a simplified version of the original algorithm of \cite{GK98, Fleischer00}; this variant achieves a slightly weaker approximation, but its analysis is much simpler and the higher approximation factor does not affect the overall performance of our algorithm.

\paragraph{Standard \MWU-Based Algorithm.}
For simplicity, in this informal description of the standard \MWU-based algorithm, we assume that $m=|E(H)|$ is an integral power of $2$.
We start with the graph $H$, and set the length of every edge $e$ in $H$ to $\ell(e)=1/m$. We initialize $\pset=\emptyset$, and set $f(P)=0$ for all $P\in \pset^*$. Next, we perform iterations. In every iteration, we compute a simple $s$-$t$ path $P$ in $H$, whose length is bounded by $1$. We then add $P$ to $\pset$, set $f(P)=1$, double the length of every edge on $P$, and continue to the next iteration. The algorithm terminates once every $s$-$t$ path in $H$ has length greater than $1$. It is immediate to see that the final collection $\pset$ of paths is guaranteed to cause congestion at most $O(\log m)$. Indeed, whenever an edge $e$ is used by a path in $\pset$, we double its length, and once the length becomes greater than $1$, the edge may no longer be used in any paths that are subsequently added to $\pset$. Since initially we set $\ell(e)=1/m$, at most $O(\log m)$ paths in $\pset$ may use $e$. Moreover, if $\opt'$ is the cardinality of the largest collection of edge-disjoint $s$-$t$ paths in $H$, then at the end of the algorithm, $|\pset| = \Omega(\opt')$ holds.
This is since, at the beginning of the algorithm, the value of the dual solution is $1$, and the value of the primal solution is $0$. In every iteration, the value of the primal solution increases by $1$, and the value of the dual solution increases by at most $1$ (since the total length of all edges whose lengths are being doubled is at most $1$). Therefore, if we denote by $C$ the cost of the dual solution obtained at the end of the algorithm, then $C\leq 1+|\pset|\leq 2|\pset|$ holds. As the algorithm terminates only when the length of every $s$-$t$ path in $\pset^*$ is at least $1$, we obtain a valid solution to the dual LP of value $C$ at the end of the algorithm. By weak duality,  the value of the optimal soluton to the primal LP  (and hence $\opt'$) is upper-bounded by the value of the optimal solution to the dual LP, which is in turn upper-bounded by $C$. Therefore, $C\geq \opt'$, and, since $|\pset|\geq C/2$, we get that $|\pset|\geq \opt'/2$ holds. In our case, since $\opt'\geq \Delta$, we are guaranteed that $|\pset| = \Omega(\Delta)$. 

In order to implement this framework, it is sufficient to implement an oracle that, in every iteration, computes a simple $s$-$t$ path of length at most $1$ in $H$, or correctly establishes that no such path exists, while the lengths of the edges in $H$ increase over time. 
As observed by Madry \cite{madry2010faster}, the problem that the oracle needs to solve can be viewed as a special case of decremental \SSSP. Indeed,
the increase in edge lengths can be easily implemented as edge-deletion updates. This is done by constructing another graph $\hat H$, that is obtained from graph $H$, by replacing every edge $e=(u,v)\in E(H)$ with $\log m$ parallel copies of lengths $1/m,2/m,\ldots,1$. Every time the length of $e$ in graph $H$ doubles, we delete the shortest copy of $e$ that lies in $\hat H$. Implementing the oracle now reduces to solving the resulting instance of decremental \SSSP. Unfortunately, this remains a challenging problem, since {\bf exact} decremental \SSSP is known to be a difficult problem, and it seems unlikely that one can get a fast algorithm even for the restricted special case that we obtain.

\paragraph{Modified \MWU Framework.}

We start with some intuition for the modified \MWU framework. In order to overcome the difficulty mentioned above, we will slightly relax the requirements from the oracle: we still require that the paths that the oracle returns have lengths at most $1$, but now the oracle is allowed to terminate once $\dist_H(s,t)\geq \frac{1}{32\log m}$ holds (intuitively, this corresponds to solving the \SSSP problem approximately). Even with this relaxed setting, we are still guaranteed that, at the end of the algorithm, $|\pset|\geq \Omega\left(\frac{\Delta}{\log m}\right )$ holds, and as before, the paths in $\pset$ cause congestion at most $O(\log m)$: the bound on congestion follows from the same reasoning as before. In order to establish that $|\pset|\geq \Omega\left(\frac{\Delta}{\log m}\right )$, we scale the final length of every edge in the dual LP up by factor $32\log m$, thereby obtaining a feasible dual solution. This proves that $|\pset|$ is within factor $O(\log m)$ from the optimal solution to the primal LP.

We can relax the requirements from the oracle even more, and this relaxation will be crucial in reducing the running time of our algorithm. Specifically, we allow the algorithm to terminate, once the following condition is violated:

\begin{properties}{C}
	\item Let $\qset$ be the largest-cardinality collection of $s$-$t$ paths in the current graph $H$, such that the paths in $\qset$ are edge-disjoint and every path has length at most  $\frac 1 8$. Then $|\qset|\geq \frac{\Delta}{256\log^2m}$. \label{stopping condition}
	\end{properties} 

In addition to this relaxation, we will also slightly change the initial lengths that are assigned to every edge. We use a parameter $N=
\frac{64 m\log m}{\Delta}$.
In our algorithm, we will set the initial length $\ell(e)$ of every edge $e\in E(H)$ to $1/N$.
The resulting algorithm, denoted by \algmwu (excluding the implementation of the oracle), appears in Figure \ref{alg: mwu}.

\program{Alg-MWU}{alg: mwu}{
\begin{itemize}
	\item Set $\pset=\emptyset$, and, for each edge $e\in E(H)$, set $\ell(e)=1/N$. Then iterate.
	\item In every iteration:
	
	\begin{itemize}
		\item either compute a simple $s$-$t$ path $P$ in $H$ of length at most $1$; add $P$ to $\pset$; double the length of every edge $e\in E(P)$;
		\item or correctly establish that Condition \ref{stopping condition} no longer holds and halt.
	\end{itemize}
\end{itemize}
}

In the following claim we prove that Algorithm \algmwu is indeed guaranteed to compute a large enough collection of $s$-$t$ paths.

\begin{claim}\label{claim: appox MWU}
Let $\pset$ be the set of paths obtained at the end of Algorithm \algmwu, and assume that $\Delta\geq 128\log m$. Then $|\pset|\geq \frac{\Delta}{128\log m}$, and every edge of $H$ belongs to at most $O(\log m)$ paths of $\pset$.
\end{claim}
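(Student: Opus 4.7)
The plan is to prove the congestion bound and the cardinality bound independently, each by a short potential argument. The congestion bound is essentially immediate: every edge $e$ starts with $\ell(e) = 1/N$, where $N = 64m\log m / \Delta \leq m^{O(1)}$, and it can appear on an added path only when its current length is at most $1$ (since the added path itself has length at most $1$). Each inclusion doubles $\ell(e)$, so $e$ can be used at most $1 + \log_2 N = O(\log m)$ times before $\ell(e) > 1$ forbids further use.

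For the cardinality lower bound, I will exploit the fact that the residual graph $H$ admits $\Delta$ edge-disjoint $s$-$t$ paths $Q_1,\ldots,Q_\Delta$ — these exist because $\opt - |M| \geq \Delta$ so the maximum $s$-$t$ flow in $H$ (with unit capacities) has value at least $\Delta$ — and combine this with the algorithm's stopping condition. Let $\ell^{(k)}$ denote the edge lengths at termination, where $k = |\pset|$. Violation of Condition~\ref{stopping condition} means that any edge-disjoint collection of $s$-$t$ paths of $\ell^{(k)}$-length at most $1/8$ has size strictly less than $\Delta/(256\log^2 m)$, which is in particular less than $\Delta/2$. Hence at most $\Delta/2$ of the $Q_i$ can satisfy $\ell^{(k)}(Q_i) \leq 1/8$, so at least $\Delta/2$ of them satisfy $\ell^{(k)}(Q_i) > 1/8$. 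Summing and using edge-disjointness of the $Q_i$'s,
\[
\frac{\Delta}{16} < \sum_{i=1}^{\Delta} \ell^{(k)}(Q_i) \leq L_k := \sum_{e\in E(H)} \ell^{(k)}(e).
\]

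Finally, I will upper-bound $L_k$ by the standard MWU potential argument: $L_0 = m/N = \Delta/(64\log m)$, and each iteration raises $L$ by exactly the length of the added path $P$ (that is what the doubling step contributes), which is at most $1$. Hence $L_k \leq L_0 + k$, and rearranging yields $k > \Delta/16 - \Delta/(64\log m) \geq \Delta/32$, which is stronger than the claimed $\Delta/(128\log m)$. I foresee no real obstacle: the one point that needs care is observing that Algorithm~\algmwu never deletes edges — it only reweights them — so the $Q_i$ remain valid $s$-$t$ paths in $H$ throughout the algorithm's execution, which is what allows us to compare the initial combinatorial witness of maximum flow against the final length function.
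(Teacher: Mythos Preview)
Your proof is correct and takes a genuinely different, more direct route than the paper. The paper argues via a thought experiment: it imagines continuing the MWU process past termination, producing a second set $\pset'$ of paths until every $s$-$t$ path has length at least $\frac{1}{32\log m}$, uses LP duality to show $|\pset|+|\pset'|\geq \frac{\Delta}{64\log m}$, and then derives a contradiction by showing that if $|\pset|$ were small, the paths of $\pset'$ could be reassembled into a large edge-disjoint family of short paths, so Condition~\ref{stopping condition} would not have been violated at termination. You bypass all of this by working directly with a fixed witness: the $\Delta$ edge-disjoint $s$-$t$ paths $Q_1,\ldots,Q_\Delta$ that exist in $H$ by the max-flow guarantee. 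Since the algorithm never deletes edges, these paths persist, and the violation of Condition~\ref{stopping condition} forces all but $<\frac{\Delta}{256\log^2 m}$ of them to have final length $>\frac{1}{8}$; edge-disjointness then lower-bounds the total length $L_k$, and the standard potential bound $L_k\leq L_0+k$ finishes. Your argument is shorter, avoids the continuation/duality detour, and in fact yields the stronger bound $|\pset|=\Omega(\Delta)$ rather than $\Omega(\Delta/\log m)$; the paper's approach, on the other hand, is closer in spirit to the classical Garg--K\"onemann analysis and would generalize more readily to settings where one does not have an explicit integral witness of value $\Delta$.
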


\begin{proof}
	Recall that we initially set the length of each edge $e$  of $H$ to be $\frac{1}{N}$. Since 	$N= \frac{64 m\log m}{\Delta}$, and $\Delta\geq 128\log m$, we get that the initial length of every edge $e$ of $H$ is  $\frac 1 N\geq  \frac 2 {m}$. For each such edge $e$, whenever a path that contains $e$ is added to $\pset$, we double the length of $e$. Since every path that is added to $\pset$ has length at most $1$, once $\ell(e)>1$ holds, the edge may no longer be used by paths that are newly added to $\pset$. Therefore, the length of $e$ may be doubled at most $\log m$ times, and at most $\log m$ paths in $\pset$ may use $e$.

	It now remains to prove that $|\pset|\geq \frac{\Delta}{128\log m}$.
	Consider the following thought experiment. We start with the set $\pset$ of paths, and the edge lengths that were obtained at the end of Algorithm \algmwu. Let $H'$ denote the resulting graph (which is identical to $H$ but may have different edge lengths).
	
	Next, we will construct another set $\pset'$ of paths, as follows. Initially, we let $\pset'=\emptyset$, and we let $H''$ be the copy of the graph $H'$. Then we iterate, as long as there is an $s$-$t$ path in $H''$ whose length is at most $\frac{1}{32\log m}$. In every iteration, we select an arbitrary simple $s$-$t$ path $P$ in $H''$ of length at most  $\frac{1}{32\log m}$, add it to $\pset'$, and then double the length of every edge $e\in E(P)$ in graph $H''$. This algorithm terminates when every $s$-$t$ path in $H''$ has length at least $\frac{1}{32\log m}$. 
		We show that, at the end of the above algorithm, $|\pset|+|\pset'|\geq \frac{\Delta}{64\log m}$ must hold.

	\begin{observation}\label{obs: end of alg many paths}
		At the end of the algorithm described above, $|\pset|+|\pset'|\geq \frac{\Delta}{64\log m}$.
	\end{observation}
\begin{proof}
In every iteration of the algorithm, we added one path $P$ to $\pset\cup \pset'$, thereby increasing $|\pset|+|\pset'|$ by $1$. At the same time, we doubled the lengths of all edges in $P$, thereby increasing the value of the dual solution by at most $1$ (since the total length of all edges on $P$ before the doubling step was at most $1$). 
	
	When the algorithm starts, every edge is assigned length $\frac{1}{N}$, so $\sum_{e\in E(H)}\ell(e)=\frac{m}{N}= \frac{\Delta}{64\log m}$, since $N=
	\frac{64 m\log m}{\Delta}$. Therefore, at the beginning of the algorithm, the value of the dual solution is at most $\frac{\Delta}{64\log m}$, and $|\pset|+|\pset'|=0$. After that, in every iteration $|\pset|+|\pset'|$ grows by $1$, while the value of the dual solution grows by at most $1$. So if we denote by $C$ the value of the dual solution at the end of the algorithm, then $C\leq |\pset|+|\pset'|+\frac{\Delta}{64\log m}$.
	
	Since at the end of the algorithm every $s$-$t$ path in $H''$ has length at least $\frac{1}{32\log m}$, by multiplying the lengths of the edges by $32\log m$, we obtain a feasible dual solution, whose value is $C'=C\cdot 32\log m\leq (32\log m)(|\pset|+|\pset'|)+\frac{\Delta}{2}$. 
	Since the  value of the optimal solution to the primal LP, denoted by $\opt'$, is at least $\Delta$, so is the value of the optimal dual solution.
	Therefore, $C'\geq \Delta$ must hold, and so  $|\pset|+|\pset'|\geq \frac{\Delta-\Delta/2}{32\log m}\geq\frac{\Delta}{64\log m}$.
\end{proof}	
	
	Assume now for contradiction that $|\pset|<\frac{\Delta}{128\log m}$, so that $|\pset'|\geq \frac{\Delta}{128\log m}$ holds. We will show that Condition \ref{stopping condition} must have held when Algorithm \algmwu was terminated, reaching a contradiction.
	
%	If we denote by $C$ the value of the dual solution at the end of this algorithm, then
%	We call the algorithm during which the collection $\pset$ of paths was constructed \emph{stage 1}, and we denote the graph $H$ obtained at the end of stage 1 by $H'$ (we assume that this includes edge lengths in $H'$). We then perform the second stage of the algorithm, where we continue adding paths of length at most $1$ to $\pset$ as before (while doubling the length of every edge on such a path), until the length of the shortest path in the resulting graph becomes at least $1/(32\log n)$. We denote by $\pset'$ the collection of paths that were added to $\pset$ during the first phase of the algorithm, and by $\pset''$ the collection of paths added during the second phase. Recall that we are guaranteed that $|\pset'|+|\pset''|\geq \frac{\Delta}{32\log n}$. Assume for contradiction that $|\pset|<\frac{\Delta}{64\log n}$. Then $|\pset''|>\frac{\Delta}{64\log n}$.
	
	Recall that each path in $\pset'$ has length at most $\frac 1{32\log m}$ in graph $H'$. Consider a new graph $H^*$, that is obtained by taking the union of all edges and vertices that appear on the paths in $\pset'$. The length of every edge in $H^*$ is set to be its length in $H'$. Denote $|\pset'|=k$. 
	Since every path in $\pset'$ has length at most $\frac 1 {32\log m}$ in $H'$, $\sum_{e\in E(H^*)}\ell(e)\leq \frac{k}{32\log m}$. 
	
	Using the same arguments that we applied to the set $\pset$ of paths, it is easy to see that the paths in $\pset'$ cause edge-congestion at most $\log m$ in $H''$ and hence in $H^*$. %Due to the special structure of graph $H$, they also cause vertex-congestion at most $\log m$. 
	From the integrality of flow, there is a collection $\qset$ of at least $\frac{k}{\log m}$ edge-disjoint $s$-$t$ paths in $H^*$. We say that a path $Q\in \qset$ is \emph{long}, if its length is greater than $\frac 1 8$. Since $\sum_{e\in E(H^*)}\ell(e)\leq \frac{k}{32\log m}$, the number of long paths in $\qset$ is bounded by $\frac{k}{4\log m}$, and the remaining paths must be short. But then we obtain a collection of at least $\frac{k}{2\log m}\geq \frac{\Delta}{256\log^2m}$ edge-disjoint $s$-$t$ paths of length at most $\frac 1 8$ each in $H'$, so when Algorithm \algmwu terminated, Condition \ref{stopping condition} still held, a contradiction.
\end{proof}

In order to complete the proof of \Cref{thm: one iteration main}, it is now sufficient to provide an efficient implementation of the oracle, that, in every iteration, either produces a simple $s$-$t$ path of length at most $1$ in the current graph $H$, or correctly establishes that Condition \ref{stopping condition} is violated. In order to implement the oracle, we will reduce it to a special case of the decremental \SSSP problem, that we call \emph{restricted \SSSP}, and then provide an algorithm for that problem. But first it will be convenient for us to slightly adjust the lengths of the edges in graph $H$.

\paragraph{Rescaling the Edge Lengths.}
Recall that we have set $N=\frac{64\cdot m\log m}{\Delta}$. Initially, the length of every edge in $H$ is set to be $1/N$, and after that, in every iteration, the lengths of some edges are doubled. When the length of an edge becomes greater than $1$, it may not be doubled anymore. Let $H'$ be the graph that is identical to $H$, except that we multiply all edge lengths by factor $N$. Then, at the beginning of the algorithm, the length of every edge in $H'$ is $1$, and, as the algorithm progresses, some edge lengths may be doubled. However, throughout the algorithm, all edge lengths remain at most $2N$. Throughout, we denote $\Lambda=N/8$. Once the length of an edge becomes greater than $N=8\Lambda$, it may no longer be doubled. Clearly, at all times $\tau$, if $P$ is an $s$-$t$ path in graph $H'$, whose length is $\rho$, then the length of $P$ in $H$ is exactly $\rho/N$. Similarly, if the length of $P$ in $H$ is $\rho'$, then the length of $P$ in $H'$ is $\rho'\cdot N=8\rho'\cdot \Lambda$. Consider now the following new stopping condition.

\begin{properties}[1]{C}
	\item Let $\qset'$ be the largest-cardinality collection of $s$-$t$ paths in the current graph $H'$, such that the paths in $\qset'$ are edge-disjoint, every path has length at most  $\Lambda$  in the current graph. Then $|\qset'|\geq \frac{\Delta}{256\log^2m}$ holds. \label{stopping condition2}
\end{properties} 

Clearly, Condition \ref{stopping condition2} holds in graph $H'$ if and only if Condition \ref{stopping condition} holds in graph $H$.

As mentioned already, we will reduce the problem of designing the oracle to a special case of the \SSSP problem. In order to avoid increases in edge lengths, we replace the initial graph $H'$ with a graph $\hat H$. Let $N'$ be the smallest integral power of $2$ that is greater than $N$. Initially, graph $\hat H$ is obtained from the initial graph $H'$ by replacing every edge $e=(u,v)$ with $\log N'$ parallel edges, of lengths $1,2,4,\ldots,N'$, that we view as \emph{copies} of $e$. Whenever the length of $e$ is doubled, we delete the shortest-length copy of $e$ from $\hat H$. Therefore, $\hat H$ is a dynamic graph that only undergoes edge-deletion updates.

In order to implement the oracle,
it is now sufficient to solve the following instance of the \SSSP problem: we are given a graph $\hat H$ with $n$ vertices and $O(m\log m)$ edges, and a parameter $\Delta$. Let $N=\frac{64 m\log m}{\Delta}$ and $\Lambda=N/8=\frac{8m\log m}{\Delta}$.  For every edge $e\in E(\hat H)$, we are given a length $\ell(e)$, that is an integral power of $2$ between $1$ and $16\Lambda$. 
The algorithm consists of at most $\Delta$ iterations. At the beginning of every iteration, we need to compute a simple $s$-$t$ path in the current graph $\hat H$ of length at most $8\Lambda$, or to correctly establish that the largest-cardinality collection $\qset$ of $s$-$t$ paths in the current graph $\hat H$, such that the paths in $\qset$ are edge-disjoint and every path has length at most  $\Lambda$, is less than $\frac{\Delta}{256\log^2m}$.
In the latter case, the algorithm terminates. In the former case, if $P$ is the $s$-$t$ path that the algorithm returned, we assume that for every edge $e\in E(P)$, it is the shortest among its parallel edges in $\hat H$. The edges of $E(P)$ are then deleted from $\hat H$ and we proceed to the next iteration.

To summarize, we reduced \Cref{thm: one iteration main} (and generally the \MBM problem) to a special case of (decremental) \SSSP problem in directed graphs. Observe that the algorithm for this special case must be able to withstand an adaptive adversary, since the edges that are deleted from the graph depend on the paths that the algorithm returned in response to prior queries. While current state of the art algorithms for \SSSP in directed graphs with an adaptive adversary are not so efficient, this special case of \SSSP has  some nice additional properties that we can exploit, as listed below:

\begin{itemize}
	\item The special structure of the input graph $\hat H\setminus\set{s,t}$: namely, for every vertex, either its in-degree or its out-degree is at most $1$, if we ignore parallel edges;
	
	\item The algorithm is only required to return a short $s$-$t$ path as long as there are many disjoint short $s$-$t$ paths, and once this is not the case, it can be terminated;
	
	\item We only need to respond to \shortestpath query between a specific pair of vertices $s$ and $t$ that is known in advance. Moreover, the only edges that are deleted from the graph are edges that appear on the paths that the algorithm returns. 
	
	\item The number of times we are required to respond to \shortestpath queries is not too large, and we only need to bound the {\em total time} required to both maintain the data structures and respond to queries. %This allows us to use a lazy approach, where, e.g. we are allowed to abort a response to a query if, as we attempt to respond to it, we discover that the data structure needs to be updated (as long as the total running time of the algorithm is suitably bounded.)
	
	\item Generally in the (decremental) \SSSP problem we require that the time required to return a path in response to a \shortpath query (called \emph{query time}) is proportional to the number of the edges on the path. In our case we could afford a higher query time.
	
	\item While much of the work on \SSSP focuses on factor $(1-\eps)$ approximation, we can sustain a much higher approximation factor (the reduction we presented requires approximation factor $8$, but it can be easily adjusted to allow higher approximation factors, that may be as large as polynomial in $n$).
\end{itemize}

The algorithm that we present does not exploit all of the advantages listed above; this may suggest avenues for obtaining even faster algorithms for \MBM via this general approach.

We abstract the special case of \SSSP problem that we need to solve, that we call \emph{restricted} \SSSP, in the following subsection.

\subsection{The Restricted \SSSP Problem}
\label{subsec: restricted SSSP}

\paragraph{Well-Structured Graphs.}
Throughout, we will work with graphs that have a special structure. Let $G$ be the given instance of the Bipartite Matching problem. All graphs that we consider are subgraphs of the residual graph of $G$ with respect to some matching $M$ (except that we may add some parallel edges). Such graphs have special structural properties that we summarize in the following definition.

\begin{definition}[Well-Structured Graph]\label{def: well-structured}
Let $H$ be a directed graph, and let $m$ be a parameter. We say that $H$ is a \emph{well-structured} graph of size $m$, if $|E(H)|\leq m\log m$, and  $H$ has two special vertices $s$ and $t$. Additionally, if we denote $H'=H\setminus\set{s,t}$, then the following hold.

\begin{itemize}	
	\item $V(H')$ is partitioned into two subsets $L$ and $R$, and every edge of $H'$ has one endpoint in $L$ and another in $R$. Edges directed from vertices of $R$ to vertices of $L$ are called \emph{special edges} (they correspond to the edges of the current matching $M$), and the remaining edges of $H'$ are called \emph{regular edges}.
	
	\item Every vertex of $L\cup R$ is incident to at most $\log m$ special edges in $H'$ and all of them are parallel to each other, so the out-degree of every vertex in $R$ (ignoring parallel edges) is at most $1$, and the in-degree of every vertex in $L$ (also ignoring parallel edges) is at most $1$ in $H'$.
	
	\item In graph $H$, vertex $s$ has no incoming edges, and all edges leaving $s$ connect it to vertices of $L$.
	
	\item Similarly, in graph $H$, vertex $t$ has no outgoing edges, and all edges entering $t$ connect vertices of $R$ to it.
	
	\item Every edge of $H'$ has at most $\log m-1$ edges that are parallel to it.
\end{itemize}
\end{definition}
 (We note that strictly speaking, a residual graph corresponding to a matching may contain edges entering $s$ and/or edges leaving $t$, but since such edges will never be used in any augmenting path, we can delete them from the residual network).

Whenever an algorithm is given a well-structured graph as an input, we assume that it is also given the vertices $s$ and $t$ and the corresponding partition $(L,R)$ of $V(H')$.
We are now ready to define the restricted $\SSSP$ problem.

\begin{definition}[Restricted \SSSP problem]
The input to the restricted \SSSP problem is a well-structured graph $G$ of size $m$, where $m$ is a given parameter, together with another parameter $0<\Delta\leq |V(G)|$. We denote $|V(G)|=n$ and  $\Lambda=\frac{8m\log m}{\Delta}$. For every edge $e\in E(G)$, we are also given a length $\ell(e)$, that is an integral power of $2$ between $1$ and $16\Lambda$.

The algorithm consists of at most $\Delta$ iterations. At the beginning of every iteration, the algorithm must compute a simple $s$-$t$ path $P$ in the current graph $G$ of length at most $8\Lambda$ (we sometimes informally say that the algorithm responds to a \shortpath query between $s$ and $t$). We assume that for each edge $e\in E(P)$, it is the shortest from among its parallel copies. Every edge of $E(P)$ is then deleted from $G$, and the iteration ends.

The algorithm is allowed, at any time, to halt and return FAIL, but it may only do so if the largest-cardinality collection $\qset$ of $s$-$t$ paths in the current graph $G$, such that the paths in $\qset$ are edge-disjoint, and every path has length at most  $\Lambda$, has  $|\qset|<\frac{\Delta}{256\log^2m}$. %In case the algorithm returns a path $P$, it should be done in time $\tilde O(|E(P)|)$, and this time is not counted towards the total update time of the algorithm. 
\end{definition}

% with length $\ell(e)>0$ on edges $e\in E(H)$ and a parameter $\Delta$. We denote by $n$ and $m$ the number of vertices and eddges, respectively in the initial graph $H$. The length $\ell(e)$ of every edge $e\in E(H)$ is an itegral power of $2$ between $1$ and $\frac{32n\log n}{\Delta}$. Graph $H$ undergoes an online sequence of edge deletions that arrive in batches. After each batch of edge deletions, we need to either return an $s$-$t$ path $P$ in the current graph, whose length is at most $1$; or to correctly certify that the following condition is violated in the current graph (the condition below is just a restatement of Condition \ref{stopping condition3}):

One of the main technical contributions of our paper is the proof of following theorem.

\begin{theorem}\label{thm: sssp main}
	There is a deterministic algorithm for the Restricted \SSSP problem, that, given a well-structured graph $G$ with $n$ vertices and parameters $\Delta$ and $m$, has running time $\otilde\left(\frac{n^{2.5}}{\sqrt{\Delta}}\right ) $.
	\end{theorem}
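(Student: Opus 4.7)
The plan is to follow the two-component blueprint sketched in the overview: (i) maintain a partition $\xset$ of $V(G)\setminus\{s,t\}$ into ``expander-like'' clusters by running an instance of the \maintaincluster problem on each cluster, and (ii) superimpose the \ATO framework of \cite{AlmostDAG2} on top of $\xset$, so that the contracted graph $\hat H=G_{|\xset}$ is nearly a DAG with respect to a maintained interval ordering $\sigma$, and invoke the decremental \SSSP algorithm for DAG-like graphs on $\hat H$. At initialization I would place $\xset=\{S,R,T\}$ with $S=\{s\}$, $T=\{t\}$, $R=V\setminus\{s,t\}$, intervals $I_S=\{1\}$, $I_R=\{2,\dots,n-1\}$, $I_T=\{n\}$, and start a \maintaincluster instance on $G[R]$ with an appropriately chosen distance parameter $d$. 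Whenever that instance produces a sparse cut $(A,B)$ of $G[X]$, I would split $X$ into the smaller side $Z$ and the residual, split $I_X$ into sub-intervals of the right sizes, and place $I_Z$ before or after $I_{X\setminus Z}$ according to which of $|E(A,B)|,|E(B,A)|$ was small (so that only the few edges on the sparse direction become right-to-left edges in $\sigma$), and launch a new \maintaincluster instance on $G[Z]$.

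To answer a \shortpath query between $s$ and $t$, I would first run the DAG-like \SSSP subroutine on $\hat H$ to obtain a short path $\hat P$ in the contracted graph, and then expand each super-edge of $\hat P$ inside its cluster $X$ by calling the corresponding \maintaincluster instance on the pair of vertices where $\hat P$ enters and exits $X$; concatenating the returned in-cluster subpaths with the original inter-cluster edges yields the required simple $s$-$t$ path of length at most $8\Lambda$ in $G$. Edges on the returned path are deleted from $G$, which in turn feeds edge deletions into the affected \maintaincluster instances and into the dynamic topological-order data structure. Failure is declared exactly when the DAG-like \SSSP algorithm reports no short path in $\hat H$: I would then argue that Condition \ref{stopping condition2} cannot hold, because otherwise the $\tilde\Omega(\Delta)$ edge-disjoint short $s$-$t$ paths in $G$ would project into $\hat H$ (after charging the in-cluster detours to the per-cluster distance guarantee $d_X$) to many short $s$-$t$ paths there, contradicting the DAG-like algorithm's certificate.

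For the running time I would charge three buckets. First, summing the \maintaincluster bound $\otilde(|E(X)|\cdot d+|V(X)|^2)$ over all clusters gives $\otilde(m\cdot d+n^2)$ across the whole partition, since $\sum_X|V(X)|^2\le n^2$ by convexity and each edge sits in at most one cluster at a time. Second, the DAG-like \SSSP subroutine on $\hat H$ contributes $\otilde(n^2)$ total update time plus, per query, cost roughly proportional to the length of the returned path. Third, the path-expansion per query costs $\otilde(d)$ per super-edge of $\hat P$. The key quantitative use of the restricted-\SSSP setting is the ``few right-to-left edges on some short path'' observation from the overview: since at every moment either Condition \ref{stopping condition2} fails (and the algorithm may stop), or $G$ contains $\tilde\Omega(\Delta)$ edge-disjoint short $s$-$t$ paths, the $O(n^2\phi)$ right-to-left edges created so far (where $\phi=\tilde\Theta(1/d)$ is the cut sparsity) are distributed among these paths; an averaging argument produces a short $s$-$t$ path crossing only $\otilde(n^2/(\Delta d))$ right-to-left edges, which is exactly the factor the DAG-like \SSSP machinery needs in order to respond in amortized time $\otilde(n^2/(\Delta d))+\otilde(d)$. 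Summed over the at most $\Delta$ queries, this yields $\otilde(n^2/d+\Delta d)$ for the query phase, and the choice $d=\otilde\bigl(n/\sqrt{\Delta}\bigr)$ balances everything at $\otilde(n^{2.5}/\sqrt\Delta)$, matching the target.

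The main obstacle I anticipate is twofold. First, one must verify that the \maintaincluster guarantees composed over many split events do not blow up: a careful recursive potential (charging the $|V(X)|^2$ term to the smaller side of every sparse cut $Z$, so that each vertex is charged $\otilde(1)$ times across all splits it participates in) is needed to keep $\sum_X|V(X)|^2$ at $\otilde(n^2)$ globally. Second, and more delicate, is engineering the stopping condition for the DAG-like subroutine so that ``FAIL'' can be certified by the inability to find a short $\hat H$-path, consistently with the averaging argument above; this requires picking $d$ and the approximation slack in the DAG-like algorithm so that any short $s$-$t$ path in $G$ has length at most $\Lambda$ when lifted and at most $8\Lambda$ in the worst case, while any lack of a short path in $\hat H$ rigorously rules out $\tfrac{\Delta}{256\log^2m}$ edge-disjoint short $s$-$t$ paths in $G$. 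Working out this chain of inequalities, and the bookkeeping so that edge deletions on returned paths are propagated correctly to every component of the data structure, is where the bulk of the proof effort will go.
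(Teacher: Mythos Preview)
Your architecture matches the paper's, and you correctly identify where the hard work lies; but three quantitative choices in your accounting would break the $\otilde(n^{2.5}/\sqrt\Delta)$ bound. First, the distance parameter must be \emph{scaled per cluster}: the paper sets $d_X=\tfrac{|X|}{n}\cdot d$, not a single global $d$. With a uniform $d$, the extra factor $\max\{1,\Delta d^2/|X|\}$ in \Cref{thm: main for expander tools} exceeds $1$ on every cluster with $|X|<\Delta d^2$, so your ``$\sum_X|V(X)|^2\le n^2$'' accounting for the \maintaincluster cost no longer suffices. The scaling is also what bounds the total in-cluster detour length: one needs $\sum_{X\text{ on }P} d_X\le\otilde(d)$, which follows because $\sum_{X\in\xset_i} d_X\le d$ at every level $i$ of the partition hierarchy. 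Second, \emph{long} edges (those with $\ell(e)\ge\Lambda/(64d\log m)$) must be removed from the cluster graph before running \maintaincluster, since \maintaincluster treats all edges as unit length: otherwise a $d_X$-hop in-cluster subpath could have length up to $16\Lambda\cdot d_X$, far past the $8\Lambda$ target. Long edges then enter the span analysis separately, contributing an $\otilde(nd)$ term to $\Gamma$ (at most $O(d\log m)$ of them lie on any length-$\Lambda$ path, each with span $\le n$).

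Third, the DAG-like \SSSP cost is not ``$\otilde(n^2)$ plus per-query amortized $\otilde(n^2/(\Delta d))+\otilde(d)$''. \Cref{thm: almost DAG routing} has \emph{total} update time $\otilde(n^2+\Gamma n)$, where $\Gamma$ is the promised bound on $\sum_{e\in P} w(e)$ for some short $s$-$t$ path; here $\Gamma=\otilde(n^2/(d\Delta)+nd)$. The $\Gamma n$ term is therefore $\otilde(n^3/(d\Delta)+n^2 d)$, and balancing it against the \maintaincluster cost $\otilde(md+n^2)$ forces $d=\sqrt{n/\Delta}$, not $n/\sqrt\Delta$. With your $d=n/\sqrt\Delta$, both $md$ (when $m=\Theta(n^2)$) and $n^2 d$ become $\otilde(n^3/\sqrt\Delta)$, a factor $\sqrt n$ too large.
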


We note that the running time of the algorithm from \Cref{thm: sssp main} includes both the time that is needed in order to maintain its data structures (total {\em update time}), and the time it takes to compute and return $s$-$t$ paths (total {\em query time}).

The proof of  \Cref{thm: one iteration main} immediately follows from \Cref{thm: sssp main}: we use the algorithm from \Cref{thm: sssp main} in order to implement the oracle for
\algmwu. In order to do so, we first construct a graph $\hat H$ from graph $H$, as described above. In every iteration, we use the algorithm for the restricted \SSSP problem in order to compute a simple $s$-$t$ path $P$ in graph $\hat H$ of length at most $8\Lambda$, which immediately defines a simple path in graph $H$ of length at most $1$. The doubling of the length of every edge on $P$ is implemented by deleting the shortest copy of each such edge from graph $\hat H$.
When the algorithm from \Cref{thm: sssp main} terminates, we are guaranteed that Condition \ref{stopping condition2} is violated in $\hat H$, and so Condition \ref{stopping condition} is violated in $H$.
 It is easy to verify that the total running time of the algorithm is bounded by $O(m\log m)$ plus the running time of the algorithm for the restricted \SSSP problem, which is, in turn, bounded by $\otilde\left(\frac{n^{2.5}}{\sqrt{\Delta}}\right ) $. Since $\Delta\leq n$, we can bound the total running time of the algorithm by $\otilde\left(\frac{n^{2.5}}{\sqrt{\Delta}}\right ) $.

In the remainder of this paper we focus on the proof of \Cref{thm: sssp main}.
At a high level, the proof of the theorem follows the framework of \cite{SCC} for decremental \SSSP in directed graphs. But since we only need to deal with a restricted special case of the problem, we obtain a faster running time, and also significantly simplify some parts of the algorithm and analysis.
In the following two sections, we present the two main tools that are used in the proof of \Cref{thm: sssp main}: the \ATO framework, together with \SSSP in ``DAG-like'' graphs, and expander-based techniques. We then provide the proof of \Cref{thm: sssp main} in \Cref{sec: SSSP alg}. 

\section{Overview of the \ATO framework}
\label{sec: ATO}

We now provide an overview of the Approximate Topological Order (\ATO) framework 
that was developed in \cite{AlmostDAG2} and later used in  \cite{SCC}.

Let $G=(V,E)$ be a directed $n$-vertex graph  with two special vertices $s$ and $t$, that undergoes edge deletions during a time interval $\tset$. 
Throughout, we refer to a collection $I\subseteq \set{1,\ldots,n}$ of consecutive integers as an \emph{interval}. The \emph{size} of the interval $I$, denoted by $|I|$, is the number of distinct integers in it.

An \ATO consists of the following components:

\begin{itemize}
\item a partition $\xset$ of the set $V$ of vertices of $G$, such that there is a set $S=\set{s}$ and $T=\set{t}$ in $\xset$; and

\item a map $\rho$, that maps every set $X\in \xset$ to an interval $I_X\subseteq \set{1,\ldots,n}$ of size $|X|$, such that all intervals in $\set{I_X\mid X\in \xset}$ are mutually disjoint. 
\end{itemize}

For convenience, we will denote an \ATO by $(\xset,\rho)$. 
Both the partition $\xset$ and the map $\rho$ evolve over time. We denote by $\xset\attime$ the partition $\xset$ at time $\tau$, and by $\rho\attime$ the map $\rho$ at time $\tau$. For every set $X\in \xset\attime$, we denote by $I^{\attime}_X$ the interval $I_X$ associated with $X$ at time $\tau$.

We start initially with the collection $\xset$ of vertex subsets containing three sets, $S=\set{s}$, $T=\set{t}$, and $J=V\setminus\set{s,t}$. We set their corresponding intervals to $I_S=\set{1}$, $I_T=\set{n}$, and $I_J=\set{2,\ldots,n-1}$.
The only type of changes that are allowed to the \ATO over time is the {\bf splitting} of the sets $X\in \xset$. In order to split a set $X\in \xset$, we select some partition $\Pi$ of $X$, and replace the set $X$ in $\xset$ with the sets in $\Pi$. Additionally, for every set $X'\in \Pi$, we select an interval $I_{X'}\subseteq I_X$ of size $|X'|$, such that the resulting intervals in $\set{I_{X'}\mid X'\in \Pi}$ are mutually disjoint, and we set $\rho(X')=I_{X'}$ for all $X'\in \Pi$. 

Notice that an \ATO $(X,\rho)$ naturally defines a left-to-right ordering of the intervals in $\set{I_{X}\mid X\in \xset}$: we  say that interval $I_X$ lies to the left of interval $I_{X'}$, and denote $I_X \prec I_{X'}$ iff every integer in interval $I_X$ is smaller than every integer in $I_{X'}$. We will also sometimes denote $X\prec X'$ in such a case. Notice also that, if we are given an ordering $\sigma=(X_1,X_2,\ldots,X_k)$ of the sets in $\xset$, then this ordering naturally defines the map $\rho$, where for all $1\leq i\leq k$, the first integer in $I_{X_i}$ is $\sum_{j=1}^{i-1}|X_j|+1$, and the last integer is $\sum_{j=1}^i|X_j|$. For convenience, we will denote the map $\rho$ associated with the ordering $\sigma$ by $\rho(\sigma)$.

Given a vertex $v\in V$, we will sometimes denote by $X^v$ the set of $\xset$ that contains $v$. Consider now some edge $e=(u,v)$ in graph $G$, and assume that $X^u\neq X^v$. We say that $e$ is a \emph{left-to-right} edge if $X^u\prec X^v$, and we say that it is a \emph{right-to-left} edge otherwise.

Lastly, we denote by $H=G_{|\xset}$ the \emph{contracted graph} associated with $G$ and $\xset$ -- the graph that is obtained from $G$ by contracting every set $X\in \xset$ into a vertex $v_X$, that we may call a \emph{supernode}, to distinguish it from vertices of $G$. We keep parallel edges but discard loops.

\paragraph{Intuition.}
For intuition, we could let $\xset$ be the set of strongly connected components of the graph $G$. The corresponding contracted graph $H=G_{|\xset}$ is then a DAG, so we can define a topological order of its vertices. This order, in turn, defines an ordering $\sigma$ of the sets in $\xset$, and we can then let the corresponding \ATO be $(\xset,\rho(\sigma))$.
 Notice that in this setting, all edges of $G$ that connect different sets in $\xset$ are left-to-right edges. However, we will not follow this scheme. Instead, like in \cite{SCC}, we will ensure that every set $X\in \xset$ corresponds to some expander-like graph, and we will allow both left-to-right and right-to-left edges.  It would still be useful to ensure that the corresponding contracted graph $H$ is close to being a DAG, since there are efficient algorithms for decremental \SSSP on DAG's and DAG-like graphs. Therefore, we will try to limit both the number of right-to-left edges, and their ``span'', that we define below.

\paragraph{Skip and Span of Edges.}
For every edge $e=(u,v)\in E(G)$ we define two values: the \emph{skip} of edge $e$, denoted by $\skipp(e)$, and the \emph{span} of $e$, denoted by $\spann(e)$. Intuitively, the skip value of an edge $(u,v)$ will represent the ``distance'' between the intervals $I_{X^u}$ and $I_{X^v}$. Non-zero span values are only assigned to right-to-left edges. The span value also represents the distance between the corresponding intervals, but this distance is defined differently. We now formally define the span and skip values for each edge, and provide an intuition for them below.

Consider some edge $e=(x,y)\in E(G)$, and let $X,Y\in \xset$ be the sets containing $x$ and $y$, respectively. If $X=Y$, then $\skipp(e)=\spann(e)=0$. Assume now that $X\prec Y$, that is $e$ is a left-to-right edge. In this case, we set $\spann(e)=0$, and we let $\skipp(e)$ be the distance between the last endpoint of $I_X$ and the first endpoint of $I_Y$. In other words, if the largest integer in $I_X$ is $z_X$, and the smallest integer in $I_Y$ is $a_Y$, then $\skipp(e)=a_Y-z_X$. Note that, as the algorithm progresses, sets $X$ and $Y$ may be partitioned into smaller subsets. It is then possible that $\skipp(e)$ may grow over time, but it may never decrease.
%We also set $\spann(e)=0$.

Lastly, assume that $Y\prec X$, so $e$ is a right-to-left edge. In this case, we set $\skipp(e)$ to be the distance between the right endpoint of $I_Y$ and the left endpoint of $I_X$, and we let $\spann(e)$ be the distance between the left endpoint of $I_Y$ and the right endpoint of $I_X$. In other words, if $a_X$ and $z_X$ are the smallest and the largest integers in $X$ respectively, and similarly $a_Y$ and $z_Y$ are the smallest and the largest integers in $I_Y$ respectively, then $\skipp(e)=a_X-z_Y$ and $\spann(e)=z_X-a_Y$. Notice that, as the algorithm progresses, and sets $X$ and $Y$ are partitioned into smaller subsets, the value $\spann(e)$ may only decrease, while the value $\skipp(e)$ may only grow. Moreover, $\skipp(e)\leq \spann(e)$ always holds.

For intuition, consider again the setting where $\xset$ represents the  strongly connected components of $G$, and $\rho=\rho(\sigma)$, where $\sigma$ is an ordering of the sets in $\xset$ corresponding to  a topological ordering of the vertices of $H=G_{|\xset}$. Then for any path $P$ in $G$, $\sum_{e\in E(P)}\skipp(e)\leq n$ must hold. This is because the path must traverse the sets of $\xset$ in their left-to-right order. Consider now inserting a single right-to-left edge $e=(x,y)$ into $G$, so  $x\in X$, $y\in Y$, and $Y\prec X$. In this case, path $P$ will generally traverse the sets in $\xset$  from left to right, but it may use the edge $e$ in order to ``skip'' back. It is not hard to see that $\sum_{e'\in E(P)}\skipp(e')\leq n+\skipp(e)+\spann(e)$
now holds, since, by traversing the edge $e$, path $P$ is `set back' by at most $\spann(e)$ units. In other words, path $P$ traverses the sets in $\xset$ in their left-to-right order, except when it uses the edge $e$ to ``skip'' to the left. The length of the interval it ``skips'' over is bounded by $\spann(e)$, so this skip adds at most $\skipp(e)+\spann(e)\leq 2\spann(e)$ to $\sum_{e'\in E(P)}\skipp(e')$.

Using the same intuitive reasoning, if we allow graph $G$ to contain an arbitrary number of right-to-left edges, then for any simple path $P$, $\sum_{e\in E(P)}\skipp(e)\leq n+\sum_{e\in E(G)}2\spann(e)$ holds. However, if we are given any collection $\pset$ of edge-disjoint paths in $G$, then there must be some path $P\in \pset$ with $\sum_{e\in E(P)}\skipp(e)\leq n+\frac{\sum_{e\in E(G)}2\spann(e)}{|\pset|}$, since every edge $e\in E(G)$ may contribute the value $2\spann(e)$ to at most one path in $\pset$. We will use this fact extensively.

\paragraph{Overview of the Algorithm.}
Like the algorithm of \cite{SCC} for decremental \SSSP,
our algorithm relies on two main technical tools. The first tool allows us to maintain the partition $\xset$ of $V(G)$ over time. Intuitively, we will ensure that, for every set $X\in \xset$, the corresponding induced subgraph $G[X]$ of $G$ is in some sense expander-like. In particular, we will be able to support short-path queries between pairs of vertices of $X$ efficiently. When $G[X]$ is no longer an expander-like graph, we will compute a sparse cut $(A,B)$ in this graph, and we will replace $X$ with $A$ and $B$ in $\xset$. Some of the edges connecting the sets $A$ and $B$ may become right-to-left edges with respect to the current \ATO $(\xset,\rho)$ that we maintain. We will control the number of such edges and their $\spann$ values by ensuring that the cut is $\phi$-sparse, for an appropriately chosen parameter $\phi$. This part is related to extensive past work on embedding expanders, maintaining expander embeddings in decremental graphs, and decremental \APSP in expanders. However, due to the specifics of our setting, we obtain algorithms that are both significantly simpler than those used in past work, and lead to faster running times.

Given a dynamically evolving \ATO $(\xset,\rho)$ for graph $G$, we can now construct the corresponding contracted graph $H=G_{|\xset}$. 
For convenience, the supernodes of $H$ representing the sets $S=\set{s}$ and $T=\set{t}$ of $\xset$ are denoted by $s$ and $t$, respectively.
Notice that, as the time progresses, graph $H$ undergoes, in addition to edge-deletion, another type of updates, that we call \emph{vertex-splitting}: whenever some set $X\in \xset$ is split into a collection $\Pi$ of smaller subsets, we need to insert supernodes corresponding to the sets $Y\in \Pi$ into $H$. We will always identify one of the sets in $\Pi$ with $X$ itself (so the supernode $v_X$ corresponding to $X$ may continue to serve as the supernode representing this one new set). We will say that the new supernodes corresponding to the remaining sets in $\Pi$ were \emph{split off} from $v_X$. Consider now any edge $e=(x,y)\in E(G)$, and assume that $x\in X$ and $y\in Y$ for some $X,Y\in \xset$ with $X\neq Y$. The length of the edge $e$ in graph $H$ remains the same, namely, $\ell(e)$. But we will also set the \emph{weight} of the edge $e$ to be the smallest integral power of $2$ that is at least $\skipp(e)$. As observed already, as the algorithm progresses, the value $\skipp(e)$ may grow. In order to avoid inserting edges into $H$ overtime, we will create a number of parallel edges with different weight values corresponding to every edge of $G$ whose endpoints lie in different sets of $\xset$. Specifically, let $e=(x,y)$ be any  edge, with $x\in X$, $y\in Y$, and $X\neq  Y$. For every integer $0\leq i\leq \ceil{\log n}$, such that $2^i\geq \skipp(e)$, we add an edge $(v_X,v_Y)$ to $H$, with length $\ell(e)$ and weight $2^i$. It is easy to verify that this construction ensures that, for every vertex $v$ of $H$ and integer $0\leq i\leq \ceil{\log n}$, the number of vertices in the set $\set{u\in V(H)\mid (v,u)\in E(H)\mbox{ and } w(v,u)= 2^i}$ is bounded by $2^{i+2}$. Additionally, by carefully controlling the total span of all right-to-left edges in graph $G$, and using the fact that throughout the algorithm $G$ contains a large number of edge-disjoint $s$-$t$ paths, we will ensure that there is always a short $s$-$t$ path $P$ in $H$, such that $\sum_{e\in E(P)}w(e)$ is relatively small. These two properties are crucial for obtaining a fast algorithm for computing short $s$-$t$ paths in $H$, as it undergoes updates. The \ATO framework, combined with an algorithm for maintaining expander-like graphs, can now be viewed as reducing our problem to the problem of repeatedly computing short $s$-$t$ paths in graph $H$, as it undergoes the updates that we have outlined above. We abstract this problem, that we call \DLSSSP, in \Cref{sec: SSSP in almost DAGS}, and provide an algorithm for it, that is essentially identical to the algorithm of \cite{AlmostDAG2}, and which was also used in exactly the same fashion by \cite{SCC}.

Overall, our algorithm for restricted \SSSP departs from the algorithm of \cite{SCC} for decremental \SSSP in two ways. First, we exploit the fact that $H$ contains a large number of short $s$-$t$ paths, in order to obtain better bounds on the running time of the algorithm of \cite{AlmostDAG2}  for   \DLSSSP. Second, our more relaxed setting allow us to obtain significantly simpler and faster algorithms for maintaining expander-like graphs, than, for example, those presented in \cite{SCC}. In particular, we do not need to use expander pruning, the recursive algorithm for \APSP in expanders, and the recursive algorithm for the cut player in the cut-matching game. We provide simple implementation for the latter two algorithms.

We next describe the two main technical tools that we use: an algorithm for \DLSSSP and algorithms for maintaining expander-like graphs in the following two sections. We then complete the proof of \Cref{thm: sssp main}, by providing an efficient algorithm for the Restricted \SSSP problem, that exploits both these tools. 
\section{SSSP in DAG-like Graphs}
\label{sec: SSSP in almost DAGS}

In this section we define a problem that can be thought of as an abstraction for the types of \SSSP instances that we obtain via \ATO. We call this problem \DLSSSP, and we provide an algorithm for this problem, that is essentially identical to the algorithm of \cite{AlmostDAG2}.

The input to the \DLSSSP problem is a directed graph $G=(V,E)$, with two designated vertices $s$ and $t$, a distance parameter $d>0$, a precision parameter $0<\eps\leq 1/8$, such that $1/\eps$ is an integer, and another parameter $\Gamma>0$. 
The graph $G$ may have parallel edges, but no self-loops. Every edge $e\in E$ of $G$ is associated with an integral \emph{length} $\ell(e)>0$, and an additional \emph{weight} $w(e)$, that must be an integral power of $2$ between $1$ and $2^{\ceil{\log n}}$. The graph undergoes the following two types of updates:

\begin{itemize}
	\item {\bf Edge deletion:} given an edge $e\in E$, delete $e$ from $G$; and
	
	\item {\bf Vertex splitting:} in this operation, we are given as input a vertex $v\not\in\set{s,t}$ in the current graph, and new vertices $u_1,\ldots,u_k$ that need to be inserted into $G$; we may sometimes say that vertices $u_1,\ldots,u_k$ are \emph{split-off from $v$}. In addition to vertex $v$ and vertices $u_1,\ldots,u_k$, we are given a new collection $E'$ of edges that must be inserted into $G$, and for every edge $e\in E'$, we are given an integral length $\ell(e)>0$ and a weight $w(e)$ that must be an integral power of $2$ between $1$ and $2^{\ceil{\log n}}$. For each such edge $e\in E'$, either both endpoints of $e$ lie in $\set{u_1,\ldots,u_k, v}$; or one endpoint lies in $\set{u_1,\ldots,u_k}$, and the other in $V(G)\setminus\set{v,u_1,\ldots,u_k}$. In the latter case, if $e=(u_i,x)$, then edge $e'=(v,x)$ must currently lie in $G$, and both $w(e')=w(e)$ and $\ell(e')=\ell(e)$ must hold. Similarly, if $e=(x,u_i)$, then edge $e'=(x,v)$ must currently lie in $G$, and both $w(e')=w(e)$ and $\ell(e')=\ell(e)$ must hold.
\end{itemize}

We let $\tset$ be the \emph{time horizon} of the algorithm -- that is, the time interval during which the graph $G$ undergoes all updates. 
We denote by $n$ and by $m$ the total number of vertices and edges, respectively, that were ever present in $G$ during $\tset$. We assume w.l.o.g. that for every vertex $v$ that was ever present in $G$, some edge was incident to $v$ at some time during the update sequence, so $n\leq O(m)$.

For every vertex $v\in V(G)$ and integer $0\leq i\leq \ceil{\log n}$, we define a set:
 $$U_i(v)=\set{u\in V(G)\mid \exists e=(v,u)\in E(G)\mbox{ with } w(e)=2^i}.$$ 
 Note that as the algorithm progresses, each such set $U_i(v)$ may change, with vertices both leaving and joining the set. We require that
the following property must hold at all times $\tau\in \tset$:

\begin{properties}{P}
	\item For every vertex $v\in V(G)\setminus\set{t}$, for every integer $0\leq i\leq \ceil{\log n}$, $|U_i(v)|\leq 2^{i+2}$.  \label{prop: few close neighbors}	
\end{properties}

As graph $G$ undergoes edge-deletion and vertex-splitting updates, at any time the algorithm may be asked a query $\pquery$. It is then required to return a {\bf simple} path $P$ connecting $s$ to $t$, whose length $\sum_{e\in E(P)}\ell(e)$ is bounded by $(1+10\eps)d$, in time $O(|E(P)|)$.

Lastly, the algorithm may return FAIL at any time and halt. It may only do so if the current graph $G$ does not contain any $s$-$t$ path $P$  with $\sum_{e\in E(P)}\ell(e)\leq d$, and $\sum_{e\in E(P)}w(e)\leq \Gamma$.

The following theorem is one of the main technical tools that we use; it follows almost immediately from the results of \cite{AlmostDAG2}. Specifically, Theorem 5.1 of \cite{AlmostDAG2} provides a similar result with a slightly different setting of parameters, but their arguments can be easily extended to prove the theorem below. We provide the proof of the theorem for completeness in Section \ref{subsec: proof of DAG-like SSSP} of Appendix.

\begin{theorem}\label{thm: almost DAG routing}
	There is a deterministic algorithm for the \DLSSSP problem with total update time $O\left(\frac{n^2+m+\Gamma\cdot n}{\eps^2}\right)$.
\end{theorem}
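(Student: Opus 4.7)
The plan is to implement a variant of the Even-Shiloach tree tailored to the DAG-like weighted graph here, following the strategy behind Theorem 5.1 of \cite{AlmostDAG2}. For every vertex $v$ I would maintain an approximate distance estimate $\td(v)\ge \dist_G(s,v)$ that is always a non-negative integer multiple of a rounding unit $\eta=\lceil \eps d/n\rceil$, or the sentinel $\infty$ when $v$ is deemed too far. Since $\td(v)$ lies in $\{0,\eta,2\eta,\ldots,d\}\cup\{\infty\}$, it can rise only $O(n/\eps)$ times over the time horizon. I would initialize the estimates by a Dijkstra computation from $s$ up to distance $d$ in the initial graph (rounding each computed distance up to the next multiple of $\eta$), and enforce throughout the algorithm the invariant $\td(v)=\bigl\lceil \min_{(u,v)\in E(G)}\{\td(u)+\ell(u,v)\}/\eta\bigr\rceil\cdot\eta$ truncated at the $\infty$-threshold.

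Upon an edge deletion or the loss of a currently tight in-edge, I would enqueue $v$ for re-examination: rescan its incoming edges, recompute the rounded minimum, and, if $\td(v)$ strictly increased, propagate re-examination to its out-neighbors in the usual ES-tree manner. A vertex-splitting update that produces $u_1,\ldots,u_k$ out of $v$ is handled by setting $\td(u_i)\gets \td(v)$ for each $i$ (valid because by the problem specification each $u_i$'s external incident edges duplicate those at $v$ with identical lengths and weights), inserting the edges of $E'$, and then triggering re-examination at every endpoint whose incident edge set changed. To answer a \pquery\ I would run a DFS from $s$ that at each vertex $u$ only traverses out-edges $(u,v)$ satisfying the slack-admissibility test $\td(v)\le \td(u)+\ell(u,v)+2\eta$. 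As long as $\td(t)\le d$, at least one admissible edge is available at every intermediate vertex, so the DFS reaches $t$ along a simple path whose length is at most $\td(t)+2n\eta \le (1+10\eps)d$, examining only $O(|E(P)|)$ edges. If $\td(t)>d$ at query time, the invariant on $\td$ certifies that no $s$-$t$ path of length at most $d$ currently exists in $G$, so returning FAIL is safe.

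The main obstacle is the runtime analysis, since a naive implementation would cost $\Theta(md)$. The key is to combine two effects. First, each $\td(v)$ takes only $O(n/\eps)$ distinct values, so the number of triggered re-examinations of $v$ is bounded. Second, when $v$ is re-examined one should not touch all of its incident edges: I would bucket the in-edges of each vertex by weight class and scan only the classes whose minimum could have changed. Here property \ref{prop: few close neighbors} is crucial: in each weight class $2^i$, every vertex has at most $2^{i+2}$ neighbors, so each scan in that class costs $O(2^i)$. The delicate part is the amortization that turns the weight budget into the $\Gamma n/\eps^2$ term: intuitively, for the algorithm's purposes we only need $\td$-values certified by \emph{admissible} $s$-$t$ paths of total weight at most $\Gamma$, so the number of ``useful'' updates a weight-$2^i$ edge can force is charged at rate $O(2^i/\Gamma)$, and summing across weight classes yields a telescoping bound. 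Combined with the unavoidable $O(n/\eps)$ estimate rises at each of the $n$ vertices (contributing the $n^2/\eps^2$ term) and the $O(m)$ initialization/bookkeeping cost (contributing the $m/\eps^2$ term), this yields the claimed total update time; verifying that the slack-admissible DFS indeed runs in $O(|E(P)|)$ without wasted backtracking, and that FAIL is returned only when no $\Gamma$-bounded short $s$-$t$ path exists, then completes the proof along the lines of the corresponding argument in \cite{AlmostDAG2}.
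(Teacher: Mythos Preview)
Your proposal is missing the central idea that makes the $\Gamma n/\eps^2$ bound achievable. The paper's proof does \emph{not} maintain estimates of $\dist_G(s,v)$ with respect to the original lengths $\ell(\cdot)$; instead it defines a \emph{modified} edge length $\ell'(e)=\ell(e)+\eps\cdot\lceil \tfrac{d}{\Gamma\log n}\cdot w(e)\rceil$ and runs an ES-tree with respect to $\ell'$. The point is that a weight-$2^i$ edge now has $\ell'$-length at least $\eps\cdot\tfrac{d\cdot 2^i}{\Gamma\log n}$, so a vertex $v$ need only propagate its estimate across a weight-$2^i$ out-edge when $\tdist(v)$ crosses an integer multiple of $\eps^2\cdot\lceil\tfrac{d\cdot 2^i}{\Gamma\log n}\rceil$; since $\tdist(v)\le 2d$, this happens at most $O(\Gamma\log n/(\eps^2 2^i))$ times, and Property~\ref{prop: few close neighbors} bounds the number of out-neighbors in class $i$ by $2^{i+2}$, so the product telescopes to $O(\Gamma\log n/\eps^2)$ per vertex per class. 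Your scheme rounds $\td(v)$ to a \emph{weight-independent} granularity $\eta=\lceil\eps d/n\rceil$, which gives no leverage whatsoever for limiting propagations along high-weight edges; the sentence ``the number of `useful' updates a weight-$2^i$ edge can force is charged at rate $O(2^i/\Gamma)$'' has no mechanism behind it, since your $\td$-values are blind to weight and cannot distinguish $\Gamma$-bounded paths from others.

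Two secondary issues compound this. First, Property~\ref{prop: few close neighbors} bounds the set $U_i(v)$ of \emph{out}-neighbors reached by weight-$2^i$ edges, not in-neighbors; your plan to ``rescan incoming edges'' bucketed by weight cannot invoke it. The paper accordingly uses a push model: each vertex $v$ maintains outdated copies $\tdist_u(v)$ at its out-neighbors and refreshes them lazily, while each $u$ reads the minimum over its in-edges from a heap keyed on those copies. Second, your \pquery\ procedure (a DFS along slack-admissible edges) is not obviously $O(|E(P)|)$: admissibility of $(u,v)$ does not guarantee $v$ lies on a short path to $t$, so backtracking is possible. The paper sidesteps this entirely by maintaining an explicit tree $T$ rooted at $s$ (each vertex stores the in-edge realizing the heap minimum) and answering a query by tracing the unique $s$--$t$ path in $T$.
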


\section{Tools for Expander-Like Graphs}
\label{sec: expander tools}

\paragraph{Well-Structured Cuts.}
Suppose we are given a {\bf simple} unweighted directed bipartite graph $H=(L,R,E)$, such that every vertex in $R$ has out-degree at most $1$ and every vertex in $L$ has in-degree at most $1$. We call the edges that are directed from $R$ to $L$ \emph{special edges}, and the remaining edges are \emph{regular} edges.

 We say that a cut $(A,B)$ in $H$ is \emph{well-structured} iff all edges of $E_H(A,B)$ are special edges.

\begin{observation}\label{obs: sparse cut to structured}
	Suppose we are given a
	simple directed   bipartite graph $H=(L,R,E)$ with no isolated vertices, such that every vertex in $R$ has out-degree at most $1$ and every vertex in $L$ has in-degree at most $1$, together with and a cut $(A,B)$ in $H$ of sparsity at most $\phi\leq \frac{1}{4}$. Then there is a well-structured cut $(A',B')$ in $H$ of sparsity at most $2\phi$, such that $A'\subseteq A$,  $|A'|\geq (1-\phi)|A|$, and $\vol_H(A')\geq \Omega(\vol_H(A))$ hold. 
	 
	Moreover, there is an algorithm, that, given cut $(A,B)$, computes such a cut  $(A',B')$, in time $O(\min\set{\vol_H(A),\vol_H(B)})$.
\end{observation}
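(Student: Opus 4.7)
The plan is to construct $A'$ by deleting from $A$ the set
\[
X = \{u \in L \cap A : \exists\, v \in R \cap B \text{ with } (u,v) \in E(H)\}
\]
of vertices witnessing the regular cut edges leaving $A$, and to set $A' = A \setminus X$ and $B' = B \cup X$. Well-structuredness is then immediate: if a regular edge $(u,v)$ lay in $E_H(A', B')$, then $u \in L \cap A'$ and $v \in R$; since only $L$-vertices have moved, $v \in B' \cap R = B \cap R$, which forces $u \in X$, contradicting $u \in A \setminus X$. The cardinality bound is equally direct: each $u \in X$ is charged to at least one regular edge of $E_H(A, B)$, so $|X| \leq |E_H(A, B)| \leq \phi \min(|A|, |B|) \leq \phi|A|$, giving $|A'| \geq (1-\phi)|A|$.

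For the sparsity I would classify every edge of $E_H(A', B')$ as either (i) a surviving old cut edge, all of which must be special because every regular cut edge loses its tail to $B'$, or (ii) a brand-new cut edge that enters some $u \in X$; since each $u \in L$ has in-degree at most one in $H$, there are at most $|X|$ new edges, so $|E_H(A', B')| \leq |E_H(A, B)| \leq \phi \min(|A|, |B|)$. Combined with $\min(|A'|,|B'|) \geq (1-\phi)\min(|A|,|B|)$, this yields sparsity at most $\phi/(1-\phi) \leq 2\phi$ whenever $\phi \leq 1/4$. The volume claim is the main technical step, and I would handle it by case analysis on $\vol_H(R \cap A)$. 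If $\vol_H(R \cap A) \geq \vol_H(A)/3$, we are done since $A' \supseteq R \cap A$ already gives $\vol_H(A') \geq \vol_H(A)/3$. Otherwise $\vol_H(R \cap A) < \vol_H(A)/3$ and the task reduces to bounding $\vol_H(X)$. Writing $\deg(u) \leq 1 + a_u + b_u$ for each $u \in X$, where $a_u$ counts out-edges into $R \cap A$ and $b_u$ counts regular cut edges into $R \cap B$, we have $|X| \leq \phi|A|$, $\sum_{u \in X} b_u \leq \phi|A|$, and crucially $\sum_{u \in X} a_u \leq \vol_H(R \cap A) < \vol_H(A)/3$, because each such internal edge contributes to the degree of its $R$-endpoint. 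Using $|A| \leq \vol_H(A)$ (no isolated vertices) and $\phi \leq 1/4$, this yields $\vol_H(X) \leq 2\phi\vol_H(A) + \vol_H(A)/3 \leq \tfrac{5}{6}\vol_H(A)$, hence $\vol_H(A') \geq \vol_H(A)/6 = \Omega(\vol_H(A))$.

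For the $O(\min\{\vol_H(A), \vol_H(B)\})$ running time, the key observation is that $X$ can be discovered from either side of the cut: scanning from $A$, one examines each $u \in L \cap A$ with its out-neighbors and tests membership in $R \cap B$; scanning from $B$, one examines each $v \in R \cap B$ with its in-neighbors and tests membership in $L \cap A$. I would run both procedures in lockstep and halt once the faster one finishes, preceded by a linear-time preprocessing pass on the smaller side to set up constant-time membership oracles for $A$ and $B$. The one genuinely subtle point in the whole argument is the volume calculation above: it critically uses the degree asymmetry of $H$ (every $L$-vertex has at most one incoming edge) to bound the contribution of the internal out-edges from $X$ by $\vol_H(R \cap A)$ rather than by an expression governed by degrees in $L \cap A$; without this asymmetry a single high-degree $L$-vertex in $X$ could wipe out the volume of $A'$.
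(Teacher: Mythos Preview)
Your construction is identical to the paper's: move every vertex of $L\cap A$ that has a regular out-edge into $R\cap B$ over to $B$. The arguments for well-structuredness, the lower bound $|A'|\ge(1-\phi)|A|$, the bound $|E_H(A',B')|\le |E_H(A,B)|$, and the running time (scan from the smaller-volume side) all match the paper essentially verbatim.

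The one place you diverge is the volume bound. The paper's argument rests on a single structural fact: every edge of $E_H(A)$ keeps at least one endpoint in $A'$, because an edge of the bipartite graph cannot have both endpoints in $L$, and only $L$-vertices are moved. From this and $|A'|\ge\Omega(|A|)$ the bound $\vol_H(A')\ge\Omega(\vol_H(A))$ drops out directly. Your route---splitting on whether $\vol_H(R\cap A)$ is large, and in the hard case bounding $\vol_H(X)$ via $\deg(u)\le 1+a_u+b_u$ and charging the $a_u$ terms to $\vol_H(R\cap A)$---is correct and gives explicit constants, but it is a slightly longer way to exploit the same degree asymmetry. Either argument is fine.
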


\begin{proof}
	We start with cut $(A,B)$ and inspect every vertex in $A$. If $v$ is any such vertex, and there is an edge $(v,u)$ in $E_H(A,B)$ that is a regular edge, we move $v$ from $A$ to $B$. Note that the only new edge that this move introduces into the cut $(A,B)$ is a special edge that enters $v$, if such an edge exists. Vertex $v$ is removed from $A$, and we say that edge $(v,u)$ is \emph{responsible} for this move. Notice that every regular edge of $E_H(A,B)$ is responsible for moving at most $1$ vertex from $A$ to $B$.
	
	Let $(A',B')$ denote the final cut obtained at the end of this procedure. For convenience, denote $z=\min\set{|A|,|B|}$. It is easy to verify that $|B'|\geq z$ continues to hold, and $|A'|\geq |A|-|E_H(A,B)|\geq |A|-\phi\cdot z\geq \frac{|A|}{2}$. Therefore, $\min\set{|A'|,|B'|}\geq \frac{\min\set{|A|,|B|}}{2}$.
	From the above discussion, $|A'|\geq |A|-\phi\cdot z\geq (1-\phi)|A|$ also holds.
	
	 Notice also that $|E_H(A',B')|\leq |E_H(A,B)|$, since every regular edge $e\in E(A,B)$ may cause at most one additional special edge to be added to the cut. Altogether, we get that:
	
	\[\frac{|E_H(A',B'|)}{\min\set{|A'|,|B'|}}\leq \frac{2|E_H(A,B)|}{ \min\set{|A|,|B|}}\leq 2\phi.\]

	Notice that for every edge $e\in E_H(A)$, at least one endpoint of $e$ must remain in $A'$. Indeed, consider any such edge $e=(x,y)$. The only way that $x$ is moved from $A$ to $B$ is if there is a regular edge $(x,y')\in E_H(A,B)$, which, in turn may only happen if $x\in L$. Using the same reasoning, $y$ may only be moved to $B$ if $y\in L$. But since $x$ and $y$ are connected by an edge, it is impossible that both vertices lie in $L$. 
	Let $A_1\subseteq A\setminus A'$ be the set of vertices $v\in A$, such that $v$ was moved to $B'$ and $v$ had no neighbors in $A$, and let $A_2=A\setminus A_1$. Then $|A_1|\leq |E_H(A,B)|\leq \phi\cdot |A|\leq \frac{|A|}{4}$, so in particular, $|A_1|\leq O(|A_2|)$.
	Since $|E_H(A,B)|<|A|$, we get that:
	
	\[ \vol_H(A')\geq  \Omega(|E_H(A)|+|A_2|) \geq  \Omega(|E_H(A)|+|A|)\geq  \Omega(|E_H(A)|+|E_H(A,B)|)\geq 
	\Omega(\vol_H(A)).\]
	
	(we have used the fact that $H$ has no isolated vertices).

	The running time of the algorithm is bounded by $O(\vol_H(A))$. If $\vol_H(B)<\vol_H(A)$, then we can instead scan all vertices of $B$, and for each such vertex $u$ try to identify a regular edge $(v,u)\in E(A,B)$. In this case, the running time is bounded by $O(\vol_H(B))$. Notice that we can establish whether $\vol_H(A)\leq \vol_H(B)$ in time $O\left(\min\set{\vol_H(A),\vol_H(B)}\right )$, and so the total running time of the algorithm is bounded by $O\left(\min\set{\vol_H(A),\vol_H(B)}\right)$.
\end{proof}

\subsection{\maintaincluster Problem}

We now define a new problem, called \maintaincluster problem. Intuitively, the goal in this problem is to maintain an expander-like subgraph of a given input graph, and to support short-path queries between pairs of its vertices.

\paragraph{Definition of \maintaincluster Problem.}
The input to the problem is a 
simple  directed bipartite graph $H=(L,R,E)$, such that every vertex in $R$ has out-degree at most $1$ and every vertex in $L$ has in-degree at most $1$, together with integral parameters $d^*>0$ and $\Delta>0$. 
%We are also given a parameter $m^*$ (the number of edges in the original graph $G$), with $|E|\leq m^*$. 
All edge lengths in $H$ are unit. 
%The edges of $E(L,R)$ are called \emph{regular}, and the edges of $E(R,L)$ are called \emph{special}. A cut $(X,Y)$ in $H$ is \emph{well-structured} iff all edges of $E(X,Y)$ are special. 
The algorithm consists of at most $\Delta$ iterations. At the beginning of every iteration, the algorithm is given some pair $x,y\in V(H)$ of vertices, and it must return a {\bf simple} path $P$ connecting $x$ to $y$ in $H$, whose length is at most $d^*$. After that, some edges of $P$ may be deleted from $H$. %If some edge of $H$ participated in $\ceil{\log m^*}$  of the paths that the algorithm returned, then such an edge is always deleted from $H$. 
At any time, the algorithm may provide a well-structured cut $(X,Y)$ in $H$, whose sparsity is at most $O\left(\frac{\log^6n}{d^*} \right )$. If $|X|\leq |Y|$, then the cut must be provided by listing the vertices of $X$, and otherwise it must be provided by listing the vertices of $Y$. After that the vertices of the smaller side of the cut are deleted from $H$. In other words, if $|X|\leq |Y|$, then we continue with the graph $H[Y]$, and otherwise we continue with the graph $H[X]$. Once the number of vertices in $H$ falls below half the original number, the algorithm halts, even if fewer than $\Delta$ iterations occurred.

We emphasize that the only updates that graph $H$ undergoes are the deletion of some edges on the paths that the algorithm returns in response to queries; we think of these updates as performed by the adversary. Additionally, whenever the algorithm produces a sparse well-structured cut $(X,Y)$, it deletes the vertices of the smaller of the two sets from $H$. We think of this type of update as being performed by the algorithm itself. These are the only updates that graph $H$ may undergo.

Below is our main theorem for the \maintaincluster problem; its proof is deferred to \Cref{sec: expander algs}.

\begin{theorem}\label{thm: main for expander tools}
	There is an algorithm for the \maintaincluster problem, that, given an $n$-vertex graph $H$ as in the problem definition and parameters $c^*\log^6n\leq d^*\leq  n$, and $1\leq \Delta\leq n$, where $c^*$ is a large enough constant, has running time at most: \[\otilde(m\cdot d^*+n^2)\cdot  \max\set{1,\frac{\Delta\cdot (d^*)^2}{n}}, \]
	
	where $m$ is the number of edges in $H$ at the beginning of the algorithm.
\end{theorem}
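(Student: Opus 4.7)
My plan for \Cref{thm: main for expander tools} follows the framework sketched earlier in the paper: embed an $\alpha_0$-expander $W$ on $V(H)$ into $H$ via the cut-matching game, then answer short-path queries using two Even-Shiloach trees (forward and backward) in $H$. The algorithm will proceed in \emph{phases}. Inside a phase, queries are served by the currently maintained ES-trees; a phase ends either because the adversary's deletions have degraded the embedding enough to force a full rebuild, or because a failure during embedding or querying can be locally converted into a well-structured sparse cut in $H$ (which is then output, deleting the smaller side from $H$, as the problem allows).

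\textbf{Embedding and conversion to sparse cuts.} I would execute $O(\log^2 n)$ rounds of the cut-matching game, producing $W$ with congestion $O(\log^2 n)$ and dilation $O(d^*)$. The cut player is implemented by a deliberately simple (and somewhat slow) routine running in $\tilde O(n^2)$ per round; the matching player uses a multi-source ES-tree in $H$ of depth $d^*$ (total cost $\tilde O(m d^*)$ per round, via \Cref{thm: directed weighted ESTree}) to route a large matching between the two sides of the cut-player's bipartition, greedily harvesting edge-disjoint short paths. If in some round too few routable pairs are found, the un-matched vertices together with their ES-tree ball neighborhoods yield, via \Cref{obs: ball growing} and \Cref{obs: from many sparse to one balanced}, a balanced cut in $H$ of sparsity $O(\log^6 n / d^*)$; I convert it to a well-structured cut of the same sparsity (up to a constant factor) using \Cref{obs: sparse cut to structured}, output it, and proceed on the smaller graph.

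\textbf{Query answering.} Once $W$ is embedded, $H$ itself inherits expansion $\Omega(1/\log^2 n)$ (since each sparse cut in $H$ must carry, up to congestion, all the $W$-embedding paths across it), so $H$ has diameter $O(\log^3 n) \ll d^*$. I would then maintain, using \Cref{thm: directed weighted ESTree}, a forward ES-tree and a backward ES-tree of depth $d^*/2$ rooted at a fixed pivot $r \in V(H)$, each with total update time $O(m d^*)$. To answer a query $(x,y)$, I read off $x \leadsto r$ from the backward tree and $r \leadsto y$ from the forward tree, concatenate, and short-cut any repeated vertex to produce a simple path of length at most $d^*$, in $O(d^*)$ time. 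If at any point some $x$ is too far from $r$ (or $r$ is too far from some $y$) in the ES-tree, a ball-growing argument (\Cref{obs: ball growing}) again produces a well-structured sparse cut.

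\textbf{Rebuild amortization and main obstacle.} Over the whole algorithm the adversary deletes at most $\Delta \cdot d^*$ edges of $H$, each destroying $O(\log^2 n)$ edges of $W$ by the congestion bound. Since $W$, as an $\alpha_0$-expander, absorbs $\tilde\Omega(n)$ edge deletions before its expansion halves, the number of full rebuilds is at most $\tilde O(\Delta d^*/n) + 1$, and each rebuild costs $\tilde O(m d^* + n^2)$; combined with per-iteration query work and an additional $d^*$ factor coming from re-initializing the matching player's ES-trees inside each rebuild and from refreshing the forward/backward pivot ES-trees, this yields the claimed total running time $\tilde O(m d^* + n^2) \cdot \max\{1, \Delta (d^*)^2/n\}$. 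The \textbf{main obstacle} is precisely this amortization: I must argue that every failure of the current embedding, whether during a cut-matching round or during a query, can be \emph{locally} converted into a sparse well-structured cut in $H$ (so that each expensive rebuild is amortized against either a real shrinkage of $V(H)$ or a genuine consumption of $W$'s tolerable-deletion budget), and that the polylogarithmic losses incurred by the cut-matching game, the congestion bound, the ball-growing amplification, and the conversion of a sparse cut to a well-structured one all fit inside the $O(\log^6 n / d^*)$ sparsity slack the problem grants.
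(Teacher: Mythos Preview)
Your high-level plan (phases; embed an expander via the cut-matching game; maintain forward/backward ES-trees; convert failures to sparse well-structured cuts) is essentially the paper's plan, but there is a genuine gap in the quantitative backbone.

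\textbf{The congestion claim is wrong.} You assert that the cut-matching game embeds $W$ into $H$ with congestion $O(\log^2 n)$ and dilation $O(d^*)$. This is not achievable in general: the matching player must route $\Theta(n)$ paths of length up to $\Theta(d^*)$ in $H$, so the total edge usage is $\Theta(n\,d^*)$, and since $H$ may have only $m=\Theta(n)$ edges, the congestion is forced to be $\tilde\Omega(d^*)$. (Your ``greedily harvesting edge-disjoint short paths'' cannot work edge-disjointly for exactly this reason.) The paper's matching player (\Cref{thm: alg for matching player}) routes with congestion $O(d'\log n)$, and the final embedding of $W^*$ into $H$ has congestion $O(d^*\operatorname{polylog} n)$, not $\operatorname{polylog} n$. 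You may be conflating this with the $O(\log^2 n)$ congestion of the \emph{cut player's} embedding of $W'$ into the cut-matching graph $W$ (\Cref{thm: cut player outer}), which is a different embedding.

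\textbf{This breaks your phase accounting.} With the correct congestion $\tilde\Theta(d^*)$, each of the $\le d^*$ edges of $H$ deleted after a query can destroy $\tilde\Theta(d^*)$ edges of $W$, i.e.\ $\tilde\Theta((d^*)^2)$ expander edges per query. Since $W$ tolerates only $\Theta(n)$ deletions, a phase can serve $\tilde\Theta(n/(d^*)^2)$ queries, yielding $\tilde O(\Delta (d^*)^2/n)$ phases. This is the true source of the $(d^*)^2$ in the running time. Your stated source --- ``an additional $d^*$ factor coming from re-initializing the matching player's ES-trees'' --- does not hold up: that cost is $\tilde O(m d^*)$ and is already inside the per-phase $\tilde O(m d^*+n^2)$, not a multiplier on the number of phases. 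So while you land on the right final bound, the argument as written does not establish it.

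A secondary point: the paper does not route through a single fixed pivot $r$. It roots the two ES-trees at a super-source connected to \emph{all} of $V(W^*)$, so a query path is $x\to x'\in V(W^*)$, then a BFS inside $W^*\setminus F$ to some $y'\in V(W^*)$, then $y'\to y$. This is what makes the ball-growing/cleanup argument clean: whenever a vertex falls out of an ES-tree, $V(W^*)$ lies entirely on the far side of the resulting cut. With a single pivot $r$ you would additionally have to argue that $r$ survives all the sparse cuts you output and remains central; this is fixable but is not addressed in your sketch.
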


We note that the running time of the algorithm in the above theorem includes both the total update time of the data structure and the time needed to respond to queries.
We also note that some of the results from \cite{SCC} can be viewed as providing an algorithm for the \maintaincluster problem. In particular, by combining their Theorem 4.3 with Theorem 4.4 and Theorem 4.5, one can obtain such an algorithm (except for minor technical differences, e.g. the algorithm produces sparse vertex cuts and not edge cuts). The total update time of their algorithm is roughly $\ohat(|E(H)|\cdot (d^*)^2)$. Our algorithm achieves both a better running time for the specific setting that we consider, where $\Delta\cdot (d^*)^2\leq \otilde(n)$, and is also significantly simpler.

\section{Algorithm for Restricted \SSSP: Proof of \Cref{thm: sssp main}}
\label{sec: SSSP alg}

In this section we provide an algorithm for the restricted \SSSP problem, proving \Cref{thm: sssp main}. Recall that we are given as input a well-structured graph $G$ of size $m$ (see \Cref{def: well-structured}), where $m$ is a given parameter, together with another parameter $\Delta>0$. We denote $|V(G)|=n$ and  $\Lambda=\frac{8m\log m}{\Delta}$. For every edge $e\in E(G)$, we are also given a length $\ell(e)$, that is an integral power of $2$ between $1$ and $16\Lambda$.

The algorithm consists of at most $\Delta$ iterations. At the beginning of every iteration, the algorithm is required to return a simple $s$-$t$ path $P$ in the current graph $G$ of length at most $8\Lambda$. We assume that for each edge $e\in E(P)$, edge $e$ is the shortest among all its parallel copies. Every edge on $P$ is then deleted from $G$, and the iteration ends.

The algorithm is allowed, at any time, to halt and return FAIL, but it may only do so if the  following condition is violated:

\begin{properties}[2]{C}
	\item Let $\qset$ be the largest-cardinality collection of $s$-$t$ paths in the current graph $G$, such that the paths in $\qset$ are edge-disjoint, every path has length at most  $\Lambda$  in the current graph. Then ${|\qset|\geq \frac{\Delta}{256\log^2m}}$ holds. \label{stopping conditionfinal}
\end{properties}

We use a parameter ${d=\sqrt{n/\Delta}}$.
We say that an edge $e\in E(G)$ is \emph{long} if ${\ell(e)\geq \frac{\Lambda}{64d\log m}}$, and otherwise we say that it is {\em short}.
We denote the set of all long edges by $\Elong$.

Our algorithm consists of two parts. The first part is responsible for maintaining an \ATO $(\xset,\rho)$ of graph $G$, and the second part essentially executes the algorithm from \Cref{thm: almost DAG routing} on the corresponding contracted graph $G_{|\xset}$. We now describe each of the two parts in turn.

\subsection{Part 1: Maintaining  the \ATO.} 
In this part we will work with a graph $\hat G$ that is obtained from $G\setminus \Elong$, by setting the length of every edge to $1$, and removing parallel edges. Notice that $\hat G$ is a dynamic graph that undergoes online edge deletions. We will maintain an \ATO  $(\xset,\rho)$ for this graph (that will naturally also define an \ATO for $G$).

We start with a partition $\xset$ of $V(\hat G)$, containing three sets: $S=\set{s},T=\set{t}$, and $J=V(\hat G)\setminus \set{s,t}$.
Their corresponding intervals are initialized to $I_S=\set{1}$, $I_T=\set{n}$, and $I_J=\set{2,\ldots,n-1}$. For convenience, we call the sets of vertices in $\xset$ \emph{clusters}.

 Whenever a new cluster $X$ joins $\xset$ (excluding the sets $S$ and $T$), we set ${d_X=\frac{|X|}{n}\cdot d}$. If $d_X\geq c^*\log^6|X|$, where $c^*$ is the constant from \Cref{thm: main for expander tools}, then we say that $X$ is a \emph{non-leaf cluster}. We then immediately initialize the algorithm for the \maintaincluster problem from \Cref{thm: main for expander tools} on the corresponding graph $\hat G[X]$, with distance parameter $d^*=d_X$, and parameter $\Delta$ remaining unchanged.
 Notice that, since $d=\sqrt{n/\Delta}$, we get that $d_X=\frac{|X|}{n}\cdot d=\frac{|X|}{\sqrt{n\cdot \Delta}}\leq |X|$, and, since $d_X=\frac{|X|}{\sqrt{n\cdot \Delta}}\geq c^*\log^6|X|$, we get that $|X|\geq \sqrt{n\cdot \Delta}$, and so $\Delta\leq \frac{|X|^2}{n}\leq |X|$, as required from the definition of the \maintaincluster problem and the conditions of \Cref{thm: main for expander tools}.

 If $d_X<c^*\log^6|X|$, then we say that $X$ is a \emph{leaf} cluster. In this case, we replace $X$ in set $\xset$ with $|X|$ new clusters, each of which consists of a single distinct vertex of $X$. For each such new cluster $\set{v}$, we assign an interval $I_{\set{v}}\subseteq I_X$, that consists of a single integer arbitrarily, but we ensure that the intervals assigned to vertices of $L$ appear before intervals assigned to vertices of $R$. This ensures that all regular edges of $\hat G[X]$ are left-to-right edges, and the special edges are right-to-left edges. We call all special edges of $\hat G[X]$ \emph{bad edges}, and we say that $X$ \emph{owns} these edges. Clearly, $X$ owns at most $|X|$ bad edges, and each such bad edge $e$ has $\spann(e)\leq |X|$. Observe that, since $d_X=\frac{|X|}{n}\cdot d$ and $d_X< c^*\log^6|X|\leq c^*\log^6n$, it must be the case that $|X|\leq \frac{c^*n\log^6n}{d}$.

  For convenience, for every cluster $X$, we denote by $X_0$ the set $X$ of vertices at the time when it first joined $\xset$, and by $m_0(X_0)$ the number of edges in graph $\hat G[X_0]$ at the time when $X_0$ joined $\xset$. We will also sometimes denote $d_{X}$ by $d_{X_0}$.
  
 Consider now some non-leaf cluster $X$. 
Recall that the algorithm from \Cref{thm: main for expander tools}, when applied to $X$,  has running time at most: 

\[\otilde(m_0(X_0)\cdot d_{X_0}+|X_0|^2)\cdot  \max\set{1,\frac{\Delta\cdot (d_{X_0})^2}{|X_0|}}. \]

Since $d_{X_0}=\frac{|X_0|}{n}\cdot d$ and $d=\sqrt{\frac{ n}{\Delta}}$, we get that  $\frac{\Delta\cdot d_{X_0}^2}{|X_0|}=\frac{\Delta \cdot |X_0| d^2}{n^2}=\frac{|X_0|}{n}\leq 1$. Therefore, the running time of the algorithm is bounded by:

\[\otilde(m_0(X_0)\cdot d_{X_0}+|X_0|^2).\]
 
 Once the number of vertices in $X$ falls below $|X_0|/2$, the algorithm from  \Cref{thm: main for expander tools} terminates, and we update the distance parameter $d_X=\frac{|X|}{n}\cdot d$. If $d_X\geq c^*\log^6|X|$, then we apply the algorithm from \Cref{thm: main for expander tools} to this new cluster $X$ again, with the updated parameter $d_X$; otherwise cluster $X$ becomes a leaf cluster and is treated as such. In either case, we will think of $X$ as a set that newly joined $\xset$.

Whenever the algorithm for the \maintaincluster problem on graph $\hat G[X]$ produces a cut $(A,B)$ in $\hat G[X]$ of sparsity at most $O\left(\frac{\log^6n}{d_{X_0}} \right )$, we proceed as follows. 
 We partition inteval $I_X$ into two intervals: interval $I_A\subseteq I_X$ of size $|A|$ and interval $I_B\subseteq I_X$ of size $|B|$, so that $I_B$ appears to the left of $I_A$. We say that set $X_0$ \emph{owns} the edges in both $E_{\hat G}(A,B)$ and $E_{\hat G}(B,A)$. Notice that edges of $E_{\hat G}(B,A)$ are directed from left to right, and edges of $E_{\hat G}(A,B)$ are directed from right to left. We will say that the edges of $E_{\hat G}(A,B)$ are \emph{bad} edges. 
Recall that, if $|A|\leq |B|$, the algorithm for the \maintaincluster problem deletes $A$ from $X$ and continues with the new set $X=B$.
We then assign interval $I_B$ to the new set $X=B$, and we add the set $A$ to the collection $\xset$ of vertex subsets that we maintain (after which the algorithm for the \maintaincluster problem is applied to $\hat G[A]$ if it is a non-leaf cluster, or it is processed as a leaf cluster otherwise). We say that set $X$ of vertices just underwent \emph{splitting}, and that cluster $A$ was split off from it. Similarly, if $|B|<|A|$, the algorithm for the \maintaincluster problem deletes $B$ from $X$ and continues with the new set $X=A$.
We then assign interval $I_A$ to the new set $X=A$, and we add the set $B$ to the collection $\xset$ of vertex subsets that we maintain. We say that set $X$ of vertices just underwent splitting, and that cluster $B$ was split off from it.

It is easy to verify by a simple charging argument that the total number of bad edges that cluster $X_0$ owns at the end of the algorithm is bounded by $\otilde\left(\frac{|X_0|}{d_{X_0}}\right )$. It is also easy to verify that each such bad edge $e$ has $\spann(e)\leq |X_0|$ when it first becomes a bad edge, and from that point onward $\spann(e)$ may only decrease.
This completes the description of the first part of the algorithm. We now bound its total update time and the total span of all bad edges.

\paragraph{Bounding the Running Time of Part 1.}
Let $\xset^*$ be the collection of all non-leaf sets $X_0$ that were ever added to $\xset$, excluding $S$ and $T$. For all $0\leq i\leq \floor{\log n}$, let $\xset_i\subseteq \xset^*$ be the collection of all sets $X_0$ with $2^i\leq |X_0|<2^{i+1}$. It is easy to verify that all sets in the collection $\xset_i$ are mutually disjoint, and so $|\xset_i|\leq \frac{n}{2^i}$. 

Recall that for each set $X_0\in \xset_i$, the total update time of the algorithm for the \maintaincluster problem on $\hat G[X]$ from \Cref{thm: main for expander tools} is most:

\[\otilde(m_0(X_0)\cdot d_{X_0}+|X_0|^2).\]

Since $d_{X_0}=\frac{|X_0|}{n}\cdot d$ and $|X_0|\leq 2^{i+1}$, we get that this time is bounded by:

\[
\begin{split}
\otilde(m_0(X_0)\cdot d_{X_0}+|X_0|^2)&\leq\otilde \left(m_0(X_0)\cdot \frac{|X_0|}{n}\cdot d+|X_0|^2\right ) \\
%&\leq  \ohat \left(\frac{|X_0|^3}{n}\cdot d+ \frac{|X_0|^3}{n^2}\cdot d^2\right )\\
&\leq \otilde \left(\frac{2^{i}\cdot m_0(X_0)}{n}\cdot d+ 2^{2i}\right )
\end{split} \]

Since $|\xset_i|\leq \frac{n}{2^i}$, the total running time for all clusters $X_0\in \xset_i$ is bounded by: 

\[\otilde  \left(\frac{2^{i}\cdot m\cdot d}{n}+ n\cdot 2^{i}\right ),\]

where $m$ is the number of edges in $G$ at the beginning of the algorithm.

Summing over $0 \leq i \leq \floor{\log n}$, we get that the total update time over all invocations of the algorithm for the \maintaincluster problem is bounded by ${\otilde \left(md+n^2\right )}$. The remaining time that is required in order to maintain the \ATO is subsumed by this time.

\paragraph{Bounding the Total Span of All Bad Edges.}
We partition the set of all bad edges into two subsets. The first subset, $E_1$, contains all edges that are owned by non-leaf clusters.

Recall that for all $0\leq i\leq \floor{\log n}$, every cluster $X_0\in \xset_i$  may own at most $\otilde\left(\frac{|X_0| }{d_{X_0}}\right)$ bad edges. Initially, when an edge $e$ is added to the set of bad edges, it has $\spann(e)\leq |X_0|$, and after that  its $\spann(e)$ value may only decrease. In our calculations, for each bad edge $e$, we consider the initial value $\spann(e)$, when the edge was just added to the set of bad edges. The summation of all $\spann(e)$ values of all bad edges owned by clusters in $\xset_i$ is then bounded by:

\[
\begin{split}
\sum_{X_0\in \xset_i}\otilde \left(\frac{|X_0|^2}{d_{X_0}}\right )& \leq \sum_{X_0\in \xset_i}\frac{|X_0|\cdot\otilde(n)}{d}\\
&\leq \otilde\left(\frac{n^{2}}{d}\right ),
\end{split}
\]

since $d_{X_0}=\frac{|X_0|}{n}\cdot d$.
Summing over $0 \leq i \leq \floor{\log n}$, we conclude that  $\sum_{e\in E_1}\spann(e)\leq \otilde\left(\frac{n^{2}}{d}\right )$.

We denote by $E_2$ the set of all bad edges that are owned by leaf clusters. Recall that each leaf cluster $X$ owns at most $|X|$ bad edges, and the $\spann(e)$ value of each such edge is bounded by $|X|\leq \otilde\left(\frac{n}{d}\right )$. Since all leaf clusters that were ever added to $\xset$ are disjoint, we get that $\sum_{e\in E_2}\spann(e)\leq \otilde\left(\frac{n^2}{d}\right )$. Overall, we get that the total span value of all bad edges is bounded by $\otilde\left(\frac{n^2}{d}\right )$.

So far we viewed $(\xset,\rho)$ as an \ATO in the graph $\hat G$. It equivalently defines an $\ATO$ in the original graph $G$. However, for each right-to-left edge $e$ of $\hat G$, we may now have up to $\log m$ parallel copies of $e$ in $G$, and additionally, for every long edge $e$, it is possible that the edge is a right-to-left edge.

Observe first that, from our discussion so far, $\sum_{e\in E(G)\setminus \Elong}\spann(e)\leq \otilde\left(\frac{n^2}{d}\right )$ holds at all times (recall that $\Elong$ is the set of the long edges).

As long as Condition 
\ref{stopping conditionfinal} holds, since the paths in $\qset$ are edge-disjoint, there must be some $s$-$t$ path $Q$ whose length is at most $\Lambda$ in the current graph, and $\sum_{e\in E(Q)\setminus \Elong}\spann(e)\leq 
\otilde \left(\frac{n^{2}}{d}\cdot  \frac{256\log^2m}{\Delta}\right )=\otilde \left(\frac{n^{2}}{d\cdot \Delta}\right )$.

Since every edge $e\in \Elong$ has length at least $\frac{\Lambda}{64d\log m}$, and the length of $Q$ is at most $\Lambda$, we get that at most $64d\log m$ edges of $\Elong$ may lie in $Q$. Since for each such edge $e\in \Elong$, $\spann(e)\leq n$, we conclude that:
$\sum_{e\in E(Q)}\spann(e)\leq \otilde \left(\frac{n^2}{d\cdot \Delta}+n\cdot d\right )$.

It is then immediate to verify that, for this path $Q$, $\sum_{e\in E(Q)}\skipp(e)\leq \otilde \left(\frac{n^{2}}{d\cdot \Delta}+n\cdot d\right )$. By selecting an appropriate parameter $\Gamma=\otilde \left(\frac{n^{2}}{d\cdot \Delta}+n\cdot d\right )$, we ensure that, as long as Condition \ref{stopping conditionfinal} holds, there is an $s$-$t$ path $Q$ in $G$ with $\sum_{e\in E(Q)}\skipp(e)\leq \Gamma/4$, such that the length of $Q$ is at most $\Lambda$.

\subsection{Part 2: Solving \DLSSSP on the Contracted Graph}
The second part of our algorithm maintains the contracted graph $H=G_{|\xset}$. Once this graph is intialized, it is easy to verify that all changes to this graph can be implemented via vertex-splitting and edge-deletion updates. Whenever a cluster $A$ is split off from some cluster $X\in \xset$, we perform a vertex splitting operation, in which vertex $v_A$ is split off from $v_X$. We include in the corresponding set $E'$ all edges that have one endpoint in $A$ and another outside of $A$. Notice that each such edge either connects a vertex of $A$ and a vertex of $X$; or it has one endpoint in $A$ and another in $V(G)\setminus\set{A\cup X}$. In the latter case, if $e=(a,b)$ with $a\in A$ and $b\not\in A\cup X$, then edge $(v_X,v_Y)$, where $Y$ is the cluster of $\xset$ containing $b$, was already present in $H$. After the vertex-splitting update is complete, we delete this edge from $H$ via standard edge-deletion update if needed. Similarly, if the edge is $e=(b,a)$ with 
$a\in A$ and $b\not\in A\cup X$, then edge $(v_Y,v_X)$, where $Y$ is the cluster of $\xset$ containing $b$, was already present in $H$. After the vertex-splitting update is complete, we delete this edge from $H$ via standard edge-deletion update. 
Similarly, if $X$ is a leaf cluster that was just added to $\xset$, then we implement the splitting of $X$ into clusters that consist of individual vertices of $X$ via the vertex-splitting update, applied to supernode $v_X$.
Notice that this introduces a number of extra edges in graph $H$. For every edge $e$, whenever a cluster containing one endpoint $a$ of $e$ is split, if $a$ belongs to the smaller side of the partition, then a new copy of $e$ is introduced into $H$, and the older copy is deleted from it. But it is easy to verify that for each edge $e$ of $G$, at most $O(\log n)$ copies may be introduced, so the number of edges ever present in $H$ is still bounded by $O(m\log^2m)$.

For every edge $e$ of $H$, the weight $w(e)$ is set to be the smallest integral power of $2$ that is at least $\skipp(e)$. Recall that the value $\skipp(e)$ of an edge may grow overtime. In order to avoid having to  insert edges into $H$ outside of the vertex-splitting updates, for every edge $e$, for every integer $\ceil{\log(\skipp(e))}\leq i\leq \ceil{\log n}$, we add a copy of $e$ whose weight is $2^i$. As the value $\skipp(e)$ grows, we simply remove copies of $e$ as needed. The total number of edges ever present in $H$ is now bounded by $O(m\log^3m)$.

Consider now some vertex $v=v_X\in V(H)$, and some integer $0\leq i\leq \ceil{\log n}$. Recall that we have denoted by  $U_i(v)=\set{u\in V(H)\mid \exists e=(v,u)\in E(H)\mbox{ with } w(e)=2^i}$. 
Equivalently, set $U_i(v)$ contains all nodes $v_Y$, such that some edge $e$ of $H$ connects a vertex of $X$ to a vertex of $Y$, and has value $\skipp(e)\leq 2^{i}$. In particular, the distance between intervals $I_X$ and $I_Y$ is at most $2^i$, and so at most $2^i$ other intervals may lie between them. Therefore, it is immediate to verify that $|U_i(v)|\leq 2^{i+2}$ always holds, establishing Property \ref{prop: few close neighbors}.	

We apply the algorithm from \Cref{thm: almost DAG routing} to the resulting instance of the \DLSSSP problem, with the parameter $\Gamma$ that we defined above, distance parameter $d=\Lambda$, and $\eps=1/20$. Recall that the algorithm may only return FAIL if the current graph $H$ does not contain any $s$-$t$ path $P$  with $\sum_{e\in E(P)}\ell(e)\leq \Lambda$, and $\sum_{e\in E(P)}w(e)\leq \Gamma$. From our discussion, this may only happen when Condition \ref{stopping conditionfinal} is violated. Recall that the running time of the algorithm is bounded by:

%\[\otilde \left(n^2 + \Gamma\cdot n\right )\leq \otilde \left( n^{2}\cdot d+\frac{n^{3}}{d\cdot \Delta}\right ).\]

\[\otilde \left(m+n^2 + \Gamma\cdot n\right )\leq \otilde \left( n^{2}\cdot d+\frac{n^{3}}{d\cdot \Delta}\right ),\]

since $\Gamma=\otilde \left(\frac{n^{2}}{d\cdot \Delta}+n\cdot d\right )$.

The total running update time of both parts of the algorithm is bounded by:

\[\otilde \left(n^{2} d+\frac{n^{3}}{d\cdot \Delta}\right ) \]

Substituting the value $d=\sqrt{n/\Delta}$, we obtain total update time:

\[{\otilde\left(\frac{n^{2.5}}{\sqrt{\Delta}}\right ) }\]

\paragraph{Completing the Algorithm.}
At the beginning of every iteration, we use the algorithm from \Cref{thm: almost DAG routing} to compute a simple $s$-$t$ path $P$ in graph $H$, whose length is at most $(1+10\eps)\Lambda\leq 1.5\Lambda$. Let $(v_S=v_{X_1},v_{X_2},\ldots,v_{X_k}=v_T)$ be the  sequence of supernodes on this path. For all $1\leq j<k$, let $e_j=(x_j,y_j)$ be the corresponding edge in graph $G$, with $x_j\in X_j$ and $y_j\in X_{j+1}$.

For all $1< j< k$, 
if $|X_j|=1$, then we let $Q_j$ be a path that consists of a single vertex $y_{j-1}=x_{j}$. Otherwise,
we use the data structure from the MaintainCluster problem on graph $G[X_{j}]$ in order to compute a path $Q_j$, connecting $y_{j-1}$ to $x_{j}$. Recall that the path contains at most $d_{X_{j}}$ edges, and, since each such edge is short, the length of the path in $G$ is bounded by $d_{X_{j}}\cdot \frac{\Lambda}{64d\log m}$.

We let $P'$ be the path obtained by concatenating $e_1,Q_2,e_2,\ldots,Q_{k-1},e_{k-1}$. It is easy to see that $P'$ is an $s$-$t$ path in the current graph $G$. Next, we bound its length. Recall that the length of the original path $P$ was at most $1.5\Lambda$. For convenience, for $1<j<k$, if $|X_j|=1$, then we denote $d_{X_j}=0$. The total lengths of all paths in $\set{Q_2,\ldots,Q_{k-1}}$ can be bounded by:

\[ 
\begin{split}
 \frac{\Lambda}{64d\log m}\cdot \sum_{j=2}^{k-1} d_{X_j}&\leq \frac{\Lambda}{64d\log m}\cdot \sum_{X_0\in \xset^*}d_{X_0}\\
&\leq \frac{\Lambda}{64d\log m}\cdot \sum_{i=0}^{\floor{\log n}}\sum_{X_0\in \xset_i}d_{X_0}\\
&=\frac{\Lambda}{64d\log m}\cdot \sum_{i=0}^{\floor{\log n}}\sum_{X_0\in \xset_i}\frac{|X_0|}{n}\cdot d\\
&\leq \frac{\Lambda}{64d\log m} \cdot d\log n\\
&\leq \frac{\Lambda}{64},
\end{split}
 \]

since for all $X_0\in \xset$, $d_{X_0}=\frac{|X_0|}{n}\cdot d$, and for all $0\leq i\leq \floor{\log n}$, $\sum_{X_0\in \xset_i}|X_0|\leq n$.
Altogether, the length of the final path $P'$ is bounded by $2\Lambda$ as required. %In time $O(\Lambda)$, we turn path $P'$ into a simple path that the algorithm then returns.
Note that the initial path $P$, returned by the algorithm from  \Cref{thm: almost DAG routing}  is a simple path. From the definition of the \maintaincluster problem, each of the paths $Q_2,\ldots,Q_{k-1}$ are also simple paths. Moreover, since all clusters in $\xset$ are disjoint, and since path $P$ is simple, it is easy to verify that the final path $P'$ is simple as well.

After our algorithm returns a path $Q$, the edges of $Q$ are deleted from $G$. We then update the \maintaincluster algorithms for clusters $X\in \xset$ with the edges that were deleted from $\hat G[X]$, and then update graph $H=G_{|\xset}$ with edge deletions and any vertex splits that may have resulted from splitting of clusters of $\xset$.

Let $q$ denote the number of iterations in the algorithm, and recall that $q\leq \Delta$. In each iteration $i$, the time required to compute the initial path $P_i$ in graph $H$ is bounded by $O(|E(P_i)|)$. The time that is required to respond to \shortpath queries within the clusters $X_j$ is already accounted for as part of total running time of the \maintaincluster algorithm on $\hat G[X_j]$. Let $P'_i$ denote the final path that the algorithm returns in iteration $i$, so $E(P_i)\subseteq E(P'_i)$. Then the additional time required to respond to queries is bounded by $\sum_{i=1}^qO(|E(P_i)|)$. Since, for all $1\leq i\leq q$, the edges of $P'_i$ are deleted from 
$G$ at the end of iteration $i$, the edges of $P_i$ are also deleted from $G$ at the end of iteration $i$. Therefore, we get that $\sum_{i=1}^qO(|E(P_i)|)\leq \otilde(|E(G)|)\leq \otilde(n^2)$. Since $\Delta\leq n$,
 the total running time of the algorithm is then bounded by $\otilde \left(\frac{n^{2.5}}{\sqrt{\Delta}}\right )+\otilde(n^2)\leq {\otilde \left(\frac{n^{2.5}}{\sqrt{\Delta}}\right ) }$.

\section{Algorithm for the \maintaincluster Problem -- Proof of \Cref{thm: main for expander tools}}
\label{sec: expander algs}

We will use the following theorem as a subroutine. The theorem provides an efficient algorithm that either embeds a large expander into the input graph $H$, or computes a sparse cut in $H$, with both sides of the cut being sufficiently large. The proof of the theorem uses standard techniques (namely the Cut-Matching Game), though we provide a simpler algorithm for one of the ingredients (algorithm for the Cut Player), and an algorithm achieving slightly better parameters for another ingredient (algorithm for the Matching Player). The theorem uses parameter $\alpha_0$ from \Cref{thm:explicit expander}; its proof is deferred to Section \ref{sec: embed or cut}.

\begin{theorem}\label{thm: embed expander or cut}	
	There is a deterministic algorithm, whose input consists of a directed $n$-vertex bipartite graph $G=(L,R,E)$ with $|E|=m$, such that every vertex in $R$ has out-degree at most $1$ and every vertex in $L$ has in-degree at most $1$, and a parameter $4\leq d'\leq 2n$.
		The algorithm returns one of the following:
	\begin{itemize}
		\item Either a cut $(X,Y)$ in $G$ with $|E_G(X,Y)|\leq \frac{2n}{d'}+1$ and $|X|,|Y|\geq \Omega\left ( \frac{n}{\log^3 n}\right )$; or
		
		\item A directed graph $W^*$ with $|V(W^*)|\geq n/8$ and maximum vertex degree at most $18$, such that $W^*$ is an $\alpha_0$-expander, together with a subset $F\subseteq E(W^*)$ of at most $\frac{\alpha_0\cdot n}{500}$ edges of $W^*$, and an embedding $\pset^*$ of $W^*\setminus F$ into $G$ via paths of length at most $O(d'\log^2 n)$, that cause congestion at most $O(d'\log^4n)$, such that every path in $\pset^*$ is simple.
	\end{itemize}
	The running time of the algorithm is $\otilde(md'+n^2)$.
\end{theorem}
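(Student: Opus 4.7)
The plan is to use a standard Cut-Matching Game (CMG) framework of Khandekar-Rao-Vazirani, adapted to our directed bipartite setting. The algorithm runs in two phases. Phase 1 is the CMG itself, which either embeds a witness graph $W$ of constant expansion into $G$, or finds a sparse balanced cut in $G$ to return directly. Phase 2, assuming Phase 1 succeeded, uses $W$ and its $G$-embedding to route the \emph{explicit} $\alpha_0$-expander supplied by \Cref{thm:explicit expander} as the required output graph $W^*$.

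In Phase 1, starting from the empty graph $W_0$ on a vertex set $V_0 \subseteq V(G)$ of size $\Omega(n)$, we run $T = \Theta(\log^2 n)$ rounds. In round $t$, a cut player searches for a sparse cut $(A_t, B_t)$ in $W_{t-1}$ with $|A_t| \leq |B_t|$ and sparsity below a threshold $\phi = \Omega(1/\log^2 n)$; since $W_{t-1}$ has bounded degree (roughly $t$), such a cut (or a certificate of its absence) can be produced by a BFS/ball-growing procedure in the spirit of \Cref{obs: ball growing}. A matching player then tries to realize a matching between $A_t$ and $B_t$ of size $(1-o(1))|A_t|$ via short directed paths in $G$: add a super-source $s^*$ attached to $A_t$ and a super-sink $t^*$ attached from $B_t$, and compute a maximum $s^*$-$t^*$ flow restricted to paths of length $\leq O(d')$ via a Dinic-style blocking flow in time $O(md')$, with edge capacities chosen to bound per-edge congestion. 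If the flow value is nearly $|A_t|$, decompose it into the sought matching $M_t$ together with its supporting $G$-paths, set $W_t = W_{t-1} \cup M_t$, and proceed. Otherwise, min-cut duality yields a cut $(X,Y)$ in $G$ with $|E_G(X,Y)| \leq 2n/d' + 1$, which we combine with the balance of $(A_t, B_t)$, and, if needed, \Cref{obs: from many sparse to one balanced}, to boost to a balanced sparse cut and return.

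In Phase 2, the standard CMG analysis shows that $W_T$ is an $\Omega(1)$-expander on all but a small ``pruned'' subset of $V_0$. We invoke \constructexpander\ to build the explicit $\alpha_0$-expander $W^*$ on $N \geq n/8$ vertices with max in-/out-degree $\leq 9$, and identify $V(W^*)$ with a size-$N$ subset of the unpruned part of $V(W_T)$. For each edge $(u,v) \in E(W^*)$, find a directed $u$-$v$ path in $W_T$ of length $O(\log n)$ (which exists by the expansion of $W_T$ and its bounded degree), and concatenate the $G$-embedding paths of its constituent $W_T$-edges to obtain an embedding of $(u,v)$ in $G$ of length $O(d' \log^2 n)$. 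Any $W^*$-edge touching the pruned set is placed in $F$; a direct accounting shows $|F| \leq \alpha_0 n/500$. Congestion is bounded by the product of the $W_T$-embedding's congestion (at most $O(d'\log^2 n)$ after $T$ matching rounds) and the multiplicity of each $W_T$-edge across the $W^*$-routings, giving $O(d' \log^4 n)$.

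The main obstacle will be the ``cut'' branch of the matching player: the raw min-cut returned by length-bounded max-flow is not automatically \emph{balanced}, i.e., need not satisfy $|X|, |Y| \geq \Omega(n/\log^3 n)$. We will need to combine the min-cut with the balance guaranteed by the CMG cut $(A_t, B_t)$, and aggregate unbalanced slices arising across failed rounds into a single balanced cut via \Cref{obs: from many sparse to one balanced}. A secondary difficulty is carefully tracking path lengths and congestion across the two successive compositions (CMG matching routings, and $W_T$-routings of $W^*$-edges) so that the $O(d'\log^2 n)$ and $O(d'\log^4 n)$ bounds come out cleanly; ensuring each composed embedding path is \emph{simple} requires an extra short-cutting step. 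The final running time $\otilde(md' + n^2)$ follows from $O(\log^2 n)$ rounds of length-bounded max-flow at cost $O(md')$ per round, plus $O(n^2)$ for cut-player operations, the explicit expander construction, and the final composition.
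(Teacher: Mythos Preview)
Your high-level framework—Cut-Matching Game, then route the explicit $\alpha_0$-expander through the witness graph—matches the paper's approach. But two subroutines have real gaps.

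The matching player is the main one. A ``Dinic-style blocking flow restricted to paths of length $\le O(d')$'' together with ``min-cut duality'' does not yield $|E_G(X,Y)|\le 2n/d'+1$: length-bounded max-flow has no clean LP duality, and a generic layered/residual-graph argument would give a cut of size $O(m/d')$, not $O(n/d')$. The paper's matching player (\Cref{thm: alg for matching player}) instead runs an MWU loop that exploits the bipartite structure: assign positive length only to the at-most-$n$ \emph{special} (i.e., $R\to L$) edges, find a shortest $A$-to-$B$ path, double the special-edge lengths along it, repeat. The total special-edge length stays $\le 2n$, so when the loop stalls a threshold cut on these lengths has $\le 2n/d'$ edges. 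Balance comes for free: the paper first pads the CMG bipartition to $|A'|=|B'|=n/2$ and sets $z=\Theta(n/\log^3 n)$; if fewer than $n/2-z$ paths are found, the $\ge z$ unmatched vertices on each side land on opposite sides of the returned cut. This completely replaces your proposed aggregation via \Cref{obs: from many sparse to one balanced}. (Also, since $G$ is directed, the matching player must be invoked in both directions each round; you only mention one.)

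Your cut player is also underspecified: \Cref{obs: ball growing} needs a far-apart vertex pair as \emph{input}, and ``BFS/ball-growing'' alone neither produces such a pair nor certifies that $W$ is already expanding. The paper's cut player (\Cref{thm: cut player outer}, \Cref{thm: cut player inner}) supplies the far pair by greedily embedding each edge of the explicit expander $W'$ into $W$ via a single BFS; any edge whose shortest embedding exceeds $O(\log^2 n)$ gives the needed far pair, and if too many edges fail, repeated ball-growing plus \Cref{obs: from many sparse to one balanced} produces the balanced cut in $W$. This folds your ``Phase 2'' into the cut player itself: the CMG (the KKOV variant, only $O(\log n)$ rounds) terminates the moment one iteration's embedding of $W'$ into $W$ succeeds, rather than running a fixed $\Theta(\log^2 n)$ rounds and embedding $W^*$ afterward.
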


Note that in the above theorem we assume that all edge lengths in the input graph $G$ are unit.

We will also use the following simple observation that uses the standard ball-growing technique. The proof is deferred to Section \ref{sec: appx: ball growing 2} of Appendix.

\begin{observation}\label{obs: ball growing2}
	There is a deterministic algorithm, that receives as input a directed bipartite $n$-vertex graph $H=(L,R,E)$, where the edges in $E_H(L,R)$ are called regular and the edges in $E_H(R,L)$ are called special. Every vertex of $H$ may be incident to at most one special edge. Additionally, the algorithm is given a set $X\subseteq V(H)$ of at least $\frac{n}{32}$ vertices, and a vertex $v$, such that $\dist_H(v,X)\geq d$, for some given parameter ${d\geq 2^{11}\log n}$. The algorithm computes a  well-structured cut $(A,B)$ in $H$, with $v\in A$, $X\subseteq B$, and $|E_H(A,B)|\leq \phi\cdot\min\set{|A|,|B|}$, for 	${\phi=\frac{2^{11}\log n}{d}}$. The running time of the algorithm is $O(\vol_H(A))$.
\end{observation}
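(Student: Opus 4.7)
My plan is to apply a modified ball-growing procedure from $v$, where regular edges are treated as having length $0$ and special edges as having length $1$. Let $\dist'$ denote distance under these modified lengths, and for each $i\geq 0$ set $B_i=\{u\in V(H):\dist'(v,u)\leq i\}$. The key preliminary observation is that every directed path in the bipartite graph $H$ alternates between $L$ and $R$ (regular edges go $L\to R$, special edges go $R\to L$); hence a path with $k$ edges contains at least $\lfloor k/2\rfloor$ special edges, and the assumption $\dist_H(v,X)\geq d$ gives $\dist'(v,X)\geq \lfloor d/2\rfloor$, so $B_i\cap X=\emptyset$ for all $i<d/2$.

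Next, I would establish two structural facts about each ball $B_i$. First, $B_i$ is closed under regular out-edges: if $u\in L\cap B_i$ and $(u,w)$ is regular, then $\dist'(v,w)\leq\dist'(v,u)\leq i$, so $w\in B_i$. Consequently every edge in $E_H(B_i,V\setminus B_i)$ is special, making $(B_i,V\setminus B_i)$ automatically a well-structured cut. Second, every $L$-vertex $u$ with $\dist'(v,u)=i+1$ is the head of a special edge originating in some $R$-vertex of $B_i$, and because each vertex of $H$ is incident to at most one special edge, distinct such $L$-vertices correspond to distinct special edges crossing the cut; this gives the sharp inequality $|E_H(B_i,V\setminus B_i)|\leq |B_{i+1}|-|B_i|$. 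Setting $\phi_0=4\log n/d$, a standard ball-growing argument then produces some $i^*\in\{0,1,\ldots,\lfloor d/2\rfloor-1\}$ with $|E_H(B_{i^*},V\setminus B_{i^*})|\leq \phi_0 |B_{i^*}|$: otherwise iterating the inequality $|B_{i+1}|>(1+\phi_0)|B_i|$ would give $|B_{\lfloor d/2\rfloor}|>(1+\phi_0)^{\lfloor d/2\rfloor}\geq n$ under the hypothesis $d\geq 2^{11}\log n$, contradicting $|B_{\lfloor d/2\rfloor}|\leq n$. I would set $A=B_{i^*}$ and $B=V\setminus B_{i^*}$; by construction $v\in A$ and $X\subseteq B$.

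To verify the sparsity condition $|E_H(A,B)|\leq \phi\min(|A|,|B|)$ with $\phi=2^{11}\log n/d$, I would split into two cases: if $|A|\leq|B|$ the bound is immediate from $\phi_0<\phi$; otherwise $|A|>|B|\geq|X|\geq n/32$, so $|E_H(A,B)|\leq \phi_0|A|\leq \phi_0 n = 4n\log n/d$, while $\phi\min(|A|,|B|)\geq \phi\cdot n/32 = 64n\log n/d\geq \phi_0 n$. The large constant $2^{11}$ in $\phi$ is precisely what provides the slack needed for this second case. For the running-time bound, rather than constructing $B_{i^*+1}$ (whose volume could greatly exceed that of $A$), I would run a $0$-$1$ BFS from $v$ and maintain a counter $C$ equal to $|E_H(B_i,V\setminus B_i)|$ incrementally: increment when an $R$-vertex is added whose unique special out-neighbor lies outside the current ball, and decrement when an $L$-vertex is later added as the target of such a previously-counted edge. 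The stopping criterion $C\leq \phi_0|B_i|$ is checked after each layer finishes, using only edges incident to $B_{i^*}$. The main obstacle is precisely achieving the tight $O(\vol_H(A))$ bound, which forces this trick of detecting the good layer by counting crossing special edges directly rather than by exploring the next ball.
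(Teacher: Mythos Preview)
Your proof is correct and follows essentially the same ball-growing argument as the paper. The only difference is a cosmetic reparameterization: the paper performs an ordinary BFS and restricts attention to layers of the correct parity (even if $v\in R$, odd if $v\in L$) so that the outgoing cut edges are automatically special, whereas you assign length~$0$ to regular edges and length~$1$ to special edges and run a $0$--$1$ BFS, which collapses the two parity cases into one; your balls $B_i$ coincide with the paper's $S_{2i}$ (or $S_{2i+1}$). Your treatment of the $O(\vol_H(A))$ running time via the incremental crossing-edge counter is actually more explicit than the paper's, which simply asserts the bound.
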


We are now ready to complete the proof of \Cref{thm: main for expander tools}. Recall that we are given as input a 
simple  bipartite $n$-vertex graph $H=(L,R,E)$ with $|E|=m$, such that every vertex in $R$ has out-degree at most $1$ and every vertex in $L$ has in-degree at most $1$, together with integral parameters $c^*\log^6n\leq d^*\leq n$, where $c^*$ is a large enough constant and $1\leq \Delta\leq n$.
Recall that we refer to the edges of $E_H(R,L)$ as \emph{special}, and the edges of $E_H(L,R)$ as \emph{regular}. Recall also that a cut $(X,Y)$ in $H$ is \emph{well-structured} iff all edges in $E_H(X,Y)$ are special.

The algorithm is required to respond to at most $\Delta$ \shortpath queries: given a pair $x,y\in V(H)$ of vertices, it must return a simple path $P$ connecting $x$ to $y$ in $H$, whose length is at most $d^*$. Once the algorithm responds to a query, it receives a set of edges that need to be deleted from $H$, with all the deleted edges lying on path $P$.  At any time, the algorithm may provide a well-structured cut $(X,Y)$ of $H$, whose sparsity is at most $O\left(\frac{\log^6 n}{d^*}\right )$. 
If $|X|\leq |Y|$, then the cut must be provided by listing the vertices of $X$, and otherwise it must be provided by listing the vertices of $Y$. After that the vertices of the smaller side of the cut are deleted from $H$. %Once the number of vertices in $H$ falls below half the original number, the algorithm halts, even if fewer than $\Delta$ queries were asked. 

Note that, over the course of the algorithm, vertices may be deleted from $H$. We denote by $n$ the initial number of vertices in $H$. Once $|V(H)|$ falls below $n/2$, the algorithm terminates, even if fewer than $\Delta$ queries were asked.

Throughout, we use a parameter ${d=\frac{d^*}{c'\log^3 n}}$, where $c'$ is some large enough constant, whose value we will set later. Since $d^*\geq c^*\cdot \log^6 n$ for a large enough constant $c^*$, we can ensure that $d\geq 2^{11}\log n$.

Our algorithm consists of a number of phases. 
Let $N=\frac{\alpha_0 n}{2^{20} (d^*)^2\log n}$, where $\alpha_0$ is the constant from \Cref{thm:explicit expander}, and let $\hat c$ be a large enough contant whose value we choose later. A phase terminates whenever one of the following two conditions are met: either (i) the number of vertices in $H$ decreased by the factor of at least ${\left( 1-\frac{1}{\hat c\log^5n}\right )}$ over the course of the current phase (in which case we say that the phase is of \emph{type 1}); or (ii) at least $N$ \shortpath queries were responded to during the current phase (in which case the phase is of \emph{type 2}). Clearly, at most $O(\log^5n)$ phases may be of type 1, and the number of type-2 phases is bounded by $\max\set{1,\frac{\Delta}{N}}\leq \max\set{1,O\left ( \frac{\Delta\cdot (d^*)^2\log n}{n}\right )}$. Overall, the number of phases in the algorithm is bounded by $O\left (\max\set{\log^5n,\frac{\Delta\cdot (d^*)^2\cdot \log^6n}{n}}\right )$.
At a high level, we follow a scheme that is by now quite standard: we embed a large $\Omega(1)$-expander $W^*$ into graph $H$, and then maintain two \EST data structures in  $H$, both of which are rooted at the vertices of $W^*$; in the first tree all edges are directed away from the root, while in the second tree the edges are directed towards the root. Given a query $(x,y)$, we use the \EST data structures in order to connect $x$ to some vertex $x'\in V(W^*)$, and to connect some vertex $y'\in V(W^*)$ to $y$, via short paths. Lastly, we compute a short path connecting $x'$ to $y'$ in $W^*$ by performing a BFS, which is then transformed into an $x'$-$y'$ path in $H$ by exploiting the embedding of $W^*$ into $H$. At a high level, we can view each phase as consisting of two steps. In the first step, we embed an expander $W^*$ into $H$, and the second step contains the remainder of the algorithm that allows us to support \shortpath queries. We now describe each of the two steps in turn.

\subsection*{Step 1: Embedding an Expander}
During the first step of the phase,  we will attempt to embed a large expander into the current graph $H$ via \Cref{thm: embed expander or cut}. Specifically, we perform a number of iterations. In every iteration, we apply the algorithm from \Cref{thm: embed expander or cut} to the current graph $H$ with parameter $d'=d$.

Assume first that the algorithm returned a cut $(X,Y)$ in $H$ with $|E_H(X,Y)|\leq \frac{2n}{d}+1$ and $|X|,|Y|\geq \Omega\left ( \frac{n}{\log^3 n}\right )$. In this case, we say that the current iteration is a \emph{bad} iteration.
Clearly, $|E_H(X,Y)|\leq \frac{2n}{d}+1\leq \frac{4n}{d}\leq O\left(\frac{\log^3n}{d} \right )\cdot \min\set{|X|,|Y|}$. In other words, the sparsity of the cut $(X,Y)$ is at most $O\left(\frac{\log^3n}{d} \right )=O\left(\frac{\log^6n}{d^*} \right )$. 
Since $d^*\geq c^*\log^6n$, where $c^*$ is a large enough constant, we can ensure that the sparsity of the cut, that we denote by $\phi$, is bounded by $1/4$.
Using the algorithm from \Cref{obs: sparse cut to structured}, we compute, in time $O(|E(H)|)$, a well-structured cut $(X',Y')$ in $H$, whose sparsity remains at most $O\left(\frac{\log^6n}{d^*} \right )$. 
%Since $d^*\geq c^*\log^4n$, where $c^*$ is a large enough constant, we can ensure that the sparsity of the cut, that we denote by $\phi$, is bounded by $1/4$.
%We are also guaranteed that $\vol_H(X'),\vol_H(Y')\geq \Omega(\min\set{\vol_H(A),\vol_H(B)})$.

Denote the sets $X$ and $Y$ by $Z$ and $Z'$, so that $Z$ is the smaller set of the two, breaking the tie arbitrarily. Notice that the algorithm from \Cref{obs: sparse cut to structured} ensures that $|Z|\geq (1-\phi)\cdot \min\set{|X|,|Y|}\geq \Omega\left ( \frac{n}{\log^3 n}\right )$. We output the cut $(X,Y)$ and delete the vertices of $Z$ from graph $H$. If the number of vertices in $H$ falls below $n/2$, then we terminate the algorithm. Otherwise, we continue to the next iteration. 

Assume now that the algorithm from \Cref{thm: embed expander or cut}	 returned a directed graph $W^*$ with $|V(W^*)|\geq |V(H)|/8$ and maximum vertex degree at most $18$, such that $W^*$ is an $\alpha_0$-expander, together with a subset $F\subseteq E(W^*)$ of at most $\frac{\alpha_0\cdot |V(H)|}{500}$ edges of $W^*$, and an embedding $\pset^*$ of $W^*\setminus F$ into $H$ via simple paths of length at most $O(d\log^2 n)$ each, that cause congestion at most $O(d\log^4n)$.
In this case, we say that the current iteration is  a \emph{good} iteration. We then terminate the first step of the current phase and continue to the second step.

Since in every bad iteration at least $\Omega\left ( \frac{n}{\log^3 n}\right )$ vertices are deleted from $H$, the total number of iterations in the first step is bounded by $O(\log^3n)$. In every iteration we apply the algorithm from \Cref{thm: embed expander or cut}, whose running time is $\otilde(|E(H)|d+n^2)$ to the current graph $H$, and then possibly use the algorithm from \Cref{obs: sparse cut to structured}, whose running time is at most $O(|E(H)|)$. Overall, the running time of Step 1 of a single phase is $\otilde(|E(H)|\cdot d+n^2)$.

\subsection*{Step 2: Even-Shiloach Trees and Supporting Short-Path Queries}
We denote by $n'$ the number of vertices in $H$ at the beginning of Step 2. The purpose of Step 2 is to support \shortpath queries. 
Over the course of Step 2, the number of vertices in $H$ may decrease. Whenever this number falls below $n'\cdot \left(1-\frac{1}{\hat c\log^5n}\right )$, for a large enough constant $\hat c$ whose value we choose later, we terminate the algorithm for Step 2 and the current phase.
We start by defining additional data structures that we maintain.

Recall that, at the beginning of Step 2, we are given an $\alpha_0$-expander graph $W^*$ with  $|V(W^*)|\geq n'/8$ and maximum vertex degree at most $18$, together with a subset $F\subseteq E(W^*)$ of at most $\frac{\alpha_0\cdot n'}{500}$ edges of $W^*$, and an embedding $\pset^*$ of $W^*\setminus F$ into $H$ via simple paths of length at most $O(d\log^2 n)$, that cause congestion at most $O(d\log^4n)$.

We denote the number of vertices in $W^*$ at the beginning of the phase by $n''$, so $n''\geq n'/8\geq n/16$.
 Over the course of Step 2, some vertices may be deleted from $W^*$, but we will ensure that throughout the algorithm for Step 2, $|V(W^*)|\geq n''/4\geq |V(H)|/32$ holds.
 We also use a parameter $\hat d=\frac{2^{30}\log n}{\alpha_0}$, and we will ensure that at all times, for every pair $x,y$ of vertices of $W^*$, $\dist_{W^*\setminus F}(x,y),\dist_{W^*\setminus F}(y,x)\leq \hat d$.

We maintain two additional graphs: graph $H'$ that is obtained from $H$ by adding a source vertex $s'$, and connecting it to every vertex in $V(W^*)$ with an edge of length $1$, and graph $H''$ that is obtained from $H$ by adding a vertex $s''$, connecting every vertex in $V(W^*)$ to it with an edge of length $1$, and then reversing the directions of all edges. Whenever a vertex $v$ leaves the set $V(W^*)$ of vertices, we delete the edge $(s',v)$ from $H'$, and we also delete the edge $(s'',v)$ from $H''$. Additionally, whenever an edge is deleted from graph $H$ due to an update operation, we delete the corresponding edge from both $H'$ and $H''$. When vertices are deleted from graph $H$ (after sparse cuts are returned by the algorithm), we also update both $H'$ and $H''$ with the deletions of these vertices.

We initialize the \EST data structure from \Cref{thm: directed weighted ESTree} for graph $H'$, with source vertex $s'$ and distance parameter $d+1$, and we similarly initialize \EST data structure for $H''$ with source vertex $s''$ and distance parameter $d+1$. We denote the corresponding trees by $T'$ and $T''$, respectively.

We will ensure that, throughout the algorithm, every vertex $v$ of $H$ lies in both $T'$ and $T''$, so that $\dist_H(V(W^*),v),\dist_H(v,V(W^*))\leq d$ always holds. In order to ensure this, we use the \EST data structures to identify vertices $v$ for which this no longer holds. Whenever we identify a vertex $v$ with $\dist_H(v,V(W^*))>d$, we  use the algorithm from \Cref{obs: ball growing2} in order to compute a well-structured cut $(A,B)$ in $H$ with $v\in A$, $V(W^*)\subseteq B$, and $|E_H(A,B)|\leq \frac{2^{11}\log n}{d}\cdot \min\set{|A|,|B|}\leq O\left(\frac{\log^6n}{d^*} \right )\cdot \min\set{|A|,|B|}$. If $|A|\leq |B|$, then we delete the vertices of $A$ from $H$, (which includes deleting these vertices and their incident edges from graphs $H,H'$, and $H''$, and both \EST data structures $T'$ and $T''$), and then continue the algorithm. If $|V(H)|$ falls below $n'\left(1-\frac{1}{\hat c\log^5n}\right )$, then the algorithm is terminated. Alternatively, if $|A|>|B|$, then we delete the vertices of $B$ from $H$, and terminate the current phase. Notice that in this case, at least $|V(W^*)|\geq n''/4\geq n/64$  vertices have been deleted from $H$ in the current phase.

Similarly, whenever we identify a vertex $v$ with $\dist_H(V(W^*),v)>d$, we  use the algorithm from \Cref{obs: ball growing2} (but with reversed directions of edges) in order to compute a well-structured cut $(A,B)$ in $H$ with $V(W^*)\subseteq  A$, $v\in B$, and $|E_H(A,B)|\leq \frac{2^{11}\log n}{d}\cdot \min\set{|A|,|B|}\leq O\left(\frac{\log^6n}{d^*} \right )\cdot \min\set{|A|,|B|}$. As before, if $|B|\leq |A|$, then we delete the vertices of $B$ from $H$, and from both \EST data structures. If $|V(H)|$ falls below $n'\left(1-\frac{1}{\hat c\log^5n}\right )$, then the algorithm is terminated, and otherwise we continue the algorithm as before. Alternatively, if $|B|>|A|$, then we delete the vertices of $A$ from $H$, and terminate the current phase. In the latter case, at least $n/64$ vertices have been deleted from $H$ in the current phase.

In either case, if the set of vertices that was deleted from $H$ is denoted by $Z$, then the running time of the algorithm from \Cref{obs: ball growing2}  is bounded by $O(\vol_H(Z))$. Therefore, the total time that the algorithm ever spends over the course of Step 2 to cut sparse cuts off from $H$ via \Cref{obs: ball growing2}  is bounded by $O(m)$. Additionally, the algorithms for maintaining the two \EST data structures have total update time $O(m\cdot d)$.
Therefore, the total update time of Step 2 of a single phase is so far bounded by $O(m\cdot d)\leq \otilde(m\cdot d^*)$.

For every edge $e\in E(H)$, we maintain a set $S(e)$ of all edges $e'\in E(W^*)\setminus F$ whose embedding path $P(e')\in \pset^*$ contains $e$. We can initialize the sets $S(e)$ of edges for all $e\in E(H)$ by scanning every edge $e'\in E(W^*)\setminus F$ and considering all edges on path $P(e')$. Clearly, the total time required to initialize the sets $S(e)$ of edges for all $e\in E(H)$ is subsumed by the time spent on constructing the embedding $\pset^*$ of $W^*\setminus F$ into $H$ in Step 1.

\subsection*{A Cleanup Procedure}
The purpose of the cleanup procedure is to ensure that every vertex of $H$ lies in both the trees $T'$ and $T''$, and additionally, that for every pair $x,y$ of vertices of $W^*$, $\dist_{W^*\setminus F}(x,y),\dist_{W^*\setminus F}(y,x)\leq \hat d$ holds. The cleanup procedure is executed before each query. In other words, we execute the procedure once at the beginning of Phase 2, before responding to any queries. Then, after every query, once graph $H$ is updated with the relevant edge deletions, we execute the procedure again.

Consider for example a \shortpath query, and suppose the algorithm returned a path $P^*$ in response to the query. Recall that some edges of $P^*$ may be deleted from $H$. As a result, it is possible that some vertices of $H$ no longer belong to one or both of the trees $T',T''$. As described already, we will compute a sparse cut in $H$ in order to fix this issue, but this may in turn affect the expander $W^*$. Additionally, the deletion of the edges from $H$ may lead to the deletion of some edges from the expander $W^*$. As a result, it is possible that the diameter of $W^*\setminus F$ grows beyond $\hat d$. In such a case, we will  compute  a sparse cut in graph $W^*\setminus F$, but this may in turn affect the trees $T',T''$, since, whenever a vertex $x$ leaves the set $V(W^*)$ of vertices, we delete edge $(s',x)$ from $T'$ and we delete edge $(s'',x)$ from $T''$. Similarly, it is possible that, at the beginning of Phase 2, not all vertices of $H$ lie in the trees $T',T''$, or that the diameter of $W^*\setminus F$ is greater than $\hat d$. Fixing each of these issues may in turn create new issues as described above. We now formally describe the cleanup procedure.

At the beginning of the cleanup procedure, we select an arbitrary vertex  $v\in V(W^*)$, and compute two shortest-paths trees $\hat T'$ and $\hat T''$ rooted at $v$ in graph $W^*\setminus F$, where the edges in $\hat T'$ are directed away from $v$, and the edges in $\hat T''$ are directed towards $v$. In other words, for every vertex $a\in V(W^*)$, the unique $v$-$a$ path in $\hat T'$ is a shortest $v$-$a$ path in $W^*\setminus F$, and the unique $a$-$v$ path in $\hat T''$ is a shortest $a$-$v$ path in $W^*\setminus F$. Note that both trees can be computed in time $O(n)$. We then perform a number of iterations. At the beginning of every iteration, we assume that we are given some vertex $v\in V(W^*)$, and two shortest-paths trees $\hat T'$ and $\hat T''$ in $W^*\setminus F$ that are both rooted at $v$, with the edges of $\hat T'$ directed away from $v$, and the edges of $\hat T'$ directed towards $v$. The iterations are executed as long as at least one of the following two conditions hold: (i) some vertex $u\in V(H)$ does not lie in $T'$ or in $T''$; or (ii) the depth of at least one of the trees $\hat T',\hat T''$ is greater than $\hat d$. We now describe the execution of a single iteration.

Assume first that some vertex $u\in V(H)$ no longer lies in either $T'$ or $T''$.
In this case, we say that the current iteration is a \emph{type-1} iteration. Assume first that $u\not \in T''$, 
 in other words, $\dist_H(u,V(W^*))>d$ holds. As we described already, we  use the algorithm from \Cref{obs: ball growing2} in order to compute a well-structured cut $(A,B)$ in $H$ with $u\in A$, $V(W^*)\subseteq B$, and $|E_H(A,B)|\leq \frac{2^{11}\log n}{d}\cdot \min\set{|A|,|B|}\leq O\left(\frac{\log^6n}{d^*} \right )\cdot \min\set{|A|,|B|}$. If $|A|\leq |B|$, then we delete the vertices of $A$ from $H$, (which includes deleting these vertices and their incident edges from graphs $H,H'$, and $H''$, and both \EST data structures $T'$ and $T''$). If $|V(H)|$ falls below $n'\left(1-\frac{1}{\hat c\log^5n}\right )$, then the algorithm is terminated, and otherwise we continue the algorithm. Alternatively, if $|A|>|B|$, then we delete the vertices of $B$ from $H$, and terminate the current phase.
Similarly, if some vertex $u\in V(H)$ no longer lies in $T'$, then, as desribed above, we  use the algorithm from \Cref{obs: ball growing2} (but with reversed directions of edges) in order to compute a well-structured cut $(A,B)$ in $H$ with $V(W^*)\subseteq  A$, $u\in B$, and $|E_H(A,B)|\leq \frac{2^{11}\log n}{d}\cdot \min\set{|A|,|B|}\leq O\left(\frac{\log^6n}{d^*} \right )\cdot \min\set{|A|,|B|}$. As before, if $|B|\leq |A|$, then we delete the vertices of $B$ from $H$, and from both \EST data structures. If $|V(H)|$ falls below $n'\left(1-\frac{1}{\hat c\log^5n}\right )$, then the algorithm is terminated, and otherwise we continue the algorithm. Alternatively, if $|B|>|A|$, then we delete the vertices of $A$ from $H$, and terminate the current phase. In the latter case, at least $n/64$ vertices have been deleted from $H$ in the current phase.
Let $Z$ be the set of all vertices that were deleted from $H$ in the current iteration, and let
$\hat E=\bigcup_{e\in E_H(A,B)}S(e)$, that is, these are all edges $e'\in E(W^*)\setminus F$ whose embedding path contains an edge of $E_H(A,B)$.
Since the embedding of $W^*\setminus F$ into $H$ causes congestion at most $O(d\log^4 n)$, we get that $|\hat E|\leq O(d\log^4 n)\cdot |E_H(A,B)|\leq O(|Z|\cdot \log^6n)$. 
We delete the edges of $\hat E$ from graph $W^*$. If $Z\cap V(W^*)=\emptyset$, then we say that the current iteration is a \emph{type-(1a) iteration}, and we continue to the next iteration. Notice that in this case, the running time of the current iteration is $\otilde(\vol_H(Z)\cdot d^*)$, and we \emph{charge} the running time of the current iteration to the set $Z$ of vertices.

Assume now that $Z\cap V(W^*)\neq \emptyset$. In this case, we say that the current iteration is a \emph{type-(1b) iteration}.  Let $X=Z\cap V(W^*)$. So far the total running time of the iteration is  $\tilde O(\vol_H(Z)\cdot d^*)$, and it is charged to the set $Z$ of vertices. Next, we delete the vertices of $X$ from $W^*$. The remainder of the running time of the current iteration will be charged to set $X$. Notice that, if $Z=A$, and we denote $Z'=B$, then graph $W^*\setminus F$ no longer contains any edges connecting vertices of $Z\cap V(W^*)$ to vertices of $Z'\cap V(W^*)$. In other words, $(Z\cap V(W^*),Z'\cap V(W^*))$ is a cut of sparsity $0$ in $W^*\setminus F$.
Similarly, if $Z=B$, and we denote $Z'=A$, then cut $(Z'\cap V(W^*),Z\cap V(W^*))$ is a cut of sparsity $0$ in $W^*\setminus F$. Once the vertices of $X$ are deleted from $W^*$, we select an arbitrary vertex $v\in V(W^*)$, and compute two shortest-paths trees $\hat T'$ and $\hat T''$ rooted at $v$ in graph $W^*\setminus F$, where the edges in $\hat T'$ are directed away from $v$, and the edges in $\hat T''$ are directed towards $v$. It is easy to verify that the total running time of a type-(1b) iteration is bounded by $\otilde(\vol_H(Z)\cdot d^*)+O(n)$. We charge time $\otilde(\vol_H(Z)\cdot d^*)$ to the set  $Z$ of vertices that is deleted from $H$ in the current iteration, and we charge time $O(n)$ to the set $X$ of vertices that is deleted from $W^*$ in the current iteration. 
Lastly, graphs $H'$ and $H''$ need to be updated to reflect the deletion of the vertices of $X$ from $W^*$: namely, for all $u\in X$, we delete the edge $(s',u)$ from $H'$, and we delete the edge $(s'',u)$ from $H''$. This, in turn, may require updating the \EST data structures $T'$ and $T''$. The time required to update the two trees is counted as part of the total update time of these data structures, and not as part of the running time of the current cleanup procedure.
This completes the description of a type-(1b) iteration.

If the depth of either of the trees $\hat T',\hat T''$ is greater than $\hat d$, then we say that the current iteration is a type-2 iteration. In this case, we can compute, in time $O(1)$, a pair $a,b\in V(W^*)$ of vertices with $\dist_{W^*\setminus F}(a,b)>\hat d$.
Recall that  $\hat d=\frac{2^{30}\log n}{\alpha_0}$, and the maximum vertex degree in $W^*$ is bounded by $D=18$. We can now use the algorithm from \Cref{obs: ball growing} to compute, in time $O(n)$, a cut $(A,B)$ in $W^*\setminus F$ of sparsity at most $\phi=\frac{32D\log n}{\hat d}\leq \frac{\alpha_0}{2^{20}}$. If $|A|\leq |B|$, then we delete from $W^*$ the vertices of $A$, and otherwise we delete from $W^*$ the vertices of $B$.
If we denote by $X$ the set of vertices deleted from $W^*$ in the current iteration, then graphs $H'$ and $H''$ need to be updated accordingly (namely, for all $u\in X$, we delete the edge $(s',u)$ from $H'$ and we delete the edge $(s'',u)$ from $H''$). This, in turn, may require updating the \EST data structures $T'$ and $T''$. The time required to update the two trees is counted as part of the total update time of these data structures, and not as part of the running time of the current cleanup procedure.
We then select an arbitrary vertex $v\in V(W^*)$, and compute two shortest-paths trees $\hat T'$ and $\hat T''$ rooted at $v$ in graph $W^*\setminus F$, where the edges in $\hat T'$ are directed away from $v$, and the edges in $\hat T''$ are directed towards $v$. Clearly, the running time of a type-2 iteration is $O(n)$, and we charge it to the set of vertices (either $A$ or $B$) that was deleted from $W^*$ over the course of the current iteration.  

The algorithm for the cleanup procedure terminates once every vertex of $H$ lies in both $T'$ and $T''$, and the depths of the trees $\hat T'$ and $\hat T''$ are both bounded by $\hat d$.

We now bound the total time required to perform the cleanup procedure over the course of the entire Step 2.
Since we only need to respond to at most $N=O\left(\frac{n}{(d^*)^2\log n}\right )$ queries, the number of times that the cleanup procedure is called is bounded by $N$. The time that is required in order to compute the shortest-path trees $\hat T',\hat T''$ in $W^*\setminus F$ for the first time at the beginning of each cleanup procedure is then bounded by $O(n\cdot N)\leq \otilde(n^2)$. 

If we denote by $Z$ the set of all vertices that are deleted over the course of the phase from graph $H$, and by $\xset$ the collection of all subsets of vertices that were deleted from $V(W^*)$ over the course of the phase, then, from our charging scheme, the remaining running time that is spent on the repeated executions of the cleanup procedure can be bounded by $\otilde(\vol_H(Z)\cdot d^*)+O(|\xset|\cdot n)\leq \otilde(md^*+n^2)$. Therefore, the total time spent over the course of a single phase on executing the cleanup procedure is bounded by $\otilde(md^*+n^2)$.

\subsection*{Updates to the Expander $W^*$}
Recall that, at the beginning of Step 2, we are given an $\alpha_0$-expander $W^*$, with $|V(W^*)|=n''\geq n'/8$ and maximum vertex degree at most $18$, together with a subset $F\subseteq E(W^*)$ of at most $\frac{\alpha_0\cdot n'}{500}$ edges of $W^*$, and an embedding $\pset^*$ of $W^*\setminus F$ into $H$ via paths of length at most $O(d\log^2 n)$, that cause congestion at most $O(d\log^4n)$. 

Over the course of the algorithm, we may delete some edges from $W^*$, and we may also compute sparse cuts in $W^*$ that decompose $W^*$ into smaller subgraphs. For convenience, we will maintain a collection $\xset$ of subsets of vertices of $W^*$ that partition the original set $V(W^*)$. At the beginning of the algorithm, $\xset$ contains a single set $V(W^*)$ of vertices. Over the course of the algorithm, if we identify a sparse cut $(X,Y)$ in the current graph $W^*\setminus  F$, we replace $V(W^*)$ with $X$ and $Y$ in $\xset$, and we delete the vertices of the smaller of the two sets from $W^*$. The key is to ensure that, since the cuts that we compute are sparse, and since the number of edges deleted from $W^*$ over the course of the second step is relatively small, $|V(W^*)|\geq n''/4$ continues to hold throughout Step 2 of the phase.

There are three types of updates that expander $W^*$ may undergo over the course of the algorithm, that we describe next.

\paragraph{First Type of Updates.}
Suppose some edge $e$ was deleted from $H$, as part of the update sequence that  graph $H$ undergoes. In this case, we delete every edge $e'\in S(e)$ from $W^*$. For each such edge $e'$, we also delete $e'$ from the set $S(e'')$ of every edge $e''\in E(H)$ that lies on the embedding path $P(e')\in \pset^*$. Recall that, over the course of the current phase, our algorithm will need to respond to at most $N$ \shortpath queries. If $Q$ is a path that is returned in response to such a query, then $Q$ may contain at most $d^*$ edges, and the only edges that may be deleted from $H$ over the course of the current phase as part of the update sequence are edges that lie on such paths. Therefore, the total number of edges that may be deleted from $H$ as part of the update sequence during a single phase is at most $N\cdot d^*$. Since the embedding of $W^*\setminus F$ causes congestion at most $O(d\log^4n)$ in $H$, the number of corresponding edge deletions from $W^*$ is bounded by:

\[O(N\cdot d^*\cdot d\log^4n)\leq N\cdot (d^*)^2\cdot \log n\leq \frac{\alpha_0 n}{2^{20}}, \]

since $d=\frac{d^*}{c'\log^3 n}$, where $c'$ can be chosen as a large enough constant and $N=\frac{\alpha_0 n}{2^{20} (d^*)^2\log n}$.
Throughout, we denote by $E'$ the set of edges that were deleted from $W^*$ as the result of this type of updates to $W^*$, so $|E'|\leq \frac{\alpha_0 n}{2^{20}}$ always holds. The time that is required to update graph $W^*$ with this type of updates is subsumed by the time spent in Step 1 of the algorithm.

\paragraph{Second Type of Updates.} 
The second type of update happens when the algorithm for Step 2 identifies a well-structured cut $(A,B)$ in $H$ with $|E_H(A,B)|\leq \frac{2^{11}\log n}{d}\cdot \min\set{|A|,|B|}$. Denote sets $A$ and $B$ by $Z$ and $Z'$, so that $Z$ is the smaller of the two sets, breaking ties arbitrarily. Recall that our algorithm then deletes the vertices of $Z$ from $H$. If, as the result of this deletion, the number of vertices in $H$ falls below $\left (1-\frac{1}{\hat c \log^5 n}\right )n'$, then the current phase terminates. If this did not happen, then we update graph $W^*$ as follows. Let $\hat E=\bigcup_{e\in E_H(A,B)}S(e)$, that is, these are all edges $e'\in E(W^*)\setminus F$ whose embedding path contains an edge of $E_H(A,B)$. Since the embedding of $W^*\setminus F$ into $H$ causes congestion at most $O(d\log^4 n)$ in $H$, we get that $|\hat E|\leq O(d\log^4 n)\cdot |E_H(A,B)|\leq O(|Z|\cdot \log^5n)$.

We start by deleting the edges of $\hat E$ from $W^*$. Note that, if $Z=A$ and $Z'=B$, then graph $W^*\setminus F$ no longer contains any edges connecting vertices of $Z\cap V(W^*)$ to vertices of $Z'\cap V(W^*)$. In other words, $(Z\cap V(W^*),Z'\cap V(W^*))$ is a cut of sparsity $0$ in $W^*\setminus F$. We replace the set $V(W^*)$ of vertices in $\xset$ by two sets: $Z\cap V(W^*)$ and $Z'\cap V(W^*)$. We also delete the vertices of $Z$ from $W^*$. Similarly, if $Z=B$ and $Z'=A$, then $(Z'\cap V(W^*),Z\cap V(W^*))$ is a cut of sparsity $0$ in $W^*\setminus F$. We replace the set $V(W^*)$ of vertices in $\xset$ by two sets: $Z\cap V(W^*)$ and $Z'\cap V(W^*)$, and we delete the vertices of $Z$ from $W^*$.

We denote by $E''$ the set of edges that were deleted from $W^*$ as the result of this type of transformation: in other words, for each sparse cut $(A,B)$ of $H$ that our algorithm computes, we add all edges of $\bigcup_{e\in E_H(A,B)}S(e)$ to $E''$. Recall that the number of edges added to $E''$ due to such cut $(A,B)$ is bounded by 
$O(\log^5n)\cdot\min\set{|A|,|B|}$, and that the vertices of $A$ or the vertices of $B$ are deleted from $H$. Since the total number of vertices that may be deleted from $H$ before the phase terminates is bounded by $\frac{n'}{\hat c\log^5 n}$, where $\hat c$ can be chosen to be a large enough constant, we can ensure that throughout the phase $|E''|\leq \frac{\alpha_0 n}{2^{20}}$ holds.

\paragraph{Third Type of Updates.}
Recall that the cleanup procedure may sometimes identify cuts $(A,B)$ in $W^*\setminus F$, with $|E_{W^*\setminus F}(A,B)|\leq \frac{\alpha_0}{2^{20}}\cdot \min\set{|A|,|B|}$. Whenever this happens, it replaces the set $V(W^*)$ of vertices in $\xset$ with $V(A)$ and $V(B)$. Denote $A$ and $B$ by $Z$ and $Z'$, where $Z$ is the smaller of the two sets, breaking ties arbitrarily. We then delete the vertices of $Z$ from graph $W^*$.
Notice that $|E_{W^*\setminus F}(A,B)|\leq \frac{\alpha_0}{2^{20}}\cdot|Z|$.

 Let $E'''$ denote the collection of all edges that participate in such cuts. In other words, for each such sparse cut $(A,B)$ of $W^*\setminus F$, we include the edges of $E_{W^*\setminus F}(A,B)$ in $E'''$. From the above discussion, throughout the algorithm, $|E'''|\leq \frac{\alpha_0}{2^{20}}\cdot n''\leq \frac{\alpha_0}{2^{20}}\cdot n$ holds.
 
 Lastly, we denote $E^*=F\cup E'\cup E''\cup E'''$. Recall that $|F|\leq \frac{\alpha_0\cdot n'}{500}\leq \frac{\alpha_0\cdot n''}{62}$, since $n''\geq n'/8$. The cardinality of each of the three sets $E',E''$ and $E'''$ is bounded by $\frac{\alpha_0\cdot n}{2^{20}}\leq \frac{\alpha_0\cdot n'}{2^{19}}\leq \frac{\alpha_0\cdot n''}{2^{14}}$.
 Overall, we get that $|E^*|\leq \frac{\alpha_0\cdot n''}{32}$ holds throughout the algorithm.
We now show that, throughout the algorithm, $|V(W^*)|\geq n''/4$ holds.

\begin{claim}\label{claim: expander remains large}
	Throughout the algorithm,  $|V(W^*)|\geq n''/4$.
\end{claim}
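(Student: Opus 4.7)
My plan is to prove the invariant $|V(W^*)| \ge n''/4$ by contradiction, combining the $\alpha_0$-expansion of the initial $W^*$ with a continuity argument on the vertex-removal process.

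First, I would suppose for contradiction that the invariant is violated at some time $\tau$, and let $U = V(W^*)$ at time $\tau$ and $\bar U = V(W^*)_{\text{initial}} \setminus U$, so $|U| < n''/4$ and $\min\{|U|, |\bar U|\} = |U|$. Applying the $\alpha_0$-expansion of $W^*_{\text{initial}}$ to the cut $(\bar U, U)$ yields $|E_{W^*_{\text{initial}}}(\bar U, U)| \ge \alpha_0 |U|$.

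Next, I would establish the matching upper bound $|E_{W^*_{\text{initial}}}(\bar U, U)| \le |E^*| \le \alpha_0 n''/32$. The key observation is that for every such edge $(v, u)$ with $v \in \bar U$ and $u \in U$, when $v$ was deleted from $W^*$ it lay on the removed side $X$ of a sparse cut, while $u$ lay on the surviving side. A case analysis over the three update types (Types 1b, 2, and 3), leveraging the fact that the ball-growing algorithms (Observations~\ref{obs: ball growing} and~\ref{obs: ball growing2}) consistently produce cuts whose removed smaller side is the source of the sparsity-controlled direction in $W^*$, shows that $(v, u)$ always lies in the controlled direction and is therefore recorded in one of $E'''$, $E''$, or $F$. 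Combined with the expansion lower bound, this gives $|U| \le n''/32$.

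Running the symmetric argument in the regime $|U| > n''/2$ yields analogously $|U| \ge 31 n''/32$; hence at all times $|V(W^*)| \in [0, n''/32] \cup [31 n''/32, n'']$. Since $|V(W^*)|$ starts at $n''$ and decreases by at most $|V(W^*)|/2 \le n''/2$ in a single Type 3 step (or at most $n'/(\hat c \log^5 n)$ in a Type 2 or Type 1b step), both of which are strictly less than $30 n''/32$ for $n$ sufficiently large and $\hat c$ a sufficiently large constant, $|V(W^*)|$ can never cross the forbidden gap $(n''/32, 31 n''/32)$ in a single step. Therefore $|V(W^*)| \ge 31 n''/32 \ge n''/4$ throughout, contradicting the choice of $\tau$.

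The principal technical obstacle is the second step -- verifying, for each type of vertex-removal update, that the removed set indeed corresponds to the source of the sparse direction in $W^*$, so that the cross-cut edges of interest land in the explicitly tracked sets $E'''$, $E''$, or $F$ rather than being lost implicitly via the uncontrolled reverse direction of the cut. This will rely on a careful inspection of the construction of $\hat E$ in Type 2 updates and of the output of the ball-growing algorithm in Type 3 updates.
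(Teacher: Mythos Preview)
Your approach has a genuine gap in the second step. The ``fact'' that the ball-growing algorithm of \Cref{obs: ball growing} always returns a cut whose smaller side is the \emph{source} of the sparsity-controlled direction is false. The algorithm runs a BFS from $x$ and a reverse-BFS from $y$ in parallel; when the \emph{second} process terminates first, it returns $(A,B)=(V\setminus S'_j,\,S'_j)$ with $|E(A,B)|\le\phi\cdot|S'_j|=\phi\cdot|B|$. If $|S'_j|<|V|/2$ (which is exactly the regime in which that process is guaranteed to succeed), then $B$ is the smaller side and is removed, yet the controlled direction is $A\to B$, i.e.\ remaining-to-removed. Consequently, in a Type~3 update where this happens, an edge $(v,u)\in E_{W^*_{\text{initial}}}$ with $v\in B$ (removed) and $u\in A$ (remaining) lies in the \emph{un}controlled direction $B\to A$; it is not placed into $E'''$, and there is no reason it must lie in $F$, $E'$, or $E''$ either. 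Your bound $|E_{W^*_{\text{initial}}}(\bar U,U)|\le|E^*|$ therefore fails, and with it the entire argument (the continuity step never gets off the ground).

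The paper's proof sidesteps this asymmetry. Rather than bounding a single direction across the final cut $(\bar U,U)$, it records the sequence $X_1,\dots,X_r$ of removed pieces and observes that for each $X_j$ one has, in $W'=W\setminus E^*$, \emph{either} $|E_{W'}(X_j,X'_j)|\le\frac{\alpha_0}{2^{20}}|X_j|$ \emph{or} $|E_{W'}(X'_j,X_j)|\le\frac{\alpha_0}{2^{20}}|X_j|$ --- whichever direction was actually controlled at the time of removal. It then invokes \Cref{obs: from many sparse to one balanced} to merge these one-sided sparse cuts into a single balanced cut $(A,B)$ in $W'$ with $|A|,|B|\ge n''/4$, adds back the $\le\alpha_0 n''/32$ edges of $E^*$, and contradicts $\alpha_0$-expansion of the original $W$. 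To repair your argument you would essentially need to reproduce this ``either-or'' bookkeeping, at which point the direct use of \Cref{obs: from many sparse to one balanced} is both cleaner and shorter than the continuity detour.
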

\begin{proof}
	Assume otherwise. Consider some time $\tau$ at which $|V(W^*)|< n''/4$ holds. For convenience, we denote by $W$ the original graph $W^*$, which is an $\alpha_0$-expander, with $n''$ vertices and we denote by $W'$ the graph $W\setminus E^*$, with respect to the set $E^*$ of edges at time $\tau$. We also consider the set $\xset$ of vertex subsets at time $\tau$. We denote the sets of $\xset$ by $X_1,X_2,\ldots,X_r$, where $X_r=V(W^*)$ at time $\tau$, and the remaining sets are indexed in the order in which they were added to $\xset$. For all $1\leq j<r$, let $X'_j=X_{j+1}\cup\cdots\cup X_r$. At the time when $X_j$ was added to $\xset$, it defined a sparse cut in graph $W^*\setminus F$. In other words, either $|E_{W'}(X_j,X'_j)|\leq \frac{\alpha_0}{2^{20}}|X_j|$, or $|E_{W'}(X'_j,X_j)|\leq \frac{\alpha_0}{2^{20}}|X_j|$ holds. Let $\alpha$ be such that $\max_{X_j\in \xset}|X_j|=\alpha\cdot n''$. Then it is easy to verify that $\alpha\leq 1/2$ must hold. From \Cref{obs: from many sparse to one balanced}, there is a cut $(A,B)$ in $W'$ with $|A|,|B|\geq n''/4$, and $|E_{W'}(A,B)|\leq \frac{\alpha_0\cdot n''}{2^{20}}$.
	
	Observe that $|E_W(A,B)|\leq |E^*|+\frac{\alpha_0\cdot n''}{2^{20}}\leq \frac{\alpha_0\cdot n''}{16}$. Since $|A|,|B|\geq n''/4$, we obtain a cut $(A,B)$ in $W$, whose sparsity is below $\alpha_0$, contradicting the fact that $W$ is an $\alpha_0$-expander.
\end{proof}

This completes the description of the data structures that we maintain over the course of Step 2. Next, we describe an algorithm for responding to a \shortpath queries.

\subsection*{Responding to \shortpath queries}
In a \shortpath query, we are given a pair $x,y$ of vertices of $H$, and our goal is to compute a simple path connecting $x$ to $y$ in $H$ of length at most $d^*$. %We note that we only need to ensure that the total time spent on responding to all $N$ queries is suitably bounded, but the time spent on responding to a specific query may be quite large. 
%The algorithm for responding to a query consists of two steps. In the first step, we compute the desired path connecting $x$ to $y$ in $H$, whose length is at most $d^*$. After that, some edges may be deleted from $H$. We then execute a second, clean-up step, whose purpose is to ensure that for every pair $a,b$ of vertices that remain in the expander $W^*$,
%$\dist_{W^*}(a,b),\dist_{W^*}(b,a)\leq \hat d$ holds, where  $\hat d=\frac{2^{30}\log n}{\alpha_0}$ is the parameter that we defined above. Additionally, we will need to ensure that every vertex of $H$ lies in both the trees $T'$ and $T''$.

In order to respond to the query, we use \EST $T''$ to compute a path $P'$ in graph $H$ of length at most $d$, connecting $x$ to some vertex $x'\in V(W^*)$, in time $O(d)$. Similarly, we use \EST $T'$ in order to compute a path $P''$ of length at most $d$, connecting some vertex $y'\in V(W^*)$ to $y$, in time $O(d)$. Next, we perform a BFS in graph $W^*\setminus F$, from vertex $x'$, in order to compute the shortest $x'$-$y'$ path $Q$ in $W^*\setminus F$. Recall that we are guaranteed that the length of $Q$ in $W^*\setminus F$ is bounded by  $\hat d=\frac{2^{30}\log n}{\alpha_0}$. We then use the embedding of $W^*\setminus F$ into $H$ via paths of lengths at most $O(d\log^2 n)$ in order to obtain a path $\hat P$ connecting $x'$ to $y'$ in $H$, whose length is at most $O(d \log^3n)$.
Let $P^*$ be the path that is obtained by concatenating  $P',\hat P$ and $P''$. Path $P^*$ connects $x$ to $y$ in graph $H$, and its length is bounded by $2d+O(d\log^3n)$. Since we have set $d=\frac{d^*}{c'\log^3 n}$, where $c'$ is a large enough constant, the length of $P^*$ is bounded by $d^*$. We transform $P^*$ into a simple path in time $O(d^*)$ and return the resulting path. This finishes the algorithm for responding to the query. Notice that its running time is bounded by $O(n+d^*)$, since a BFS search in $W^*$ can be performed in time $O(n)$, and the time required to compute paths $P'$ and $P''$ and to transform $Q$ into path $\hat P$ is bounded by $O(d^*)$. 
Since the number of \shortpath queries that we need to process in a single phase is bounded by $N=\tilde O(n/(d^*)^2)$, the total time that is required in order to respond to \shortpath queries over the course of a single phase is bounded by $\tilde O(n^2)$.
%Before continuing to the second clean-up step, we select an arbitrary vertex $v\in V(W^*)$, and compute two shortest-paths trees $\hat T'$ and $\hat T''$ rooted at $v$ in graph $W^*$, where the edges in $T'$ are directed away from $v$, and the edges in $T''$ are directed towards $v$. In other words, for every vertex $a\in V(W^*)$, the unique $s$-$a$ path in $T'$ is the shortest $s$-$a$ path in $W^*$, aand the unique $a$-$s$ path in $T''$ is the shortest $a$-$s$ path in $W^*$.

\subsection*{Final Analysis of the Running Time.}
We now bound the total running time of a single phase. Recall that the running time of Step 1 is bounded by $O(m\cdot d+n^2)\leq \otilde(md^*+n^2)$. The total time spent on executing the cleanup procedure over the course of the phase is bounded by  $\otilde(md^*+n^2)$, and the total time spent on responding to queries is bounded by $\tilde O(n^2)$. Lastly, and the time spent on maintaining \EST data strucures $T'$ and $T''$ is bounded by $\tilde O(m d)\leq \otilde(md^*)$. 
Additional time that is required to maintain the expander $W^*$ is bounded by the above running times, and so the running time of a single phase is bounded by $\tilde O(m\cdot d^*+n^2)$.

Since the number of phases is bounded by $\tilde O\left (\max\set{1,\frac{\Delta\cdot (d^*)^2}{n}}\right )$, we get that the total running time of the algorithm is bounded by:

\[\otilde(m\cdot d^*+n^2)\cdot  \max\set{1,\frac{\Delta\cdot (d^*)^2}{n}}. \]

%------------------------------------------
%------------------------------------------
%------------------------------------------
%------------------------------------------
%------------------------------------------

\subsection{Algorithm \algembedorcut: Proof of \Cref{thm: embed expander or cut}}
\label{sec: embed or cut}
We use the Cut-Matching Game -- a framework that was first proposed by \cite{KRV}. The specific variant of the game is due to \cite{KKOV}. Its slight modification, that we also use, was proposed in \cite{detbalanced}, and later adapted to directed graphs in \cite{SCC}.
We also note that \cite{directed-CMG} extended the original cut-matching game of \cite{KRV} to directed graphs, though we do not use this variation here.
We now define the variant of the game that we use.

The game is played between two players: the Cut Player and the Matching Player. 
The game starts with a directed graph $W$ that contains $n$ vertices  and no edges, where $n$ is an even integer, and proceeds in iterations. In the $i$th iteration, the Cut Player chooses a cut $(A_i,B_i)$ in $W$ with $|A_i|,|B_i|\geq n/10$, and $|E_W(A_i,B_i)|\leq n/100$. It then computes two arbitrary equal-cardinality sets $A_i',B_i'$ of vertices with $|A'_i|\geq n/4$, such that either $A_i\subseteq A'_i$, or $B_i\subseteq B'_i$. The matching player then needs to return two perfect matchings: a matching $M_i\subseteq A'_i\times B'_i$, and  a matching $M'_i\subseteq B'_i\times A'_i$, both of cardinality $|A'_i|$. The edges of $M_i\cup M'_i$ are added to $W$, and the iteration terminates. The following bound on the number of iterations in this variant of the Cut-Matching Game was shown in \cite{SCC}, extending a similar result of \cite{KKOV} for undirected graphs.

\begin{theorem}[Lemma 7.4 in \cite{SCC}; see also Theorem 7.3 and the following discussion]\label{thm: CMG bound}
The number of iterations in the above Cut-Matching Game is bounded by $O(\log n)$.
\end{theorem}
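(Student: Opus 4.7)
The plan is to adapt the potential-function argument of Khandekar-Khot-Orecchia-Vishnoi for the undirected cut-matching game, following the extension to the directed setting given in \cite{SCC}. I would exhibit a specific Cut Player strategy and show that against any Matching Player, a well-chosen potential drops by a constant factor per round.

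For each vertex $v \in V(W)$, maintain a vector $p_v^{(t)} \in \R^n$, initialized to $p_v^{(0)} = e_v - \tfrac{1}{n}\mathbf{1}$. After round $t$, for every pair $(u,v)$ appearing in $M_t \cup M_t'$ (treated as an unordered pair), replace both $p_u^{(t-1)}$ and $p_v^{(t-1)}$ by $\tfrac{1}{2}(p_u^{(t-1)} + p_v^{(t-1)})$; unmatched vectors are unchanged. Define $\Phi_t = \sum_{v} \|p_v^{(t)}\|_2^2$. A direct computation gives $\Phi_0 = n-1$ and $\Phi_{t-1} - \Phi_t = \tfrac{1}{2}\sum_{(u,v)\in M_t\cup M_t'} \|p_u^{(t-1)} - p_v^{(t-1)}\|_2^2$, and since each averaging preserves $\sum_v p_v$, this sum remains $0$ throughout.

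The Cut Player strategy is projection-based: pick a direction $r \in \R^n$ that captures $\Omega(\Phi_{t-1}/n)$ of the variance of $\{p_v^{(t-1)}\}$ (e.g., the top singular direction of the matrix whose rows are the $p_v^{(t-1)}$), sort the vertices by $\langle p_v^{(t-1)}, r \rangle$, and search sweep cuts along this ordering for a cut $(A_t,B_t)$ with $|A_t|,|B_t| \ge n/10$ and $|E_W(A_t,B_t)| \le n/100$. If no such cut is found, the Cut Player declares termination. I would then prove: (i) if termination is declared, then $\Phi_{t-1} = O(1)$, because otherwise a Cheeger/ball-growing argument on the sorted ordering would produce a balanced sparse cut; and (ii) if the game continues, then the forced bipartite matching between $A_t'$ and $B_t'$ must pair a constant fraction of vertices across the projection gap, giving $\sum_{(u,v)\in M_t\cup M_t'}\|p_u^{(t-1)} - p_v^{(t-1)}\|_2^2 = \Omega(\Phi_{t-1})$. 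Combining (i) and (ii) yields $\Phi_t \le (1-c)\Phi_{t-1}$ while the game continues, and with $\Phi_0 \le n$ and termination at $\Phi_t = O(1)$, at most $O(\log n)$ rounds can occur.

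The main obstacle will be the quantitative claim (ii): the Matching Player is free to choose $M_t, M_t'$, and one must rule out matchings concentrated on pairs $(u,v)$ with $p_u \approx p_v$. The structural constraints enter precisely here: either $A_t \subseteq A_t'$ or $B_t \subseteq B_t'$ with $|A_t'|,|B_t'| \ge n/4$, and only $n/100$ edges of $W$ currently cross $(A_t,B_t)$. Since each matching is perfect between $A_t'$ and $B_t'$ (which straddle the projection gap induced by $r$), at least $\Omega(n)$ matched pairs must have one endpoint deep on the $A_t$-side and the other deep on the $B_t$-side of the sweep, each contributing $\Omega(\Phi_{t-1}/n)$ to the drop. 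The directed setting only adds a second matching $M_t'$ per round, which if anything strengthens the decrease; the two matchings $M_t, M_t'$ can be processed sequentially with no loss in the bound.
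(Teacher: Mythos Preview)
The paper does not prove this theorem; it is cited as Lemma~7.4 of \cite{SCC}, so there is no in-paper proof to compare against.

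Your outline follows the KKOV potential method, which is indeed the technique behind the cited result, but there is a genuine mismatch between what you set out to prove and what the theorem asserts. You propose to \emph{exhibit a specific Cut Player strategy} (project onto the dominant variance direction of the $p_v$'s, then sweep) and argue that for this strategy the potential drops by a constant factor against any Matching Player. That establishes only the existential statement: some Cut Player terminates the game in $O(\log n)$ rounds. The theorem as stated --- and, crucially, as the paper uses it in the proof of \Cref{thm: embed expander or cut} --- is the universal statement: for \emph{any} valid sequence of Cut Player moves and Matching Player responses, the game lasts $O(\log n)$ rounds. This distinction matters because the paper's actual Cut Player (\Cref{thm: cut player outer}) is not projection-and-sweep at all; it tries to embed an explicit expander into $W$ and, on failure, outputs whatever sparse balanced cut the failed embedding exposes. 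The $O(\log n)$ iteration bound must apply to \emph{that} Cut Player.

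Your sketch of claim (ii) leans on the projection direction $r$ and the sweep ordering (``matched pairs\ldots deep on the $A_t$-side\ldots of the sweep''), which is exactly what does not transfer to an arbitrary cut. The argument in \cite{KKOV,SCC} instead works directly from the hypothesis $|E_{W_{t-1}}(A_t,B_t)|\le n/100$: since $W_{t-1}$ is the union of all earlier matchings, at most $n/100$ of the previous averaging steps ever crossed $(A_t,B_t)$, so the mass vectors $\{p_a^{(t-1)}\}_{a\in A_t}$ remain concentrated on $A_t$-coordinates and are therefore well separated from $\{p_b^{(t-1)}\}_{b\in B_t'\subseteq B_t}$, regardless of how $(A_t,B_t)$ was chosen. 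You already flag the sparsity constraint as the key structural input; the fix is to derive the separation (and hence the multiplicative potential drop) directly from it, without invoking the projection direction.
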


From the above theorem, there is an absolute constant $\cCMG>0$, such that, after $\cCMG\log n$ iterations of the game, for every cut $(A,B)$ in $W$ with $|A|,|B|\geq n/10$, $|E_W(A,B)|>n/100$ must hold. Notice that the maximum vertex degree in the final graph $W$ is bounded by $2\cCMG\log n$.

In order to implement the Cut-Matching Game, we design two algorithms: an algorithm for the Cut Player, and an algorithm for the Matching Player. We note that previous work, such as, e.g. \cite{SCC} also designed similar algorithms, but since we focus here on a more specific setting, we believe that our algorithm for the cut player is significally simpler, while the algorithm for the Matching Player achieves better parameters.

The algorithm for the Cut Player is summarized in the following theorem, whose proof is deferred to \Cref{subsec: algorithm for the cut player}.
The theorem uses the constant $\alpha_0$ from \Cref{thm:explicit expander}.
\begin{theorem}\label{thm: cut player outer}
	There is a deterministic algorithm, that, given an $n$-vertex directed graph $W$ with maximum vertex degree at most $O(\log n)$, returns one of the following:
	
	\begin{itemize}
		\item Either a cut $(A,B)$ in $W$ with $|E_W(A,B)|\leq n/100$ and $|A|,|B|\geq n/10$; or
		
		\item another directed graph $W'$ with $|V(W')|\geq n/4$ and maximum vertex degree at most $18$, such that $W'$ is an $\alpha_0$-expander, together with a subset $F\subseteq E(W')$ of at most $\frac{\alpha_0\cdot n}{1000}$ edges of $W'$, and an embedding $\pset$ of $W'\setminus F$ into $W$ via paths of length at most $\td=O(\log^2 n)$, that cause congestion at most $\eta=O(\log^2 n)$.
	\end{itemize}
	
	The running time of the algorithm is $\otilde(n^2)$.
\end{theorem}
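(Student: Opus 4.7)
The plan is to implement the Cut Player by attempting to embed an explicit expander into $W$, falling back to extracting a balanced sparse cut when the embedding fails. First, apply \Cref{thm:explicit expander} to build, in $O(n)$ time, a directed $n$-vertex $\alpha_0$-expander $W_0$ with maximum in- and out-degree $9$; identify $V(W_0)$ with $V(W)$. The goal is then to embed $W_0 \setminus F$ into $W$ for some $F$ with $|F| \leq \alpha_0 n/1000$, via paths of hop-length at most $\td = O(\log^2 n)$ and edge-congestion at most $\eta = O(\log^2 n)$; if we fail to produce such an embedding we will extract a cut instead.

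To construct the embedding I will use an MWU-style routing analogous to the framework of \Cref{sec: high level}. Initialize $\ell(e) = 1/(4|E(W)|)$ for each $e \in E(W)$, and repeatedly scan the demands $(u,v) \in E(W_0)$: for each unfinished demand, run a hop-bounded Dijkstra on $(W, \ell)$; if a path $P$ of $\ell$-length at most $1$ and hop-length at most $\td$ is returned, route a $1/\log n$-fractional unit along $P$ and double $\ell(e')$ for every $e' \in P$. After $O(\log n)$ scans, each demand is either completely routed or its persistent failure serves as a dual certificate that it cannot be routed within the budget, in which case it is added to $F$. The fractional routing can then be rounded integrally with a constant-factor loss, giving congestion $\eta$. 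Since each Dijkstra on $W$ costs $\otilde(|E(W)|) = \otilde(n)$ and the total number of Dijkstra calls is $O(|E(W_0)| \log n) = \otilde(n)$, the embedding construction runs in $\otilde(n^2)$.

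If at termination $|F| \leq \alpha_0 n/1000$, output $(W_0, F, \pset)$, matching the second alternative of the theorem. Otherwise, I extract a balanced sparse cut in $W$ from the final length function $\ell$: many unrouted pairs $(u,v) \in F$ have $\dist_\ell(u,v) > 1$, so $\sum_e \ell(e)$ concentrates on relatively few edges, and several candidate sparse cuts can be found via ball-growing from well-chosen sources in the spirit of \Cref{obs: ball growing}. I then aggregate these candidates following the template of \Cref{obs: from many sparse to one balanced} into a single cut $(A,B)$ with $|A|,|B| \geq n/10$ and $|E_W(A,B)| \leq n/100$. This step again runs in $\otilde(n^2)$, within the overall budget.

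The main obstacle is guaranteeing balance, i.e.\ $|A|,|B| \geq n/10$, rather than merely sparsity. The leverage is that $W_0$ is an $\alpha_0$-expander, so any $S \subseteq V(W)$ with $|S| < n/10$ has at least $\alpha_0 |S|$ outgoing edges in $W_0$; at most $|F| \leq \alpha_0 n/1000$ of these are unrouted, and every routed one must cross $(S, V \setminus S)$ in $W$. The congestion bound then forces $|E_W(S, V \setminus S)| \geq (\alpha_0 |S| - |F|)/\eta$, which exceeds $n/100$ whenever $|S|$ is not extremely small. The extreme regime where $|S|$ is tiny is precisely where the aggregation procedure of \Cref{obs: from many sparse to one balanced} becomes essential, combining many unbalanced sparse cuts into a single balanced one; making this aggregation quantitatively compatible with the sparsity bound $n/100$ and the balance bound $n/10$ is the technical heart of the argument.
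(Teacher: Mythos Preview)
Your embedding strategy (MWU with fractional routing and rounding) differs from the paper's, which uses a much plainer greedy scheme: process the edges of the target expander $W'$ one at a time, run a BFS in the current graph $\hat W$ (initially $W$) for each, use the resulting path if its hop-length is at most $\td$ and otherwise place the demand in $F$; whenever an edge of $\hat W$ reaches congestion $\eta$, delete it from $\hat W$. The deleted set $E^*$ then satisfies $|E^*|\le 9n\td/\eta$, which is made at most $\alpha_0 n/2^{22}$ by taking $\eta$ a sufficiently large constant multiple of $\td$. This is not only simpler than MWU (no fractional routing, no rounding, no hop-bounded Dijkstra), it also makes the failure certificate concrete: a pair $(u,v)\in F$ is at hop-distance greater than $\td$ in $\hat W$, so \Cref{obs: ball growing} applies directly in the unweighted graph $\hat W$ rather than in a length metric.

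The substantive gap is your balance argument. You invoke the lower bound $|E_W(S,V\setminus S)|\ge(\alpha_0|S|-|F|)/\eta$ in the branch where $|F|>\alpha_0 n/1000$, yet the same sentence writes ``$|F|\le\alpha_0 n/1000$''; the two cases are conflated. Even granting the inequality with $|F|$ arbitrary, since $\eta=\Theta(\log^2 n)$ the right-hand side is at most $\alpha_0 n/\eta=O(n/\log^2 n)$, which is never at least $n/100$, so this bound cannot rule out unbalanced sparse cuts. The paper does not attempt to reach $|A|,|B|\ge n/10$ in one pass. It first proves a weaker inner statement (\Cref{thm: cut player inner}): either embed, or return a cut of sparsity at most $1/128$ with both sides at least $\alpha_0 n/2^{14}$. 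The inner balance comes from iterating ball-growing in $\hat W$ until $\Omega(\alpha_0 n)$ vertices have been peeled off (possible because $|F|>\alpha_0 n/1000$ guarantees enough far-apart pairs survive), then applying \Cref{obs: from many sparse to one balanced}; adding back $E^*$ costs only $O(\alpha_0 n/2^{22})$ edges in the cut. The outer theorem then iterates the inner one on the surviving vertex set a \emph{constant} number of times, each time discarding the smaller side, until fewer than $n/4$ vertices remain; a final application of \Cref{obs: from many sparse to one balanced} to the accumulated pieces (now with $\alpha\le 1/2$) yields $|A|,|B|\ge n/10$ and $|E_W(A,B)|\le n/128$.
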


We note that \cite{SCC}, generalizing the results of \cite{detbalanced} to directed graphs, provided an algorithm for the cut player with faster running time $\ohat(n)$ (see Theorem 7.3 in \cite{SCC}). We could have used their algorithm as a blackbox instead of the algorithm from \Cref{thm: cut player outer}. However, the proof of \Cref{thm: cut player outer} is significantly simpler than that of \cite{SCC} and \cite{detbalanced}, and we can afford this  slower running time since other components of our algorithm have slower running time.

The algorithm for the Matching Player is summarized in the following theorem, whose proof is deferred to \Cref{sec: matching player alg}.

\begin{theorem}\label{thm: alg for matching player}
	There is a deterministic algorithm, whose input consists of a directed $n$-vertex bipartite graph $G=(L,R,E)$ with $|E|=m$, where $n$ is an even integer, such that every vertex in $R$ has out-degree at most $1$ and every vertex in $L$ has in-degree at most $1$, together with two disjoint sets $A,B$ of vertices of $G$ of cardinality $n/2$ each, and two parameters $4\leq d'\leq 2n$ and $z\geq 1$.
		The algorithm returns one of the following:
	\begin{itemize}
		\item either a cut $(X,Y)$ in $G$ with $|E_G(X,Y)|\leq \frac{2n}{d'}$ and $|X|,|Y|\geq z$; or
		
		\item a collection $\qset$ of at least $(n/2-z)$ paths in graph $G$, with every path $Q\in \qset$ connecting some vertex of $A$ to some vertex of $B$, such that the length of every path in $\qset$ is at most $2d'+1$, the paths cause congestion at most $O(d'\log n)$, and the endpoints of all paths are distinct.
	\end{itemize}
	The running time of the algorithm is $\otilde(md')$.
\end{theorem}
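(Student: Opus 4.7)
The approach is to reduce the theorem to a length-bounded integral max flow computation on an augmented network $\hat G$, obtained from $G$ by attaching a source $s$ to every vertex of $A$ via unit-capacity length-$1$ edges, attaching a sink $t$ reached from every vertex of $B$ via unit-capacity length-$1$ edges, and assigning length $1$ and capacity $\eta = d'$ to every edge of $G$. Set $L = 2d'+2$. A feasible integral $s$-$t$ flow in $\hat G$ of value $\geq n/2-z$ using only paths of length $\leq L$ decomposes into a collection $\qset$ meeting the theorem's requirements: unit source and sink capacities force endpoints to be distinct, the capacity $\eta = d' \leq O(d' \log n)$ on internal edges bounds congestion, and the path-length bound is automatic.

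To compute this flow within the $\otilde(md')$ budget, I will use a Hopcroft--Karp-style phased augmenting-paths algorithm restricted to residual augmenting paths of length $\leq L$. In each phase, run a BFS from $s$ in the residual graph to compute level labels; if $t$ is unreachable within $L$ steps, halt; otherwise find a maximal collection of vertex-disjoint shortest augmenting paths via a pruning DFS and augment one unit of flow along each. A standard Hopcroft--Karp argument shows that the shortest augmenting-path length strictly increases between phases, so at most $L = O(d')$ phases occur, each costing $O(m)$ time, for a total of $\otilde(md')$.

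At termination either $|\qset| \geq n/2-z$ (and we return $\qset$), or the shortest residual $s$-$t$ distance in $\hat G$ exceeds $L$. In the latter case, let $\mathcal{N}_r = \{v : \dist(s,v) \leq r\}$ denote the ball of radius $r$ in the residual graph; a ball-growing argument selects a radius $r^* \in [1, L-1]$ at which the cut boundary is small, and we set $X = \mathcal{N}_{r^*} \cap V(G)$, $Y = V(G) \setminus \mathcal{N}_{r^*}$. The bounds $|X|, |Y| \geq z$ follow from the termination state: the $n/2 - |\qset| > z$ unsaturated source edges $(s,a)$ put their $A$-endpoints into $\mathcal{N}_1 \subseteq \mathcal{N}_{r^*}$, so $|X| \geq z$; symmetrically, each unsaturated sink edge $(b,t)$ must have $b \notin \mathcal{N}_{r^*}$, since otherwise $t$ would lie in $\mathcal{N}_{r^*+1} \subseteq \mathcal{N}_L$, so $|Y| \geq z$.

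The main obstacle is establishing the cut-size bound $|E_G(X,Y)| \leq 2n/d'$. A naive ball-growing bound scales like $(m+n)/L = O(m/d')$, which is too weak when $m$ significantly exceeds $n$. The sharper argument separates the cut edges into saturated and unsaturated internal edges: each saturated edge carries $\eta$ units of flow toward the flow value $|\qset| < n/2$, bounding their count by $n/(2\eta) = n/(2d')$; the unsaturated edges form the boundary of $\mathcal{N}_{r^*}$ in the residual graph and are bounded by a refined ball-growing that exploits the bipartite structural constraints of $G$ (special $R$-to-$L$ edges number at most $n$, while $\qset$-induced flow conservation constrains the regular $L$-to-$R$ edges crossing the cut). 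With $\eta$ set to balance these two contributions, and a final invocation of \Cref{obs: sparse cut to structured} to guarantee well-structuredness, the resulting bound is $|E_G(X,Y)| \leq 2n/d'$ as required.
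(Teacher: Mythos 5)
Your proposal takes a genuinely different route from the paper (truncated Dinic on a capacitated network versus an MWU/length-doubling scheme), but it has two genuine gaps, and the second is fatal to the approach as described.

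First, the claim that ``the path-length bound is automatic'' for the flow produced by residual augmentations is false. When you augment along shortest paths in the \emph{residual} graph, backward edges reroute existing flow, and the final flow's path decomposition need not consist of paths of length at most $L$ -- only the \emph{total} length $\sum_e f(e)\ell(e)$ grows by at most $L$ per augmentation, so only the \emph{average} decomposition-path length is bounded by $L$. (Indeed, maximum length-bounded flow is NP-hard, so no polynomial truncation of Dinic can both respect per-path length bounds and be maximum.) This is repairable -- e.g., discard the few long paths in the decomposition, adjusting constants -- but it must be argued, not assumed. Second, and more seriously, the cut bound is not established. You correctly identify that naive ball-growing gives $O(m/d')$, which is too weak, but the proposed fix does not work: the saturated-edge count is fine, yet for the \emph{unsaturated regular} ($L$-to-$R$) edges crossing the cut, neither flow conservation nor the degree constraints helps. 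A single vertex $u\in L$ has in-degree $1$ but can have arbitrarily many outgoing regular edges, all unsaturated and all crossing whatever level you pick; summed over levels these edges still total $\Theta(m)$, so ball-growing over $L$ levels cannot beat $m/L$. The paper avoids exactly this problem by making the length function charge \emph{only special edges}: regular edges get length $0$ (and special edges length $1/d'$, doubled upon use), so the total length mass is $O(n)$ rather than $O(m)$ and a threshold/level cut in the shortest-path metric never cuts a regular edge in the forward direction. This zero-length trick is the missing idea; with unit lengths on all edges your scheme cannot reach the $2n/d'$ bound.

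Two smaller remarks. The theorem does not require the returned cut to be well-structured, and invoking \Cref{obs: sparse cut to structured} would actually cost you a factor $2$ in the cut size (giving $4n/d'$), so you should omit that step. Also, your phase/time accounting is essentially fine ($O(d')$ phases, $O(m)$ per phase, plus $O(n)$ augmentations of length $O(d')$ each), and your argument for $|X|,|Y|\ge z$ via unsaturated source and sink edges is correct.
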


Note that in the above theorem we assume that all edge lengths in the input graph $G$ are unit.

We are now ready to complete the algorithm for \Cref{thm: embed expander or cut}. We will assume that $n=|V(G)|$ is an even integer; otherwise, we temporarily delete an arbitrary vertex $v_0$ from $G$. 
We let $c$ be a large enough constant, so that,  
in \Cref{thm: cut player outer} the parameter $\eta\leq c\log^2 n$. 

 We start with a graph $W$ that contains $n$ vertices and no edges, and then run the Cut-Matching game on it. Recall that the number of iterations in the game is bounded by $\cCMG\cdot \log n$. 
This ensures that the graph $W$ that is maintained over the course of the game has maximum vertex degree $O(\log n)$.
We now describe the execution of the $i$th iteration.

First, we apply the algorithm from \Cref{thm: cut player outer} to the current graph $W$. We say that the current iteration is a type-1 iteration, if the algorithm returns 
a cut $(A_i,B_i)$ in $W$ with $|E_W(A_i,B_i)|\leq n/100$ and $|A_i|,|B_i|\geq n/10$, and otherwise we say that it is a type-2 iteration. Assume first that the current iteration is a type-1 iteration. Then we compute an arbitrary partition $(A'_i,B'_i)$ of $V(W)$ into two subsets of cardinality $n/2$ each, such that either $A_i\subseteq A'_i$ or $B_i\subseteq B'_i$, and we treat $(A'_i,B'_i)$ as the move of the cut player.

Next, we twice apply the algorithm from \Cref{thm: alg for matching player} to  graph $G$, with parameter $d'$ that remains unchanged, and parameter ${z=\frac{\alpha_0}{2000c\cdot \cCMG\cdot \log^3n}\cdot n}$: once for the ordered pair of sets $(A'_i,B'_i)$, and once for the ordered pair of sets $(B'_i,A'_i)$.

Assume first that in any of the two applications of the algorithm, it returns  a cut $(X,Y)$ in $G$ with $|E_G(X,Y)|\leq \frac{2n}{d'}$ and $|X|,|Y|\geq z\geq \Omega(n/\log^3 n)$. In this case, we terminate the algorithm and return the cut $(X,Y)$ as its outcome. 
If the initial value $|V(G)|$ was odd, then we add vertex $v_0$ to $X$ if $v_0\in R$, and to $Y$ otherwise. In either case, the only newly added edge to $E_G(X,Y)$ is the special edge incident to $v_0$ (if such an edge exists), and so $|E_G(X,Y)|\leq \frac{2n}{d'}+1$ holds.
We also say that the current iteration is of \emph{type (1a)}. 

Otherwise, the algorithm computes a collection $\qset'_i$ of at least $(n/2-z)$ paths in graph $G$, with every path $Q\in \qset_i$ connecting a distinct vertex of $A'_i$ to a distinct vertex of $B'_i$, 
and a collection $\qset''_i$ of at least $(n/2-z)$ paths in  $G$, with every path $Q'\in \qset'_i$ connecting a distinct vertex of $B'_i$ to a distinct vertex of $A'_i$. The paths in $\qset'_i\cup \qset''_i$ have lengths at most $2d'+1$ each, and they cause congestion at most $O(d'\log n)$.
In this case, we say that the current iteration is of \emph{type (1b)}.
 The set $\qset'_i$ of paths naturally defines a matching $M'_i\subseteq A'_i\times B'_i$, where for every path $Q\in \qset_i$ whose endpoints are $a\in A'_i,b\in B'_i$, we add an edge $e=(a,b)$ to the matching. We also let $P(e)=Q$ be its embedding into $G$. Similarly, the set $\qset''_i$ of paths defines a matching $M''_i\subseteq B'_i\times A'_i$, together with an embedding $P(e)\in \qset''_i$ of each edge $e\in M''$. Recall that matching $M'_i$ has cardinality at least $n/2-z$, while $|A'|=|B'|=n/2$. Therefore, by adding a set $F'_i$ of at most $z$ additional edges (that we refer to as \emph{fake edges}), we can turn matching $M'_i$ into a perfect one. Similarly, we add a collection $F''_i$ of at most $z$ fake edges to $M''_i$ in order to turn it into a perfect matching. The edges of $M'_i\cup M''_i$ are then added to graph $W$, and the current iteration terminates.

Finally, assume that the current iteration $i$ is of type 2. In this case, the algorithm for the Cut Player from \Cref{thm: cut player outer} computed a graph 
$W'$ with $|V(W')|\geq n/4$ and maximum vertex degree at most $18$, such that $W'$ is an $\alpha_0$-expander, together with a subset $\hat F\subseteq E(W')$ of at most $\frac{\alpha_0\cdot n}{1000}$ edges of $W'$, and an embedding $\pset$ of $W'\setminus \hat F$ into $W$ via paths of length at most $O(\log^2 n)$, that cause congestion at most $c\log^2 n$.

Consider the current graph $W$. Denote $F^*=\bigcup_{i'=1}^{i-1}(F'_{i'}\cup F''_{i'})$ -- the collection of the fake edges that we have computed so far. Since for all $1\leq i'<i$, $|F'_{i'}\cup F''_{i'}|\leq 2z$ and $i\leq \cCMG\log n$, we get that $|F^*|\leq 2\cCMG z \log n$.
We also denote $\qset^*=\bigcup_{i'=1}^{i-1}(\qset'_i\cup \qset''_i)$. Note that $\qset^*$ defines an embedding of $W\setminus F^*$ into $G$. All paths in $\qset^*$ have lengths at most $2d'+1$. Each set $\qset'_i\cup \qset''_i$ causes congestion at most $O(d'\log n)$, and so the total congestion caused by the paths in $\qset^*$ is bounded by $\eta^*=O( d'\log^2n)$.

Next, we will construct a set $F\subseteq E(W')$ of fake edges, and an embedding of the remaining edges of $W'$ into $G$, by combining the embedding $\pset$ of $W'\setminus \hat F$ into $W$ with the embedding $\qset^*$ of $W\setminus F$ into $G$. We start with $F=\hat F$. Next, we process every edge $e\in E(W')\setminus \hat F$ one by one. When an edge $e=(u,v)$ is processed, we consider the path $P(e)\in \pset$, that serves as an embedding of $e$ into graph $W$. Recall that the length of the path is at most $O(\log^2 n)$. Let $P(e)=(e_1,e_2,\ldots,e_r)$, where $r\leq O(\log^2 n)$. 
If path $P(e)$ contains any edges of $F^*$, then we add $e$ to the set $F$ of fake edges. Otherwise,
for all $1\leq j\leq r$, we consider the path $Q(e_j)\in \qset^*$, that serves as the embedding of the edge $e_j$ into graph $G$. Recall that the length of $Q$ is at most $O(d')$. By concatenating the paths $Q(e_1),\ldots,Q(e_r)$, we obtain a path $P'(e)$ connecting $u$ to $v$ in $G$, whose length is at most $O(d'\log^2 n)$.  We turn $P'(e)$ into a simple path in time $O(d'\log^2 n)$, obtaining the final path $P^*(e)$, that will serve as an embedding of the edge $e$.

Once every edge of $E(W')\setminus \hat F$ is processed, we obtain a final set $F$ of fake edges, and, for every remaining edge $e\in E(W')\setminus F$, a simple path $P^*(e)$ of length $O(d'\log^2 n)$, that serves as the embedding of $e$ in $G$. Recall that $|V(W')|\geq n/4$. If the original value $|V(G)|$ was odd, then $|V(W')|\geq (|V(G)|-1)/4\geq |V(G)|/8$ holds. As required, graph $W'$ is an $\alpha_0$-expander with maximum vertex degree at most $18$. It remains to bound the congestion caused by the paths in set $\pset^*=\set{P^*(e)\mid e\in E(W')\setminus F}$, and the cardinality of the set $F$.

Consider any edge $e\in E(G)$. Recall that $e$ may participate in at most $\eta^*= O(d'\log^2n)$ paths in set $\qset^*=\set{Q(e')\mid e'\in E(W)\setminus F^*}$. Every edge $e'\in E(W)\setminus F^*$, whose corresponding path $Q(e')$ contains the edge $e$, may in turn participate in at most $O(\log^2 n)$ paths of $\pset$. Therefore, overall, every edge $e$ of $G$ participates in at most $O(d'\log^4n)$ paths of $\pset^*$.

In order to bound the final number of the fake edges in $W'$,
 recall that $|F^*|\leq 2\cCMG z \log n$. Since the paths in $\pset$ cause congestion at most $c\log^2 n$, we get that every edge of $F^*$ may participate in at most $c\log^2 n$ paths of $\pset$. Therefore, $|F\setminus \hat F|\leq (2\cCMG z \log n)\cdot (c\log^2 n)\leq 2c\cdot \cCMG\cdot z\cdot \log^3n\leq \frac{\alpha_0\cdot n}{1000}$,
since $z=\frac{\alpha_0}{2000c\cdot \cCMG\cdot \log^3n}\cdot n$.  
Since $|\hat F|\leq \frac{\alpha_0\cdot n}{1000}$, we get that $|F|\leq \frac{\alpha_0\cdot n}{500}$.

It now remains to bound the running time of the algorithm. The algorithm has $O(\log n)$ iterations. In every iteration we use the algorithm for the Cut Player from \Cref{thm: cut player outer}, whose running time is  $\otilde(n^2)$, and an algorithm for the Matching Player from \Cref{thm: alg for matching player}, whose running time is $\otilde(md')$.

Additionally, when computing the embedding of $W'\setminus F$ into $G$, we spend time at most $\tilde O(d')$ per edge of $W'$, and, since $|E(W')|\le O(n)$, the computation of the embedding takes time $\tilde O(nd')$.
Overall, the running time of the algorithm is $\otilde(md'+n^2)$.
\subsection{Algorithm for the Cut Player -- Proof of \Cref{thm: cut player outer}}
\label{subsec: algorithm for the cut player}

The proof of \Cref{thm: cut player outer} easily follows from the following slightly weaker theorem.

\begin{theorem}\label{thm: cut player inner}
	There is a deterministic algorithm, that, given an $n$-vertex directed graph $W$ with maximum vertex degree at most $O(\log n)$, returns one of the following:
	
	\begin{itemize}
		\item either a cut $(A,B)$ in $W$ with $|E_W(A,B)|\leq  \frac{\min\set{|A|,|B|}}{128}$ and $|A|,|B|\geq \frac{\alpha_0}{2^{14}}\cdot n$; or
		
		\item another directed $n$-vertex graph $W'$ with maximum vertex degree at most $18$, such that $W'$ is an $\alpha_0$-expander, together with a subset $F\subseteq E(W')$ of at most $k=\frac{\alpha_0\cdot n}{1000}$ edges of $W'$, and an embedding $\pset$ of $W'\setminus F$ into $W$ via paths of length at most $\td=O(\log^2 n)$, that cause congestion at most $\eta=O(\log^2 n)$.
	\end{itemize}
	
	The running time of the algorithm is $\otilde(n^2)$.
\end{theorem}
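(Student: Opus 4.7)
The plan is to let $W'$ be the explicit $\alpha_0$-expander $H_n$ on $V(W)$ given by \Cref{thm:explicit expander} (so $W'$ has $n$ vertices, maximum degree $18$, and is built in time $O(n)$), and then to attempt to embed each edge of $W'$ into $W$ via a low-congestion short path using a Multiplicative Weights Update (MWU) routing. If the routing succeeds in embedding all but a small set $F \subseteq E(W')$ with $|F| \le k = \alpha_0 n/1000$, we return $(W', F, \pset)$ as the expander output; otherwise, the failure certifies a balanced sparse cut in $W$, which we return instead.

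To implement the routing, initialize $\ell(e) := 1/m$ for every $e \in E(W)$ (where $m = |E(W)| = O(n \log n)$ by the degree assumption), and process the $O(n)$ edges $(u,v) \in E(W')$ in arbitrary order. For each such edge, I would run a truncated Dijkstra in $W$ under the current lengths $\ell$, stopping at hop-depth $\td = O(\log^2 n)$. If a $u$-to-$v$ path $P_e$ with $\ell$-length at most $1$ is returned, add $P_e$ to $\pset$ and double $\ell(e')$ for every $e' \in E(P_e)$; otherwise, add $(u,v)$ to $F$. Since each edge starts with length $1/m$ and is never selected again once its length exceeds $1$, no edge can be used in more than $\log_2 m = O(\log n) \le \eta$ routed paths, giving the desired congestion bound. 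Each Dijkstra call runs in $\tilde O(n \log n)$ time on the sparse graph $W$, giving total time $\tilde O(n^2)$ over the $O(n)$ invocations.

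If $|F| \le k$, return $(W', F, \pset)$. For the harder case $|F| > k$, the final $\ell$ is a feasible dual certificate: every $(u,v) \in F$ has $\ell$-distance greater than $1$ in $W$, while the total length $\sum_e \ell(e) = O(n)$ (initially $1$, and each of the at most $|E(W')|=O(n)$ routed paths raises the sum by at most its own $\ell$-length, which is $\le 1$). Let $A_0 \subseteq V(W)$ be the set of $F$-endpoints; since $W'$ has constant degree, $|A_0| = \Omega(n)$. Choose a ``central'' source vertex $v_0$ (roughly, one that splits $A_0$ into balanced halves by $\ell$-distance) and sweep over ball radii $r \in (0, 1/2)$ via a ball-growing argument analogous to \Cref{obs: ball growing} applied to the $\ell$-weighted graph $W$: there exists $r$ for which the level set $A := \{u \in V(W) : \text{$\ell$-distance from $v_0$ to $u$ is} \le r\}$ satisfies $|E_W(A, V(W) \setminus A)| \le \min\{|A|, |V(W) \setminus A|\}/128$.

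The main technical obstacle is to simultaneously obtain this sparsity and the balance condition $|A|, |V(W) \setminus A| \ge \alpha_0 n/2^{14}$. Sparsity follows routinely from the $O(n)$ total length budget together with the $\Omega(n)$ demand pairs at $\ell$-distance $>1$. For balance, one invokes the $\alpha_0$-expansion of $W'$ itself: if no ball around $v_0$ achieved the required balance, then almost all $F$-endpoints would lie on one side of a thin cut of $W$, forcing the corresponding $F$-edges of $W'$ to cross a thin cut of $W'$ as well, contradicting the fact that $W'$ is an $\alpha_0$-expander (since there are $\Omega(n)$ such $F$-edges). Tuning the constants and converting the chosen $(A, V(W) \setminus A)$ into output takes $\tilde O(n)$ extra time, preserving the $\tilde O(n^2)$ overall bound.
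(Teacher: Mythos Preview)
Your embedding half is reasonable and close to the paper's, though the paper uses something even simpler than MWU: plain BFS in a working copy $\hat W$ of $W$, deleting any edge once $\eta$ embedding paths have used it. The genuine gap is in your cut-extraction when $|F|>k$. First, because you truncate Dijkstra at hop-depth $\td$, a failed pair $(u,v)$ only certifies that no path of $\le\td$ hops has $\ell$-length $\le 1$; it does \emph{not} certify $\ell$-distance$(u,v)>1$ in $W$. Second, even granting that, a single ball-sweep from one center $v_0$ in the $\ell$-weighted graph yields an expected cut of size $O\bigl(\sum_e\ell(e)\bigr)=O(n)$, whereas the target sparsity together with $\min\{|A|,|B|\}\ge \alpha_0 n/2^{14}$ forces $|E_W(A,B)|\le \alpha_0 n/2^{21}$; nothing in your argument drives the cut size down by the missing $\mathrm{poly}\log n$ factor. (Also, \Cref{obs: ball growing} is stated for unit-length bounded-degree graphs and does not transfer to $\ell$-weighted distances as written.) Third, the balance argument---``if no ball is balanced, the $F$-edges of $W'$ would cross a thin cut of $W'$''---is a non-sequitur: a cut in $W$ is a bipartition of the shared vertex set, but its sparsity in $W$ says nothing about how many $W'$-edges cross it, since $E(W')$ is unrelated to $E(W)$.

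The paper fixes all three points at once by working in the \emph{unweighted} residual graph $\hat W$ ($=W$ minus the at most $9\td n/\eta$ edges that reached congestion $\eta$). Each $e=(x,y)\in F$ now certifies hop-distance $>\td$ in $\hat W$, so \Cref{obs: ball growing} (using the $O(\log n)$ degree bound) gives a cut of sparsity $\phi'=O(\Delta\log n/\td)\le \alpha_0/2^{22}$. These cuts may be tiny, so the paper peels them off iteratively until at least $k/2$ vertices have been removed---always possible, since $|F|>k$ guarantees an $F$-edge with both endpoints surviving---and then applies \Cref{obs: from many sparse to one balanced} to merge the pieces into a single cut $(A,B)$ with $|A|,|B|\ge \Omega(k)=\Omega(\alpha_0 n)$ and $|E_{\hat W}(A,B)|\le \phi' n$. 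Adding back the $O(n/c)$ deleted edges and choosing $c$ large enough yields the stated bound. The balance thus comes from \emph{accumulating} $\Omega(n)$ vertices' worth of sparse pieces, not from a single ball-sweep.
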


We prove \Cref{thm: cut player inner} below, after we complete the proof of \Cref{thm: cut player outer} using it. We start with the graph $\hat W=W$, and then perform iterations, as long as $|V(\hat W)|\geq n/4$. We will also construct a collection $\xset$ of subsets of vertices of $\hat W$, that is initialized to $\emptyset$. In every iteration, we may add a single set $X$ to $\xset$ (that defines a sparse cut in the current graph $\hat W$), and then delete the vertices of $X$ from $\hat W$.

In order to execute a single iteration, we apply the algorithm from \Cref{thm: cut player inner} to the current graph $\hat W$. Let $n'\geq n/4$ denote the number of vertices that currently belong to $\hat W$. If the algorithm returned a directed $n'$-vertex  $\alpha_0$-expander graph $W'$ with maximum vertex degree at most $18$, together with a subset $F\subseteq E(W')$ of at most $k=\frac{\alpha_0\cdot n'}{1000}\leq \frac{\alpha_0 n}{1000}$ edges of $W'$, and an embedding $\pset$ of $W'\setminus F$ into $\hat W$ via paths of length at most $\td$, that cause congestion at most $\eta$, then we terminate the algorithm and return $W'$, $F$ and $\pset$. 

Assume now that the algorithm from \Cref{thm: cut player inner}  returned a cut $(A',B')$ in $\hat W$ with $|E_{\hat W}(A',B')|\leq  \frac{\min\set{|A'|,|B'|}}{128}$ and $|A'|,|B'|\geq \frac{\alpha_0}{2^{14}}\cdot n'\geq \frac{\alpha_0}{2^{16}}n$.
Denote the sets $A',B'$ by $X$ and $Y$ respectively, where $X$ is the set among $\set{A',B'}$ that is of smaller cardinality, and $Y$ is the other set (if the cardinalities of the two sets are equal, break the tie arbitrarily). We then add $X$ to $\xset$, delete the vertices of $X$ from $\hat W$, and continue to the next iteration. 

If the above algorithm does not terminate with an expander $W'$ that is embedded into $W$, then it must terminate once $|V(\hat W)|\leq n/4$ holds. Let $\xset=\set{X_1,X_2,\ldots,X_r}$ be the final collection of vertex subsets, where the sets are indexed in the order in which they were added to $\xset$. We also denote by $X_{r+1}$ the set $V(\hat W)$ obtained at the end of the algorithm, and we add $X_{r+1}$ to $\xset$ as well. For all $1\leq j\leq r$, denote $X'_j=X_{j+1}\cup \cdots\cup X_{r+1}$. From our algorithm, for each such index $j$, either $|E_W(X_j,X'_j)|\leq  |X_j|/128$, or $|E_W(X'_j,X_j)|\leq |X_j|/128$. Moreover, for all $1\leq j\leq r+1$, $|X_j|\leq n/2$, while $|X_{r+1}|\geq \frac{n}{8}$. Therefore, if we let $\alpha$ be such that $\max_{X_j\in \xset}\set{|X_j|}=\alpha\cdot n$, then $\frac 1 8\leq \alpha\leq \frac 1 2$.

We can now use the algorithm from \Cref{obs: from many sparse to one balanced} to compute, in time $O(|E(W)|)\leq \otilde(n)$, a cut $(A,B)$ in $W$, with   $|E_{W}(A,B)|\leq n/128$, and  $|A|,|B|\geq \min \set{\frac{1-\alpha}{2}\cdot n, \frac n 4}\geq \frac{n}{10}$.

The running time of every iteration of the algorithm is $\otilde(n^2)$, and the number of iterations is bounded by a constant. The running time of the algorithm from \Cref{obs: from many sparse to one balanced}  is bounded by $\otilde(n)$, and so the total running time of our algorithm is $\otilde(n^2)$.

In order to complete the proof of \cref{thm: cut player outer}, it is now enough to prove \Cref{thm: cut player inner}.

\begin{proofof}{\Cref{thm: cut player inner}}
	Since we can afford a running time that is as high as $\otilde(n^2)$, we can use a straightforward algorithm, that is significantly simpler than other current deterministic algorithms for the cut player (see e.g. Theorem 7.3 in \cite{SCC}, that generalizes the results of \cite{detbalanced} to directed graphs). 
	
We let $c$ be a sufficiently large constant, and we set the parameters
$\td=c\log^2n$ and $\eta=c^2\log^2n$.

We start by using the algorithm from \Cref{thm:explicit expander} to construct, in time $O(n)$, a directed  $n$-vertex graph $W'$ that is an $\alpha_0$-expander, with maximum vertex degree at most $18$.

Next, we start with a graph $\hat W=W$, and then perform $|E(W')|$ iterations. In every iteration we will attempt to embed some edge $e\in E(W')$ into $\hat W$, and we will delete from $\hat W$ edges that participate in too many of the embedding paths. We will maintain a partition $(E',E'',F)$ of the edges of $E(W')$, where set $E'$ contains all edges $e\in E'$ for which an embedding path $P(e)$ was already computed by the algorithm; set $F$ contains all edges $e\in E'$ for which the algorithm failed to compute an embedding path; and $E''$ contains all edges of $W'$ that were not considered by the algorithm yet. We also maintain a set $\pset=\set{P(e)\mid e\in E'}$ of all embedding paths that the algorithm computed so far. For every edge $e\in E(W)$ we will maintain a counter $\mu(e)$ that counts the number of paths in $\pset$ containing $e$.

Initially, we set $E'=F=\emptyset$, $E''=E(W')$, and $\pset=\emptyset$.  We also set $\hat W=W$, and perform iterations as long as $E''\neq \emptyset$.

In order to perform an iteration, we consider any edge $e=(u,v)\in E''$. We compute the shortest path $P$ in the current graph $\hat W$ connecting $u$ to $v$, by peforming a BFS from vertex $u$, in time $O(|E(\hat W)|)\leq \otilde(n)$. If the length of $P$ is greater than $\td$, then we move edge $e$ from $E''$ to $F$. Otherwise, we set $P(e)=P$, move $e$ from $E''$ to $E'$, and add path $P(e)$ to set $\pset$. Additionally, for every edge $e'\in E(P)$ we increase the counter $\mu(e')$ by $1$, and if the counter reaches $\eta$, we delete $e'$ from $\hat W$.

Once $E''=\emptyset$ holds, the algorithm terminates. 
Notice that, since the algorithm spends at most $\otilde(n)$ time on processing every edge of $W'$, its running time so far is $\otilde(n^2)$.

We now consider two cases. Assume first that, at the end of the above procedure, $|F|\leq k$ holds. In this case, we terminate the algorithm, and return the $\alpha_0$-expander $W'$, the set $F$ of its edges, and the embedding $\pset$ of $W'\setminus F$ into $W$ that our algorithm has computed. It is easy to verify that every path in $\pset$ has length at most $\td$, and $\pset$ causes congestion at most $\eta$ in $W$.

From now on we assume that $|F|>k$. We now show an algorithm to compute a cut $(A,B)$ in $W$ of sparsity at most $\frac 1 {128}$, such that $|A|,|B|\geq \frac{\alpha_0}{2^{14}}\cdot n$.

Let $E^*=E(W)\setminus E(\hat W)$. Equivalently, set $E^*$ contains all edges that participate in $\eta$ paths in $\pset$. Recall that $|\pset|\leq |E(W')|\leq 9n$, and the length of every path in $\pset$ is at most $\td$. Therefore, $\sum_{P\in \pset}|E(P)|\leq 9\td n$, and $|E^*|\leq\frac{\sum_{P\in \pset}|E(P)|}{\eta}\leq \frac{9\td n}{\eta}$.

In the remainder of the algorithm, we will iteratively cut off sparse cuts (that may be quite small) from graph $\hat W$. We maintain a set $\xset$ of subsets of vertices of $\hat W$, that is initially empty. The algorithm is iterative, and in every iteration a single set $Z$ is added to $\xset$, with the vertices of $Z$ deleted from $\hat W$. The algorithm continues as long as $|V(\hat W)|\geq n-k/2$ holds.
We now describe the execution of a single iteration.

Since $|V(\hat W)|\geq n-k/2$, there exists an edge $e=(x,y)\in F$ with $x,y\in \hat W$. Moreover, since we were unable to embed $e$ into $\hat W$ when it was considered, and since we only deleted edges and vertices from $\hat W$ since then, it must be the case that $\dist_{\hat W}(x,y)>\td $ holds right now. Let $\Delta=O(\log n)$ denote the maximum vertex degree in $\hat W$. Since $\td=c\log^2 n$, where $c$ is a large enough constant, we can ensure  that $\td\geq 64\Delta\log n$ holds. We now use the algorithm from \Cref{obs: ball growing} to compute a cut $(Z,Z')$ in the current graph $\hat W$, whose sparsity is at most $\phi'=\frac{32\Delta\log n}{\td}\leq \frac{O(\log^2 n)}{\td}$. Since $\td=c\log^2n$, and we can set $c$ to be a large enough constant, we can ensure that $\phi'\leq \frac{\alpha_0}{2^{22}}$. Denote the sets $Z,Z'$ by $X$ and $Y$, such that $X$ is the set of smaller cardinality, and $Y$ is the other set (if the cardinalities of the two sets are equal, break the tie arbitrarily). We then add $X$ to $\xset$, delete the vertices of $X$ from $\hat W$, and continue to the next iteration. Note that, if $X$ is the set of vertices that was added to $\xset$ during the current iteration, then from \Cref{obs: ball growing}, the running time of the iteration is bounded by $\tilde O(|X|)$.

When the above algorithm terminates, we let $\xset=\set{X_1,X_2,\ldots,X_r}$ be the final collection of vertex subsets, where the sets are indexed in the order in which they were added to $\xset$. We also denote by $X_{r+1}$ the set $V(\hat W)$ obtained at the end of the algorithm, and we add $X_{r+1}$ to $\xset$ as well. For all $1\leq j\leq r$, denote $X'_j=X_{j+1}\cup \cdots\cup X_{r+1}$. From our algorithm, for each such index $j$, either $|E_{\hat W}(X_j,X'_j)|\leq \phi'\cdot |X_j|$, or $|E_{\hat W}(X'_j,X_j)|\leq \phi'\cdot |X_j|$, where $\phi'\leq \frac{\alpha_0}{2^{22}}$. 
From our discussion, the total time that the algorithm spent on constructing the sets of vertices in $\xset$ is bounded by $\sum_{j=1}^r\tilde O(|X_j|)\leq \tilde O(n)$.
Let $\alpha$ be chosen such that the cardinality of largest set of vertices in $\xset$ is $\alpha\cdot n$. From our algorithm, $\alpha\leq 1-\frac{k}{2n}\leq 1-\frac{\alpha_0}{2000}$, since $k=\frac{\alpha_0\cdot n}{1000}$.

We can now use the algorithm from \Cref{obs: from many sparse to one balanced} to compute, in time $O(|E(\hat W)|)\leq \otilde(n)$, a cut $(A,B)$ in $\hat W$, with   $|E_{\hat W}(A,B)|\leq \phi'\cdot \sum_{i=1}^{r}|X_i|$, and  $|A|,|B|\geq \min \set{\frac{1-\alpha}{2}\cdot n, \frac n 4}\geq \frac{\alpha_0n}{4000}$.

Recall that graph $\hat W$ was obtained from $W$ by deleting the edges of $E^*$ from it, and that $|E^*|\leq \frac{9\td n}{\eta}$. Subtituting $\td=c\log^2 n$ and  $\eta=c^2\log^2 n$, where $c$ is a sufficiently large constant, we get that:
$|E^*|\leq \frac{\alpha_0 n}{2^{22}}$.

Overall:

\[|E_W(A,B)|\leq |E_{\hat W}(A,B)|+|E^*|\leq \phi' n+\frac{\alpha_0 n}{2^{22}}\leq \frac{\alpha_0 n}{2^{21}}.\]

(we have used the fact that $\phi'\leq \frac{\alpha_0}{2^{22}}$). Lastly, since $|A|,|B|\geq \frac{\alpha_0n}{4000}\geq \frac{\alpha_0n}{2^{14}}$, we get that $ |E_W(A,B)|\leq \frac{\min\set{|A|,|B|}}{128}$, as required.

From our discussion, the running time of the entire algorithm is bounded by $\otilde(n^2)$.
\end{proofof}
\subsection{Algorithm for the Matching Player -- Proof of \Cref{thm: alg for matching player}}
\label{sec: matching player alg}

As before, we call the edges of $E_G(R,L)$ \emph{special edges}, and the edges of $E_G(L,R)$ \emph{regular edges}. Notice that every vertex of $G$ may be incident to at most one special edge.

As a preprocessing step, we construct a maximal set $\qset_1$ of disjoint paths, where every path connects a distinct vertex of $A$ to a distinct vertex of $B$, and consists of a single regular edge. We do so by employing a simple greedy algorithm. We start with $\qset_1=\emptyset$. While there is a regular edge $e=(a,b)$ with $a\in A$ and $b\in B$, we add a path $Q=(e)$ to $\qset_1$, and we delete $a$ and $b$ from $A$ and $B$, respectively. Clearly, this preprocessing step can be executed in time $O(|E(G)|)$, and at the end of this step there is no regular edge in $G$ connecting a vertex of $A$ to a vertex of $B$.

Let $\hat G$ be the graph that is obtained from $G$ by replacing every edge $e$ with $d'$ parallel copies. Edges that are copies of special edges in $G$ are called special edges of $\hat G$, and similarly copies of regular edges in $G$ are called regular edges of $\hat G$. 

 At the beginning of the algorithm, every special edge $e\in E(\hat G)$ is assigned a length $\ell(e)=1/d'$, while every regular edge $e\in E(\hat G)$ is assigned length $0$. 
 Clearly, at the beginning of the algorithm, $\sum_{e\in E(\hat G)}\ell(e)\leq n$ holds, as there are at most $nd'$ special edges in $\hat G$, each of which has length $1/d'$.
 
 Over the course of the algorithm, the lengths of special edges may grow, while the lengths of regular edges remain $0$. Additionally, vertices may be deleted from sets $A$ and $B$. We will gradually construct a collection $\qset_2$ of paths in graph $\hat G$, where every path connects a distinct vertex of $A$ to a distinct vertex of $B$; the number of special edges on every path is at most $d'$; and the paths cause edge-congestion at most $O(\log n)$ in $\hat G$. Initially, we set $\qset_2=\emptyset$.

 We assume for now that we are given an oracle, that, when prompted, either returns a simple path $Q$ in the current graph $\hat G$ connecting some vertex of $A$ to some vertex of $B$, such that the length $\sum_{e\in E(Q)}\ell(e)$ of the path is at most $1$; or correctly establishes that 
such a path does not exist.

The algorithm proceeds in iterations. In every iteration, the oracle is asked to return a simple path $Q$ in the current graph $\hat G$ connecting some vertex of $A$ to some vertex of $B$, such that the length $\sum_{e\in E(Q)}\ell(e)$ of the path is at most $1$. If it successfully does so, then we add $Q$ to $\qset_2$, double the length of every special edge on path $Q$, remove the endpoints of the path from sets $A$ and $B$, respectively, and continue to the next iteration. Notice that in this case, $|\qset_2|$ grows by $1$, while $\sum_{e\in E(\hat G)}\ell(e)$ increases by at most $1$.
If the oracle establishes that no such path $Q$ exists, then we terminate the algorithm.
 
At the end of the algorithm, we set $\qset=\qset_1\cup \qset_2$.
If $|\qset|\geq n/2-z$, then we return the collection $\qset$ of paths. It is immediate to verify that the paths in $\qset$ connect vertices of $A$ to vertices of $B$, and that their endpoints are disjoint. Moreover, since the length of every special edge in $\hat G$ is at least $\frac{1}{d'}$, while the length of every path $Q\in \qset$ is at most $1$, each path in $\qset$ contains at most $d'$ special edges, and at most $2d'+1$ edges overall, since special and regular edges must alternate on every path. Lastly, due to the doubling of edge lengths on the paths that we add to $\qset$, every special edge of $\hat G$ may participate in at most $\log d' = O(\log n)$ paths in $\qset$, and so in total every special edge of $G$ may participate in at most $O(d'\log n)$ paths in $\qset$. Moreover, if $e=(u,v)$ is a regular edge, then every path $Q\in \qset_2$ that contains $e$ must also contain either the unique special edge incident to $u$, or the unique special edge incident to $v$, while the paths in $\qset_1$ are disjoint. We conclude that the paths in $\qset$ cause congestion at most $O(d'\log n)$ in $G$.

From now on we assume that, when the algorithm terminates, $|\qset|<n/2-z$ holds. Let $A'\subseteq A$, $B'\subseteq B$ be the sets of vertices that remain in $A$ and $B$, respectively at the end of the algorithm. Since the algorithm terminated, in the current graph $\hat G$, the length of the shortest path connecting a vertex of $A'$ to a vertex of $B'$ is at least $1$. Since $|\qset|<n/2-z$ and $|A'|,|B'|=n/2-|\qset|$, we get that $|A'|,|B'|\geq z$.

Recall that, at the beginning of the algorithm, $\sum_{e\in E(\hat G)}\ell(e)\leq n$ held. Over the course of the algorithm, in every iteration, $\sum_{e\in E(\hat G)}\ell(e)$ grew by at most $1$, and the number of iterations is bounded by $|A|\leq n$. Therefore,  at the end of the algorithm, $\sum_{e\in E(\hat G)}\ell(e)\leq 2n$ holds. 

Let $e$ be any edge of graph $G$, and let $S(e)$ be the collection of $d$ parallel edges representing $e$ in graph $\hat G$. Let $e'\in S(e)$ be the copy of $e$ with smallest current length $\ell(e')$. Notice that, if we set the lengths of all edges in $S(e)$ to $\ell(e')$, then the length of the shortest path connecting a vertex of $A'$ to a vertex of $B'$ in $\hat G$ will remain at least $1$. Therefore, we can assume without loss of generality, that, for every edge $e\in E(G)$, the lengths of all edges in set $S(e)$ are identical; we denote this length by $\ell'(e)$.
Notice that $\sum_{e\in E(\hat G)}\ell(e)\leq 2n$ continues to hold after this transformation.

Consider now the graph $G'$ that is obtained from $G$ by
setting the length of every edge $e\in E(G)$ to $\ell'(e)$,
 unifying all vertices of $A'$ into a source vertex $s$ and all vertices of $B'$ into a destination vertex $t$. Since graph $\hat G$ contained $d'$ parallel copies of every edge $e\in E(G)$, that all had the same length equal to $\ell'(e)$, we now have that $\sum_{e\in E(G)}\ell'(e)\leq \frac{\sum_{e\in E(\hat G)}\ell(e)}{d'}\leq \frac{2n}{d'}$. At the same time, since $\hat G$ did not contain an $A'$-$B'$ path of length at most $1$, we get that $\dist_{G'}(s,t)\geq 1$.
 
 Consider the following thought experiment: we choose a threshold $\rho\in (0,1)$ uniformly at random, and then define two sets $S=\set{v\in V(G')\mid \dist_{G'}(s,v)\leq \rho}$ and $T=V(G')\setminus S$ of vertices. Clearly, $s\in S$ and $t\in T$ must hold.
 By replacing $s$ with the vertices of $A'$ in $S$ and replacing $t$ with the vertices of $B'$ in $T$, we obtain a cut $(S',T')$ in graph $G$ with $|E_G(S',T')|=|E_{G'}(S,T)|$. It is easy to verify that, for every edge $e=(u,v)\in E(G')$, the probability that $u\in S$ and $v\in T$ is bounded by $\ell'(e)$. Therefore, the expected value of $|E_{G'}(S,T)|$ is bounded by $\sum_{e\in E(G)}\ell'(e)\leq \frac{2n}{d'}$. By performing a BFS search from vertex $s$, we can then compute a cut $(S',T')$ in $G'$ with $|E_{G'}(S',T')|\leq \frac{2n}{d'}$. Using the procedure outlined above, we can convert this cut into a cut $(S,T)$ in $G$ with $|E_G(S,T)|\leq \frac{2n}{d'}$. Since $A'\subseteq S$ and $B'\subseteq T$, we are guaranteed that $|S|,|T|\geq z$. We return the cut $(S,T)$ and terminate the algorithm. In order to analyze the algorithm we need to provide an implementation of the oracle, which we do next.

 Notice that the paths in $\qset$ cause congestion at most $O(d'\log n)$ in $G$, so $\sum_{Q\in \qset}|E(Q)|)\leq \tilde O(md')$. We will use this fact later.

 \paragraph{Implementing the Oracle.}
 In order to implement the oracle efficiently, we construct a new graph $H$, as follows. Initially, $H$ is obtained from $G$ by adding a source vertex $s$ that connects with an edge to every vertex of $A$, and a destination vertex $t$, to which every vertex of $B$ connects. The edges that leave $s$ or enter $t$ are called regular edges.
 
 Let $q=\ceil{\log d'}$.
 Consider any special edge $e=(u,v)\in E(G)$. We replace $e$ with $q$ parallel edges $e_1,\ldots,e_q$, whose lengths are $1,2,4,\ldots,2^q$, respectively. 
 Additionally, for every regular edge $e'=(x,u)$ that enters $u$, we replace $e'$ with $q$ parallel edges $e'_1,\ldots,e'_q$, whose lengths are $1,2,4,\ldots,2^q$, respectively (the latter transformation is also applied to the regular edge $(s,u)$, if such an edge is present). The latter step is done in order to ensure that the length of every edge in $G$ is greater than $0$, since this is a requirement in \Cref{thm: directed weighted ESTree} that we will use. 
If $(x,y)$ is a regular edge with $y\neq t$, such that no special edge leaves $y$, then we set $\ell(y)=1$. 
%If $(x,y)$ is a regular edge of $G$ with $y\neq t$, such that no special edge leaves $y$, then we delete $y$ and its incident edges from $G$, since vertex $y$ does not lie on any $s$-$t$ path.
 So far we have assigned lengths to all edges of $H$, except for some edges that leave $s$, and all edges that enter $t$. Initially, for every edge $e$ that leaves $s$, we set the length of $e$ to be $0$. If $e=(v,t)$ is an edge that enters $t$, then we set its length to be $1$ if $v\in L$ and to $0$ otherwise. We then increase the length of every edge that is incident to $s$ or $t$ by $1$, thus obtaining the final lengths of all edges in $H$.

 Over the course of the algorithm, for every special edge $e=(u,v)\in E(G)$, for every index $1\leq j\leq q$, we monitor the number of copies of $e$ in graph $\hat G$, whose length is $\frac{2^j}{d'}$. Once this number becomes $0$, the corresponding copy $e_j$ of $e$ is deleted from graph $H$. Additionally, for every regular edge $e'=(x,u)$, we also delete the corresponding copy $e'_j$ from $H$. Lastly, whenever a vertex $a$ is deleted from set $A$, we delete edge $(s,a)$ from $H$, and whenever a vertex $b$ is deleted from set $B$, we delete edge $(b,t)$ from $H$. Therefore, the only types of updates that graph $H$ undergoes is the deletion of edges.

  Throughout the algorithm, we will ensure that, if $Q$ is an $s$-$t$ path that the oracle returns, then exactly one vertex of $Q$ lies in $B$ -- the penultimate vertex on the path. This ensures that, throughout the algorithm, for every vertex $v$ that currently lies in $B$, for every edge $e\in E(G)$ that enters $v$, a copy of $e$ whose length is $1$ is present in $H$.

 Consider now an $s$-$t$ path $Q$ in $H$ at any time during the algorithm. From the definition of $H$, the penultimate vertex on $Q$ must lie in $B$; we denote that vertex by $b$. We assume that $Q$ is a simple path, that $b$ is the only vertex of $Q$ that lies in $B$, and that, for every edge $e\in E(Q)$, $e$ is the shortest-length edge among its parallel edges in $H$. Note that the first edge on $Q$ must be regular, while the second edge may be either regular or special. Similarly, the last edge on $Q$ must be regular, while the penultimate edge may be regular or special. We partition $Q$ into three subpaths, $Q_1,Q_2$, and $Q_3$, as follows. Path $Q_1$ is the longest subpath of $Q$ that starts with $s$ and only contains regular edges. Since no regular edge connects a vertex of $A$ to a vertex of $B$, it is easy to verify that $Q_1$ contains either one or two edges. Similarly, we let $Q_3$ be the longest subpath of $Q$ that terminates at $t$ and only contains regular edges. As before, path $Q_3$ contains either one or two edges. We let $Q_2$ be the subpath of $Q$ that starts at the last endpoint of $Q_1$, and terminates at the first endpoint of $Q_3$. From our definition, the first and the last edges on $Q_2$ must be special edges, and, from the properties of graph $G$, special and regular edges alternate on $Q_2$. Therefore, we can denote the sequence of the edges on path $Q_2$ by $(e_1,e'_2,e_2,\ldots,e'_z,e_z)$, where $e_1,e_2,\ldots,e_z$ are special edges, and $e'_2,e'_3,\ldots,e'_z$ are regular edges. For all $1\leq i\leq z$, let $e^*_i$ denote the edge of $\hat G$ corresponding to $e_i$. Then for all $1<i\leq z$, the lengths of the edges $e'_i$ and $e_i$ in $H$ are both equal to $\ell(e^*_i)\cdot d'$, where $\ell(e^*_i)$ is the length of $e^*_i$ in $\hat G$. Therefore, the length of $Q_2$ in $H$ is $\ell(e^*_1)\cdot d'+\sum_{i=2}^z2\ell(e^*_i)\cdot d'$. Consider now the path $Q_1$. If this path consists of a single edge (that must be incident to $s$), then, from our construction, the length of that edge is $\ell(e^*_1)\cdot d'+1$. Otherwise, the first edge on $Q_1$ has length $1$, and the second has length $\ell(e^*_1)\cdot d'$. In either case, the length of $Q_1$ in $H$ is $\ell(e^*_1)\cdot d'+1$. Lastly, consider the path $Q_3$. 
 Notice that the first endpoint of $Q_3$, that we denote by $v$, has a special edge entering it, and so $v$ lies in the set $L$ of the bipartition. 
 If path $Q_3$ consists of a single edge, then that edge connects a vertex of $L$ to $t$, so, from our construction, its length is $2$. Assume now that path $Q_3$ consists of two edges, and  denote the sequence of the vertices on $Q_3$ by $(v,b,t)$. Then $b\in R$ must hold, and so the length of the edge $(b,t)$ is $1$. Additionally, since $b\in B$, the length of the edge $(v,b)$ is $1$ as well. Therefore, in either case, the length of $Q_3$ is $2$. Altogether, we get that the length of $Q$ is $2d'\cdot\sum_{i=1}^z\ell(e^*_i)+3$. 
 
 It is now easy to verify that, throughout the algorithm, if $Q$ is an $s$-$t$ path in $H$, whose length is $\delta+3$, and exactly one vertex of $Q$ lies in $B$, then the length of the corresponding path $Q'$ in $\hat G$ (where for every edge $e\in E(Q)$, we choose its cheapest copy in $\hat G$ to participate in the path) is at most $\frac{\delta}{2d'}$. Similarly, if $Q'$ is an $s$-$t$ path in graph $\hat G$, that contains exactly one vertex of $B$, whose length is $\delta'$, then the length of the corresponding $s$-$t$ path in graph $H$ (where again for every edge $e\in E(Q')$ we use its cheapest copy available in $H$) is at most $2\delta'\cdot d'+3$.
 
 It is now enough to design an algorithm for the following problem. We are given a graph $H$, with two special vertices $s$ and $t$, and integral positive edge lengths, together with a parameter $d'>0$. The algorithm consists of a number of iterations. In every iteration, we need to either return the shortest $s$-$t$ path in the current graph $H$, provided that its  length is at most $2d'+2$; or correctly establish that the length of the shortest $s$-$t$ path in $H$ is greater than $2d'+2$. In the latter case, the algorithm terminates, while in the former case, the algorithm is given a batch of edge deletions from graph $H$, together with the deletion of the penultimate vertex on $Q$ from $B$, and we proceed to the next iteration.
 It is easy to verify that, if $Q$ is the shortest $s$-$t$ path in $H$, then it may contain at most one vertex that lies in $B$ -- the penultimate vertex on $Q$. Otherwise, if another vertex $b'\in B$ lies in $Q$, then the subpath of $Q$ between $s$ and $b'$, combined with the edge $(b',t)$, provides a shorter $s$-$t$ path in $H$.

We apply the algorithm from \Cref{thm: directed weighted ESTree} to graph $H$ and parameter $2d'+3$. Recall that the algorithm maintains an \EST for $H$, rooted in $s$, up to depth $2d'+3$. We denote the corresponding tree by $T$. In every iteration, we check whether vertex $t$ lies in the tree $T$, and if so, whether $\dist_H(s,t)\leq 2d'+3$. If this is not the case, then we terminate the algorithm. Otherwise, we use the tree $T$ to compute a shortest $s$-$t$ path $P$  in $H$, whose length must be at most $2d'+3$, in time $O(|E(P)|)$, by simply retracing the unique $s$-$t$ path in the tree $T$. 
By deleting the first and the last vertices from $P$, we obtain a path connecting a vertex of $A$ to a vertex of $B$ in $\hat G$, whose length is at most $1$.

It is easy to verify that the total time needed to implement the oracle is $O(|E(H)|\cdot d'+\sum_{Q\in \qset}|E(Q)|)$, where $\qset$ is the set containing all paths that the algorithm ever returned. As shown already, 
$\sum_{Q\in \qset}|E(Q)|)\leq \tilde O(md')$, while it is easy to see that $|E(H)|\leq \otilde(m)$. The total running time of the oracle is then bounded by $\otilde(md')$. 

We note that in general, the algorithm for \Cref{thm: alg for matching player} does not need to compute and maintain the graph $\hat G$ explicitly, this graph is only used for the sake of analysis. Instead, for every special edge $e\in E(G)$, for every integer $0\leq j\leq \ceil{\log d'}$, we only need to keep track of the number of copies of $e$ that lie in graph $\hat G$ and have length $2^j/d'$. Therefore, the total running time of the remainder of the algorithm is bounded by $O(m+\sum_{Q\in \qset}|E(Q)|)\leq \otilde(md')$. This includes the time required to perform a BFS in $G'$,  in order to compute a spase cut $(S,T)$ if needed. The total running time of the whole algorithm  for \Cref{thm: alg for matching player} is then bounded by $\otilde (md')$.

\newpage
\appendix

\section{Proof of \Cref{obs: ball growing}}
\label{sec: appx ball growing}
%\begin{proof}
	For all integers $i\geq 0$, let $S_i=B_G(x,i)$ be the set of all vertices $u\in V(G)$ with $\dist_G(x,u)\leq i$.
	
	Let $L=d/4$. We claim that, if $|S_L|< n/2$, then there is some index $1\leq i<L$ for which $|E_G(S_i,V\setminus S_i)|\leq  \phi\cdot |S_i|$ holds.
	
	Indeed, assume otherwise. Then for all $1\leq i<L$, $|E_G(S_i,V\setminus S_i)|\geq \phi\cdot |S_i|$. Denote $R_i=S_{i+1}\setminus S_i$. Since every edge in $E_G(S_i,V\setminus S_i)$ must enter a vertex of $R_i$, and the degree of every vertex in $G$ is at most $\dmax$, we get that $|R_i|\geq \frac{|E_G(S_i,V\setminus S_i)|}{\dmax}\geq \frac{\phi\cdot |S_i|}{\dmax}$ must hold.
	
	Therefore, for all $1\leq i<L$, $|S_{i+1}|\geq \left(1+\frac{\phi}{\dmax}\right )|S_i|$, and $|S_L|\geq \left(1+\frac{\phi}{\dmax}\right )^L\geq \left(1+\frac{\phi}{\dmax}\right )^{d/4}$.

Since $d=\frac{32\dmax\log n}{\phi}$, we get that $|S_L|>n$, a contradiction.

	For all integers $i\geq 0$, we also let $S'_i=\set{u\in V(G) \mid \dist(u,y)\leq i}$. Using the same reasoning, if $|S'_L|\leq n/2$, then there is some index $1\leq i<L$ for which $|E_G(V\setminus S'_i,S'_i)|\leq  \phi\cdot |S'_i|$.

We are now ready to describe our algorithm. We run in parallel two processes. The first process simply performs a BFS from vertex $x$, while the second process performs a reversed BFS from vertex $y$. The first process, whenever it computes a set $S_i$, checks whether $|E_G(S_i,V\setminus S_i)|\leq \phi\cdot |S_i|$. If so, we terminate both processes and return the cut $(A,B)=(S_i,V\setminus S_i)$. Similarly, the second process, whenever it computes a set $S'_j$, checks whether $|E_G(V\setminus S'_j,S'_j)|\leq  \phi\cdot |S'_j|$ holds. If so, we terminate both processes and return the cut $(A,B)=	(V\setminus S'_j,S'_j)$. Since we set $L=d/4$, it must be the case that either $|S_L|\leq n/2$ or $|S'_L|\leq n/2$, so one of the two processes must terminate with a sparse cut. Since we run the two processes in parallel, it is easy to verify that the running time is bounded by the number of edges that the process that successfully terminated explored. Therefore, the running time of the algorithm is at most $O(\dmax\cdot \min\set{|A|,|B|})$.
%	\end{proof}

\section{Proof of~\Cref{cor: one iteration main}}
\label{appsec:cor_one_iteration_main}

		We start by applying the algorithm from \Cref{thm: one iteration main} to graph $H$, to obtain a collection $\pset$ of $\Omega(\Delta/\poly\log n)$  $s$-$t$ paths in $H$, that cause edge-congestion at most $O(\log n)$. 
	We construct a directed graph $H'\subseteq H$, that consists of all vertices and edges that participate in the paths of $\pset$.
	The capacity of every edge in $H'$ remains unit -- the same as its capacity in $H$.
	
	Since the paths in $\pset$ cause edge-congestion $\eta=O(\log n)$, due to the special structure of graph $H$, they also cause vertex-congestion  $\eta$. Therefore, $|E(H')|\leq O(n\log n)$. Additionally, by sending $1/\eta=\Omega(1/\log n)$ flow units on every path in $\pset$, we can obtain a valid $s$-$t$ flow in $H'$, that obeys edge capacities, whose value is $\Omega(|\pset|/\log n)$. 
	
	Next, we compute an integral maximum $s$-$t$ flow in $H'$ that obeys edge capacities in $H'$, using the standard Ford-Fulkerson algorithm. We start by setting the flow $f(e)$ on every edge $e\in E(H')$ to $0$, and then perform iterations. In every iteration, we use the current flow $f$ in $H'$ in order to compute a residual flow network $H''=H'_f$. If there is no $s$-$t$ path in $H''$, then the algorithm terminates. We are then guaranteed that the value of the current flow $f$ is optimal, so $\val(f)\geq \Omega(|\pset|/\log n)$. Otherwise, we compute an arbitrary $s$-$t$ path $P$ in $H''$, and we augment the flow $f$ along the path $P$. Notice that every iteration of the algorithm can be implemented in time $O(|E(H'')|)=\otilde(n)$, and the number of iterations is bounded by $n$. Therefore, the total running time of this algorithm is $\otilde(n^2)$. The final integral flow $f$ can be decomposed, in time $O(|E(H')|)=\otilde(n)$, into a collection $\pset'$ of $\Omega(|\pset|/\log n)\ge \Omega(\Delta/\poly\log n)$ edge-disjoint paths connecting $s$ to $t$ in $H'$. Due to the special structure of the graph $H$, the paths in $\pset'$ are also internally vertex-disjoint. 
	It now remains to analyze the running time of the algorithm. The running time of the algorithm from \Cref{thm: one iteration main} is $\otilde\left(\frac{n^{2.5}}{\sqrt{\Delta}}\right ) $; the running time of the Ford-Fulkerson algorithm in graph $H'$ is $\otilde(n^2)$, and the time required to decompose the resulting flow into a collection $\pset'$ of paths is bounded by $\otilde(n)$, as discussed above. Since $\Delta \le n$, the overall running time of the algorithm is bounded by $\otilde\left(\frac{n^{2.5}}{\sqrt{\Delta}}\right ) $.
	
	We note that we could have used the blocking flow idea used in Hopcroft-Karp algorithm to compute the set $\pset'$ of paths in $H'$ in time $\otilde(n^{1.5})$. Alternatively, we could also directly round the initial fractional $s$-$t$ flow in $H'$, in expected time $\otilde(|E(H)|)=\otilde(n)$, e.g. by using the algorithm from Theorem 5 in~\cite{LRS13}, that builds on the results of \cite{GKK10}. The latter would however result in a randomized algorithm, and since the bottlenecks in the running time of our algorithm lie elsewhere we use this simpler albeit somewhat slower method.

%-------------------------
%-------------------------
%-------------------------
%-------------------------
%-------------------------
\section{Proof of \Cref{thm: almost DAG routing}}
\label{subsec: proof of DAG-like SSSP}
%-------------------------
%-------------------------
%-------------------------
%-------------------------
%-------------------------

\paragraph{Initial Transformation of  Edge Lengths.} We start by performing a simple transformation of the lengths of edges, whose goal is to ensure that $\Gamma\cdot \log n\leq d\leq \max\set{4n/\eps,4\Gamma\cdot \log n}$.
 
We perform the transformation in two steps. The first step is only performed if $d>\ceil{n/\eps}$. For every edge $e\in E(G)$, we let the new length  $\tilde \ell(e)$ be obtained by rounding the length $\ell(e)$ up to the next integral multiple of $\ceil{\eps d/n}$, and we set $\tilde d=(1+2\eps)d$. Consider any simple $s$-$t$ path $P$ in $G$. Since $|E(P)|\leq n$, we get that $\sum_{e\in E(P)}\ell(e)\leq \sum_{e\in E(P)}\tilde \ell(e)\leq 2\eps d+\sum_{e\in E(P)}\ell(e)$. Notice that, if $P$ is a simple $s$-$t$ path in $G$, with $\sum_{e\in E(P)}\tilde \ell(e)\leq (1+6\eps)\tilde d$, then $\sum_{e\in E(P)} \ell(e)\leq (1+10\eps)d$.
Moreover, if $G'$ is a subgraph that is obtained from $G$ by deleting some edges from it, and $G'$ does not contain an $s$-$t$ path $P$  with $\sum_{e\in E(P)}\tilde \ell(e)\leq \tilde d$, and $\sum_{e\in E(P)}w(e)\leq \Gamma$, then $G$ may not contain an $s$-$t$ path $P$  with $\sum_{e\in E(P)}\ell(e)\leq d$, and $\sum_{e\in E(P)}w(e)\leq \Gamma$. Therefore, we can now replace the lengths $\ell(e)$ of edges $e\in E(G)$ with new  lengths $\tilde \ell(e)$. The problem remains unchanged, except that now, in response to  $\pquery$, the algorithm needs to return a simple $s$-$t$ path $P$ of length at most $(1+6\eps)\td$. 

Since all edge lengths are now integral multiples of $\ceil{\frac {\eps d}{n}}$, we can divide the length of every edge $e$ by $\ceil{\frac {\eps d}{n}}$, obtaining a new length, that we still denote by $\ell(e)$, which remains an integer. We also update $\td$ by dividing its value by $\ceil{\frac {\eps d}{n}}$. This scaling of the lengths of edges and of parameter $d$ do not change the problem, but now we are guaranteed that $\td\leq 4n/\eps$. For convenience, we will call $\tilde d$ by $d$ from now on.

If $d\geq \Gamma\cdot \log n$ now holds, no further transformations are necessary, since we are now guaranteed that $\Gamma\cdot \log n\leq d\leq \max\set{4n/\eps,4\Gamma\cdot \log n}$. Assume now that $d<\Gamma\cdot \log n$.
Then we multiply the lengths of all edges by factor $\ceil{\frac{\Gamma\log n}{d}}$, and replace $d$ with value $d\cdot \ceil{\frac{\Gamma\log n}{d}}$; this transformation does not change the problem, but it ensures that $\Gamma\cdot \log n\leq d\leq 4\Gamma\cdot \log n$. From now on we assume that $\Gamma\cdot \log n\leq d\leq \max\set{4n/\eps,4\Gamma\cdot \log n}$ holds. The problem remains the same, except that now, in response to  $\pquery$, the algorithm needs to return a simple $s$-$t$ path $P$ of length at most $(1+6\eps)d$.

\paragraph{Modified Edge Lengths.}
We define, for every edge $e\in E(G)$, a \emph{modified edge length} $\ell'(e)$, as follows: $ {\ell'(e)=\ell(e)+\eps\cdot \ceil{\frac{d}{\Gamma\cdot \log n}\cdot w(e)}}$. Notice that, since $1/\eps$ is an integer, for every edge $e\in E(G)$, its modified length $\ell'(e)$ is an integral multiple of $\eps$. Throughout, for a pair of vertices $u,v$ in $G$, we denote by $\dist'(u,v)$ the distance from $u$ to $v$ in the current graph $G$, with respect to edge lengths $\ell'(\cdot)$. We need the following simple observation.

\begin{observation}\label{obs: short path translation}
	Assume that at time $\tau\in \tset$,  there is an $s$-$t$ path $P$ in $G$ with $\sum_{e\in E(P)}\ell(e)\leq d$, and $\sum_{e\in E(P)}w(e)\leq \Gamma$. Then at time $\tau$, $\dist'(s,t)\leq (1+3\eps)\cdot d$.
\end{observation}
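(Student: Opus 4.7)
The plan is to directly upper-bound $\sum_{e \in E(P)} \ell'(e)$ using the two given constraints on $P$ and the definition of $\ell'$. Unfolding the definition,
\[
\sum_{e \in E(P)} \ell'(e) \;=\; \sum_{e \in E(P)} \ell(e) \;+\; \eps \sum_{e \in E(P)} \left\lceil \frac{d}{\Gamma \log n} \cdot w(e) \right\rceil.
\]
The first sum is at most $d$ by hypothesis. For the second sum, I will apply $\lceil x \rceil \leq x + 1$ termwise, so that
\[
\sum_{e \in E(P)} \left\lceil \frac{d}{\Gamma \log n} \cdot w(e) \right\rceil \;\leq\; \frac{d}{\Gamma \log n} \sum_{e \in E(P)} w(e) \;+\; |E(P)| \;\leq\; \frac{d}{\log n} \;+\; |E(P)|,
\]
using $\sum_{e \in E(P)} w(e) \leq \Gamma$.

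The key remaining step is to bound $|E(P)|$. After the initial length rescaling described just before the observation, edge lengths remain integers with $\ell(e) \geq 1$, so $|E(P)| \leq \sum_{e \in E(P)} \ell(e) \leq d$. Combining,
\[
\sum_{e \in E(P)} \ell'(e) \;\leq\; d \;+\; \eps \left( \tfrac{d}{\log n} + d \right) \;\leq\; d + 2\eps d \;=\; (1+2\eps)\, d \;\leq\; (1+3\eps)\, d,
\]
(assuming $n \geq 2$ so that $\tfrac{1}{\log n} \leq 1$). Since $P$ connects $s$ to $t$, this bounds $\dist'(s,t)$ as required.

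\textbf{Main obstacle.} There is no real obstacle; the only subtle point is to remember that after the edge-length normalization preceding the statement, lengths are still positive integers, which is exactly what lets us convert the edge-count $|E(P)|$ term (produced by the ceiling) into an extra multiple of $d$ rather than an unwanted $n$. Without this observation, the naive bound $|E(P)| \leq n$ combined with $d \leq 4n/\eps$ would only yield $(1+O(1))d$, not $(1+3\eps)d$.
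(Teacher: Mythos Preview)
Your proof is correct and in fact cleaner than the paper's. Both arguments handle the $+\eps$ from $\lceil x\rceil\le x+1$ identically, via $|E(P)|\le\sum_e\ell(e)\le d$ using $\ell(e)\ge 1$. The difference is in how the main term $\frac{d}{\Gamma\log n}\sum_e w(e)$ is bounded: you simply invoke $\sum_e w(e)\le\Gamma$ directly to get $d/\log n$, whereas the paper partitions $E(P)$ into weight classes $E_i=\{e\in E(P):w(e)=2^i\}$, bounds $|E_i|\le\Gamma/2^i$, and then sums $\sum_i |E_i|\cdot\frac{\eps d}{\Gamma\log n}\cdot 2^i\le\sum_i\frac{\eps d}{\log n}$ over $O(\log n)$ values of $i$. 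The paper's detour through the $E_i$'s recovers the same bound (up to the constant in front of $\eps d$) but is unnecessary here; your direct route is shorter, does not rely on the weights being powers of $2$, and even yields the slightly sharper $(1+2\eps)d$.
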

\begin{proof}
	For all $0\leq i\leq \ceil{\log n}$, let $E_i$ be the set of all edges $e\in E(P)$ with $w(e)=2^i$. Since $\sum_{e\in E(P)}w(e)\leq \Gamma$, for all such $i$, $|E_i|\leq \frac{\Gamma}{2^i}$. We now get that:
	
	\[\begin{split}
	\sum_{e\in E(P)}\ell'(e)&\leq \sum_{e\in E(P)}\left(\ell(e)+\frac{\eps\cdot d}{\Gamma\cdot \log n}\cdot w(e)+\eps\right )\\
	&\leq \sum_{e\in E(P)}(1+\eps)\ell(e)+\sum_{i=0}^{\ceil{\log n}}\sum_{e\in E_i}\frac{\eps\cdot d}{\Gamma\cdot \log n}\cdot 2^i\\
		&\leq (1+\eps) d+\sum_{i=0}^{\ceil{\log n}} |E_i|\cdot \frac{\eps\cdot d}{\Gamma\cdot \log n}\cdot 2^i\\
			&\leq  (1+\eps)\cdot d+\sum_{i=0}^{\ceil{\log n}} \frac{\Gamma}{2^i}\cdot \frac{\eps\cdot d}{\Gamma\cdot \log n}\cdot 2^i\\
			&\leq (1+3\eps)d.
	\end{split}\]
	(we have used the fact that $\ell(e)\geq 1$ for every edge $e\in E(G)$).
\end{proof}

Throughout the algorithm, we maintain, for every vertex $v\in V(G)$, an estimate $\tdist(v)$ on $\dist'(s,v)$. If $\tdist(v)>2d$ holds, we replace $\tdist(v)$ with $\infty$. Additionally, for every edge $e=(v,u)$, vertex $u$ is given an estimate $\tdist_u(v)$ on $\dist'(s,v)$ (that may be outdated). More specifically, if $w(e)=2^i$, then vertex $v$ only communicates its estimate $\tdist(v)$ to $u$ in increments of ${\eps^2\cdot \ceil{\frac{d\cdot 2^i}{\Gamma\cdot \log n}}}$ (it is this modification of the standard Even-Shiloach tree algorithm that leads to the faster total update time). In other words, every time $\tdist(v)$ passes a threshold of $ {z\cdot \eps^2\cdot \ceil{\frac{d \cdot 2^i}{\Gamma\cdot \log n}}}$ for some integer $z$, vertex $v$ communicates its current estimate $\tdist(v)$ to $u$, and $\tdist_u(v)$ is replaced with this amount. 
Additionally, we maintain a directed tree $T\subseteq G$, that is rooted at vertex $s$, with all edges directed away from the root. The tree contains all vertices $v\in V(G)$ with $\tdist(v)\neq\infty$. Whenever we refer to lengths of paths or distances in the tree $T$, it is with respect to edge lengths $\ell'(e)$ for $e\in E(T)$.

\paragraph{Additional Data Structures.} In addition to the tree $T$, the estimates $\tdist(v)$ for vertices $v\in V$, and estimates $\tdist_u(v)$ for edges $(v,u)\in E$, we maintain the following data structures. For every vertex $v\in V$, we maintain a heap $H(v)$ that contains all vertices of $\set{u\in V\mid (u,v)\in E}$, with the key associated with each such vertex $u$ being $\tdist_v(u)+\ell'(u,v)$. Additionally, for all $0\leq i\leq \ceil{\log n}$, we maintain a set $U_i(v)$ of vertices, that contains every vertex $u\in V(G)$, such that there is an edge  $e=(v,u)$ with $w(e)=2^i$ in $G$.

\paragraph{Invariants.}
We ensure that, throughout the algorithm, the following invariants hold.

\begin{properties}{I}
	\item For every vertex $v$, the value $\tdist(v)$ is an integral multiple of $\eps$, and it may only grow over the course of the algorithm; the same holds for values $\tdist_u(v)$ for all edges $e=(v,u)$; \label{inv: estimates grow}
	\item If $e=(x,y)$ is an edge in the tree $T$, then vertex $x$ is the vertex with the lowest key $\tdist_y(x)+\ell'(x,y)$ in $H(y)$, and moreover $\tdist(y)=\tdist_y(x)+\ell'(x,y)$; \label{inv: keys via tree}
	\item For every edge $e=(u,v)$, if $w(e)=2^i$, then $\tdist(u)-\eps^2\cdot \ceil{\frac{d \cdot 2^i}{\Gamma\cdot \log n}}\leq \tdist_v(u)\leq \tdist(u)$; \label{inv: updates of neighbors}

	\item For all $v\in V(G)$, if $\dist'_G(s,v)\leq 2d$, then $\tdist(v)\neq \infty$ and $\tdist(v)\leq \dist'(s,v)$; and\label{inv: dist estimate upper bound}
	\item For all $v\in V(G)$, $\tdist(v)\geq (1-\eps)\cdot \dist'(s,v)$ holds at all times; moreover, if $v\in V(T)$, then the length of the $s$-$v$ path in the current tree $T$ is at most $\frac{\tdist(v)}{1-\eps}$. \label{inv: dist estimate lower bound} 

\end{properties}

\paragraph{Initialization.}
At the beginning of the algorithm, we compute a shortest-path out-tree $T$ rooted at vertex $s$, with respect to edge lengths $\ell'(\cdot)$. For every vertex $v\in V$, we then set $\tdist(v)=\dist'(s,v)$ -- the distance from $s$ to $v$ in the tree $T$.
Next, we scan all vertices in the bottom-up order in the tree $T$. For each such vertex $v$, if $\tdist(v)>2d$, then we set $\tdist(v)=\infty$, and delete $v$ from the tree. For every edge $e=(v,u)\in E$, we  set $\tdist_u(v)=\tdist(v)$. Lastly, for every vertex $v\in V$, we initialize the heap $H(v)$ with the vertices of $\set{u\in V\mid (u,v)\in E}$, and the sets $\set{U_i(v)}_{0\leq i\leq \ceil{\log n}}$ of its out-neighbors.  It is easy to verify that the initialization can be performed in time $O(m\log m)$, and that all invariants hold at the end of initialization.

Next we describe the algorithms for handling vertex-splits and edge deletions.

\paragraph{Algorithm for Handling Vertex Splits.}

Consider a vertex split operation, in which we are given a vertex $v\in V$, and 
 a collection $U=\set{u_1,\ldots,u_k}$ of new vertices that need to be inserted into $G$. Recall that we are also given a collection $E'$ of new edges that must be inserted into $G$, and for every edge $e\in E'$, we are given a length $\ell(e)$ and a weight $w(e)$. For each such edge $e\in E'$, either both its endpoints lie in $\set{u_1,\ldots,u_k, v}$; or one endpoint lies in $\set{u_1,\ldots,u_k}$, and the other in $V(G)\setminus\set{v,u_1,\ldots,u_k}$. In the latter case, if $e=(u_i,x)$, then edge $e'=(v,x)$ must currently lie in $G$, and both $w(e')=w(e)$ and $\ell(e')=\ell(e)$ must hold, so $\ell'(e')=\ell'(e)$. Similarly, if $e=(x,u_i)$, then edge $e'=(x,v)$ must currently lie in $G$, and both $w(e')=w(e)$ and $\ell(e')=\ell(e)$ must hold, so $\ell'(e')=\ell'(e)$.
 
 If $\tdist(v)\neq \infty$, then let $e=(x,v)$ be the unique edge that is incoming into $v$ in the tree $T$. We \emph{temporarily} insert, into set $E'$, edges $e_i=(x,u_i)$ with $\ell'(e_i)=\ell'(e)$ and $w(e_i)=w(e)$ for all $u_i\in U$.
 
 Our algorithm consists of two steps.
 In the first step,
we set, for every vertex $u_i\in U$, $\tdist(u_i)=\tdist(v)$. We then inspect the edges of $E'$ one by one. 
Consider any such edge $e=(a,b)$.

\begin{itemize}
\item If both endpoints of $e$ lie in $U\cup\set{v}$, then we set $\tdist_b(a)=\tdist(a)$, and update the corresponding heap $H(b)$ by inserting $a$ into it.
 
\item If the second endpoint $b$ of $e$ lies in $U$ and the first endpoint $a$ lies in $V(G)\setminus (U\cup\set{v})$, then edge $e'=(a,v)$ with $\ell'(e')=\ell'(e)$ must be present in $G$. In this case, we set $\tdist_b(a)=\tdist_v(a)$, and update heap $H(b)$ by inserting $a$ into it.

\item Similarly, if the first endpoint $a$ of $e$ lies in $U$, and the second endpoint $b$ lies in $V(G)\setminus (U\cup\set{v})$, then edge $e'=(v,b)$ with $\ell'(e')=\ell'(e)$ must be present in $G$. In this case, we set $\tdist_b(a)=\tdist_b(v)$, and update heap $H(b)$ by inserting $a$ into it.
\end{itemize}

Lastly, if $w(e)=2^i$, and $b\not \in U_i(a)$, then we insert $b$ into $U_i(a)$.
We also update the graph $G$ with the insertion of the vertices of $U$ and edges of $E'$.
 Notice that all these updates can be done in time $O((|U|+|E'|)\cdot \log m)$. 
 
The second step only needs to be executed if $\tdist(v)\neq\infty$. In this step we delete from $G$ the edges $e_1,\ldots,e_k$ that were temporarily inserted. Our data structure is then updated using the algorithm that we describe below, for handling edge-deletion updates.

We now establish that all invariants hold after the first step. 
Invariant \ref{inv: estimates grow} continues to hold, as we did not update any $\tdist(a)$ or $\tdist_b(a)$ estimates except for the newly inserted vertices. 

 In order to establish Invariant \ref{inv: keys via tree}, assume that $\tdist(v)\neq \infty$, and so edge $(x,v)$ lies in tree $T$. Observe that for every vertex $u_i\in U$, all vertices in $H(u_i)$ lie in $H(v)$, except possibly for the vertices of $\left(U\cup v\right )\setminus\set{u_i}$. But for every vertex $a\in \left(U\cup v\right )\setminus\set{u_i}$,  $\tdist(a)=\tdist(v)>\tdist(x)$ currently holds, so the key of such a vertex $a$ in the heap $H(u_i)$ is greater than or equal to $\tdist_{u_i}(v)= \tdist(v)=\tdist_{v}(x)+\ell'(x,v)=\tdist_{u_i}(x)+\ell'(x,u_i)$.
Since $u_i$ is connected to $x$ in the tree $T$, we get that Invariant \ref{inv: keys via tree} holds for $u_i$. Consider now any vertex $a\not\in U$, and assume that some edge $(b,a)$ lies in tree $T$. It is possible that we have inserted new vertices into $H(a)$: if an edge $e=(u_i,a)$ for some $u_i\in U$ lies in $E'$, then $u_i$ was inserted into $H(a)$. However, in this case, edge $e'=(v,a)$ with $\ell'(e')=\ell'(e)$ belonged to $G$ before the current update, and so $v$ already lied in $H(a)$.  Since $\ell'(e)=\ell'(e')$, and $\tdist_a(u_i)=\tdist_a(v)$, the key of $u_i$ in heap $H(a)$ is equal to the key of $v$, that remained unchanged. Therefore, Invariant \ref{inv: keys via tree} continues to hold.

We now turn to prove that Invariants \ref{inv: dist estimate upper bound} and  \ref{inv: dist estimate lower bound} continue to hold. Consider any vertex $a\in V(G)$. If $a\not \in U$, then it is easy to see that $\dist'(s,a)$ did not change after the current update. This is because for any path $P$ connecting $s$ to $a$ in the current graph $G$, we can obtain a path $P'$ connecting $s$ to $a$ in the original graph $G$, whose length with respect to the modified edge lengths $\ell'(\cdot)$ is no higher than that of $P$, by replacing every vertex of $U$ on $P$ with the vertex $v$. Moreover, path $P'$ still belongs to the current graph since we did not delete any vertices or edges during the current update. Since $\tdist(a)$ did not change, and the path connecting  $s$ to $a$ in $T$ did not undergo any changes either, both invariants continue to hold for $a$.

Assume now that $a\in U$. Let $P$ be the path connecting $s$ to $v$ in the current tree $T$. Since path $P$ did not change during the current update, from Invariant \ref{inv: dist estimate lower bound} , the length of this path is at most $\frac{\tdist(v)}{1-\eps}$. Notice that the last edge on this path is $e=(x,v)$, and recall that we have inserted an edge $e'=(x,a)$ into $G$, with $\ell'(e')=\ell'(e)$. By replacing edge $e$ with $e'$ on path $P$, we obtain a path $P'$, connecting $s$ to $a$, which is the unique $s$--$a$ path in tree $T$. Moreover, the length of $P'$ is equal to the length of $P$, which, in turn, is at most $\frac{\tdist(v)}{1-\eps}=\frac{\tdist(a)}{1-\eps}$. We conclude that $\dist'(s,a)\leq \frac{\tdist(a)}{1-\eps}$, establishing Invariant \ref{inv: dist estimate lower bound}  for vertex $a$.

In order to establish Invariant \ref{inv: dist estimate upper bound} for vertex $a$, let $Q$ be the shortest $s$-$a$ path in graph $G$, with respect to edge lengths $\ell'(\cdot)$. By replacing every vertex of $U$ on this path with vertex $v$, and removing loops, we obtain an $s$-$v$ path $Q'$ in $G$, whose length is no higher than the length of $Q$. Therefore, $\dist'(s,v)\leq \dist'(s,a)$. Since we set $\tdist(a)=\tdist(v)$, we get that, if $\dist'(s,a)\neq \infty$, then $\dist'(s,v)\neq \infty$, and $\tdist(a)=\tdist(v)\leq \dist'(s,v)\leq \dist'(s,a)$. We conclude that Invariants \ref{inv: dist estimate upper bound} and \ref{inv: dist estimate lower bound}  continue to hold.

It now remains to establish Invariant \ref{inv: updates of neighbors}. 
Consider any edge $e=(a,b)$. Notice that none of the values $\tdist(a)$ of vertices $a\in V$ changed, except for the vertices of $U$ that were inserted into $G$. Similarly, value $\tdist_b(a)$ remained unchanged, except if $a$ or $b$ (or both) are vertices that were newly inserted into $G$. In other words, we only need to consider edges $e=(a,b)\in E'$.

Let $e=(a,b)$ be any such edge. As before, we need to consider three cases. The first case happens if both $a,b\in U\cup \set{v}$. In this case, we set $\tdist(a)=\tdist(b)=\tdist(v)$ and $\tdist_b(a)=\tdist(v)$, so the invariant clearly holds. Assume now that $a\in U$ and $b\not\in U\cup\set{v}$. In this case, edge $e'=(v,b)$ was present in $G$ before the current update, and we set $\tdist(a)=\tdist(v)$ and $\tdist_b(a)=\tdist_b(v)$. Since the invariant held for edge $(v,b)$ before the current update, and since $w(a,b)=w(v,b)$, the invariant holds for edge $(a,b)$ at the end of the current update. Lastly, assume
that  $a\not\in U\cup\set{v}$ and $b\in U$. In this case, edge $(a,v)$ was present in $G$ before the current update, we set $\tdist_b(a)=\tdist_v(a)$ and $\tdist(a)$ remains unchanged. Since Invariant \ref{inv: updates of neighbors} held for edge $(a,v)$ at the beginning of the current update, it now holds for edge $(a,b)$ at the end of the update.

\paragraph{Algorithm for Handling Edge Deletions.}
We now provide an algorithm for updating the data structure with the deletion of some edge $e=(x,y)$. First, we may need to update the heap $H(y)$ (by either deleting $x$ from it, or changing the key of $x$ in the heap, if another $(x,y)$ edge is present in $G$). We may also need to delete $y$ from the appropriate set $U_i(x)$; all this can be done in time $O(\log n)$. If $e\not\in T$, then no further updates are needed. We assume from now on that $e\in E(T)$. We start by deleting the edge $e$ from the tree $T$. 

Throughout the algorithm, we let $S$ be the set of vertices that are not currently connected to the tree $T$. We will not maintain the set $S$ explicitly, but it will be useful  for the analysis. At the beginning of the algorithm, the set $S$ contains all vertices that lied in the original subtree of $T$ rooted at $y$.

Our algorithm maintains a heap $Q$ that contains a subset of vertices of $S$ that need to be examined. The key associated with each vertex $v$ in $Q$ is $\tdist(v)$. We will ensure that the following invariant holds throughout the update:

\begin{properties}{J}
	\item If $a$ is the vertex with minimum value $\tdist(a)$ in $S$, then $a\in Q$ holds. \label{inv: smallest key in Q}
\end{properties}

We start with the heap $Q$ containing only the vertex $y$. Then we perform iterations, as long as $Q\neq\emptyset$. In order to execute a single iteration, we let $a$ be the vertex with smallest key $\tdist(a)$ in the heap $Q$, and we let $b$ be the vertex in heap $H(a)$ with smallest key.
If $\tdist(a)>2d$, then we set $\tdist(a)$ to $\infty$ and remove it from $Q$. We then set value $\tdist(x)$ of every vertex that is currently not connected to the tree $T$ to $\infty$ and terminate the update algorithm. We assume from now on that $\tdist(a)\leq 2d$. We then consider two cases.

The first case happens if $\tdist_a(b)+\ell'(b,a)\leq \tdist(a)$. Note that from Invariant \ref{inv: smallest key in Q}, vertex $b$ is currently attached to the tree $T$. In this case, we remove $a$ from $Q$ and attach it to the tree $T$ as a child of vertex $b$, via the edge $(b,a)$. Every vertex that is currently a descendant of $a$ also becomes automatically attached to the tree $T$ and removed from the set $S$; vertex $a$ itself is removed from $S$ as well.

Consider now the second case, where $\tdist_a(b)+\ell'(b,a)> \tdist(a)$. In this case, we increase $\tdist(a)$ by (additive) $\eps$, and keep vertex $a$ in the heap $Q$. Additionally, for every integer $0\leq i\leq \ceil{\log n}$, such that, for some integer $q$, threshold $q\cdot \eps^2\cdot \ceil{\frac{d \cdot 2^i}{\Gamma\cdot \log n}}$ lies between the old and the new value $\tdist(a)$, we consider every vertex $z\in U_i(a)$. We set $\tdist_z(a)=\tdist(a)$ and update the key of vertex $a$ in the heap $H(z)$. Additionally, if $z$ is currently a child-vertex of $a$, then we disconnect $z$ from $a$ (so it stops being a child of $a$), and add $z$ to the heap $Q$. This finishes the algorithm for handling edge-deletion updates.

We start by showing that Invariant \ref{inv: smallest key in Q} holds throughout the algorithm.

\begin{claim}\label{claim: update property holds}
Invariant \ref{inv: smallest key in Q} holds throughout the algorithm.
\end{claim}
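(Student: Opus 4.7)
The plan is to prove a strengthening of Invariant J: throughout the edge-deletion procedure, $Q$ contains every vertex of $S$ that is a \emph{root} of its connected component in the current tree-structure maintained by the data structure. Here a root of $S$ is a vertex $v\in S$ with no parent in the current $T$—either because its parent edge was the one just deleted, or because its parent edge was removed during a Case-2 threshold-crossing detachment in an earlier iteration. To link this strengthening back to Invariant J, I would first establish a monotonicity fact: for every edge $(a,z)$ currently in $T$ with $w(a,z)=2^i$,
\[\tdist(z)\;=\;\tdist_z(a)+\ell'(a,z)\;\geq\;\tdist(a)-\eps^2\Bigl\lceil\tfrac{d\cdot 2^i}{\Gamma\log n}\Bigr\rceil+\eps\Bigl\lceil\tfrac{d\cdot 2^i}{\Gamma\log n}\Bigr\rceil\;\geq\;\tdist(a),\]
which follows from Invariants I2 and I3 combined with the definition of $\ell'$ and $\eps\leq 1$. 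Iterating this along any root-to-descendant path shows that every descendant of a root $r$ of $S$ satisfies $\tdist\geq\tdist(r)$, so the minimum of $\tdist$ over $S$ is attained at some root; hence the strengthened invariant immediately yields Invariant J.

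Next I would verify the strengthened invariant by induction on the iterations of the edge-deletion procedure. At initialization, after removing $(x,y)$ from $T$, the set $S$ coincides with the old $T$-subtree rooted at $y$, so $y$ is its unique root and setting $Q=\{y\}$ establishes the base case. For the inductive step, in Case 1 the selected $a\in Q$ is by the inductive hypothesis a root of $S$, and reattaching $a$ together with all of its current $T$-descendants removes the entire component rooted at $a$ from $S$; no previously non-root vertex is promoted to root status, so the removal of $a$ from $Q$ preserves the invariant. In Case 2, $a$ stays in $S$, remains a root, and remains in $Q$; the only candidate new roots are precisely those children $z$ of $a$ in $T$ that get detached during the threshold-crossing step, each of which the algorithm explicitly inserts into $Q$ at the moment of detachment.

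The only subtle point is ensuring that no detachment in Case 2 silently escapes being inserted into $Q$. This is the place where I would carefully use the bookkeeping through the sets $U_i(a)$: when $\tdist(a)$ crosses a threshold of the form $q\cdot\eps^2\lceil\tfrac{d\cdot 2^i}{\Gamma\log n}\rceil$, the algorithm iterates over exactly $U_i(a)$—the $a$-neighbors along edges of weight $2^i$—and every such $z$ that happens to be a current child of $a$ in $T$ is simultaneously detached and inserted into $Q$. This pairing of the two actions is exactly what is needed to prevent the creation of an uncounted root; once this is in place, the induction closes and Invariant J follows.
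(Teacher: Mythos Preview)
Your approach is correct and genuinely different from the paper's. The paper argues by contradiction: assuming the minimum-$\tdist$ vertex $a\in S$ is not in $Q$, it looks at $a$'s parent $b$ \emph{at the start} of the update procedure, uses essentially the same I2/I3 computation you use to show that $\tdist(b)$ was then strictly smaller than $\tdist(a)$ by at least $\frac{\eps d}{\Gamma\log n}\cdot 2^{i-1}$, and then argues that either $b$ stayed attached to $T$ throughout (forcing $a=y\in Q$) or $\tdist(b)$ grew to at least $\tdist(a)$, which forces a threshold crossing and hence a notification that put $a$ into $Q$. Your structural strengthening---tracking every root of the detached forest rather than just the single minimum---is more modular and makes the inductive maintenance transparent; the paper's proof is terser because it only chases one vertex.

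Two places in your writeup need tightening. First, you invoke I2 and I3 mid-procedure, but the paper only verifies them at the endpoints of each update. The inequalities you need do persist during the procedure: I3 is maintained step by step by the threshold-crossing mechanism in Case~2, and the I2 equality survives on any still-intact parent edge $(a,z)$ because neither $\tdist(z)$ nor $\tdist_z(a)$ is touched while $z$ remains a child of $a$. You should say this explicitly rather than citing the invariants as stated. Second, your Case-1 argument says ``reattaching $a$ \dots\ removes the entire component rooted at $a$ from $S$'', which presupposes $b\notin S$. This is true but does not follow from the strengthened invariant alone; if $b\in S$ then after the reattachment $a$ would remain in $S$ as a non-root while still realizing the minimum $\tdist$, breaking both monotonicity along $(b,a)$ and Invariant~J. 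The fix is to apply your own monotonicity computation to the \emph{candidate} edge $(b,a)$: the Case-1 inequality $\tdist_a(b)+\ell'(b,a)\le\tdist(a)$ combined with I3 on that same edge gives $\tdist(a)>\tdist(b)$ strictly, so $b\in S$ would contradict $a$ being the minimum of $\tdist$ over $S$ (which holds by the inductive hypothesis). Spelling this out both rules out $b\in S$ and re-establishes monotonicity on the newly inserted tree edge, closing the induction.
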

\begin{proof}
 Consider any time during the execution of the update procedure, and
let $a$ be the vertex of $S$ with smallest value $\tdist(a)$. Assume for contradiction that $a\not\in Q$. Observe first that, whenever the value $\tdist(v)$ changes for any vertex of $S$, we add $v$ to $Q$. Vertex $v$ can only leave the heap $Q$ when it is connected to the tree $T$, or if we set $\tdist(v)$ to $\infty$. Therefore, if $\tdist(a)$ ever changed over the course of the current update operation, then $a\in Q$ must hold. We assume from now on that value $\tdist(a)$ did not change over the course of the current update operation. Let $b$ be the parent-vertex of $a$ at the beginning of the current update. 
From our algorithm, if value $\tdist_b(a)$ ever changed over the course of the current update operation, then vertex $a$ was immediately added to heap $Q$. Therefore, value $\tdist_b(a)$ remained unchanged throughout the current update. 

Denote the weight of edge $(b,a)$ by $2^i$.
In the next observation, we show that, at the beginning of the current update procedure, $\tdist(b)$ was significantly lower than $\tdist(a)$. We will then conclude that $\tdist(b)$ must have grown significantly over the course of the current update procedure, and so $\tdist_b(a)$ must have grown as well, leading to a contradiction.

\begin{observation}\label{obs: small tdist of b}
At the beginning of the current update procedure, $\tdist(b)\leq \tdist(a)-\frac{\eps\cdot d}{\Gamma\cdot \log n}\cdot 2^{i-1}$ held.	
\end{observation}
\begin{proof}
	From Invariant \ref{inv: keys via tree}, at the beginning of the current update procedure, $\tdist(a)=\tdist_a(b)+\ell'(b,a)$ held.
Recall that $\ell'(b,a)=\ell(b,a)+\eps\cdot \ceil{\frac{d}{\Gamma\cdot \log n}\cdot 2^i}$. Moreover, from Invariant \ref{inv: updates of neighbors}, 
$\tdist_a(b)\geq \tdist(b)-\eps^2\cdot\ceil{\frac{ d \cdot 2^i}{\Gamma\cdot \log n}}$ held at the beginning of the procedure for processing the update.
Altogether, we get that, at the beginning of the current update procedure:

\[ \begin{split}
\tdist(a)&=\tdist_a(b)+\ell'(b,a)\\
&\geq \tdist(b)-\eps^2\cdot\ceil{\frac{d \cdot 2^i}{\Gamma\cdot \log n}}+\ell(b,a)+\eps\cdot \ceil{\frac{d}{\Gamma\cdot \log n}\cdot 2^i}\\
&\geq \tdist(b)+\frac{\eps\cdot d}{\Gamma\cdot \log n}\cdot 2^{i-1}.
\end{split} \]
%
%Therefore, at the beginning of the current update procedure, $\tdist(b)\leq \tdist(a)-\frac{\eps\cdot d}{\Gamma\cdot \log n}\cdot 2^{i-1}$ held.  
\end{proof}

Since vertex $a$ is the vertex of $S$ with smallest value $\tdist(a)$, either vertex $b$ is currently connected to the tree $T$, or the value $\tdist(b)$ has grown to become at least $\tdist(a)$ over the course of the current update procedure.

If vertex $b$ remained attached to the tree $T$ over the course of the entire update procedure, then $a=y$ must hold; in other words, the edge that was deleted from $G$ is $(b,a)$. In this case, $a$ must have been added to $Q$ immediately at the beginning of the update procedure.
Otherwise, if vertex $b$ became attached to tree $T$ at any time during the update procedure, then each of its children was either attached to tree $T$ together with it, or added to $Q$. Since $a$ was never added to $Q$, it must be the case that $b$ was never attached to $T$, and so $\tdist(b)$ increased over the course of the current update procedure to become at least $\tdist(a)$. We claim that in this case, value $\tdist_b(a)$ must have been updated, leading to a contradiction.

Indeed,  from the above calculations, over the course of the current update operation, value $\tdist(b)$ must have increased by at least $\frac{\eps\cdot d}{\Gamma\cdot \log n}\cdot 2^{i-1}>\eps^2\cdot\ceil{\frac{ d \cdot 2^i}{\Gamma\cdot \log n}}$ units, (since $d\ge \Gamma\log n$ and $\eps\leq 1/8$), and so value $\tdist_a(b)$ must have been updated, with vertex $a$ added to $Q$, a contradiction.
\end{proof}

We now verify that Invariants \ref{inv: estimates grow}--\ref{inv: updates of neighbors} continue to hold at the end of the update procedure, assuming that they held before the procedure started.

First, it is immediate to verify that values $\tdist(v)$ for vertices $v\in V$ and values $\tdist_u(v)$ for edges $(v,u)$ are never decreased, and they only increase by intergral multiples of $\eps$, so Invariant \ref{inv: estimates grow} continues to hold.

Consider now some edge $e=(v,u)$ with $w(e)=2^i$. Notice that our algorithm updates $\tdist_u(v)$ to the current value $\tdist(v)$ whenever the latter passes a threshold that is an integral multiple of $\eps^2\cdot \ceil{\frac{ d \cdot 2^i}{\Gamma\cdot \log n}}$. This ensures that  $\tdist(u)-\eps^2\ceil{\frac{d \cdot 2^i}{\Gamma\cdot \log n}}\leq \tdist_v(u)\leq \tdist(u)$ holds at all times and establishes Invariant \ref{inv: updates of neighbors}.

In order to establish Invariant \ref{inv: keys via tree}, we consider the time when some vertex $v\in S$ reconnects to the tree $T$ via an edge $(u,v)$. Notice that at this time, if $z$ is a descendant of $v$, and no vertex on the $v$--$z$ path in $T$ was ever added to $Q$, then $z$ becomes connected to the tree $T$ as well. If $z'$ is the parent of $z$, then $\tdist_{z}(z')$ and $\tdist(z)$ remained unchanged over the course of the current update procedure, so $\tdist(z)=\tdist_z(z')+\ell'(z',z)$ continues to hold, and $z'$ is the vertex with the lowest key $\tdist_z(z')+\ell'(z',z)$ in the heap $H(z)$. Consider now vertex $v$ itself, and recall that it reconnects to the tree $T$ via the edge $(u,v)$. From our algorithm, it must be the case that $u$ is the vertex with the smallest key $\tdist_v(u)+\ell'(u,v)$ in $H(v)$, and moreover, 
$\tdist_v(u)+\ell'(u,v)\leq \tdist(v)$. It now remains to show that 
$\tdist_v(u)+\ell'(u,v)= \tdist(v)$. If the value $\tdist(v)$ never increased over the course of the current update procedure, then, since Invariant \ref{inv: keys via tree} held at the beginning of the update procedure, and since value $\tdist_v(u)$ may only grow, it is impossible that $\tdist_v(u)+\ell'(u,v)< \tdist(v)$. Therefore, in this case, $\tdist_v(u)+\ell'(u,v)= \tdist(v)$ must hold at the end of the update procedure. Assume now that the value $\tdist(v)$ grew over the course of the current update procedure, and let $\tau$ be the time when it was increased last (recall that each increase is by exactly $\eps$ units). Then, at time $\tau$, $\tdist_v(u)+\ell'(u,v)> \tdist(v)$ held. Since value $\tdist_v(u)$ may not decrease, and since $\tdist_v(u)+\ell'(u,v)$ is an integral multiple of $\eps$, it must be the case that $\tdist_v(u)+\ell'(u,v)= \tdist(v)$ at the end of the procedure.

In order to establish Invariant \ref{inv: dist estimate lower bound}, it is enough to show that, if $v\in V(T)$, then the length of the $s$-$v$ path in the current tree $T$ is at most $\frac{\tdist(v)}{1-\eps}$. The proof is by induction on the number of edges on the path. The claim clearly holds for the root vertex $s$. Let $j>0$ be any integer. We assume that the claim holds for all vertices $v\in V(T)$ for which the $s$-$v$ path in $T$ contains at most $j-1$ edges, and prove the claim for the case where the path contains $j$ edges. Consider now some vertex $v\in V(T)$, and let $P$ be the $s$-$v$ path in tree $T$, that contains $j$ edges. Let $u$ be the penultimate vertex on the path, and let $P'$ be the subpath of $P$ from $s$ to $u$. Then $\ell'(P)=\ell'(P')+\ell'(u,v)$. Let $i$ be the integer for which $w(u,v)=2^i$. From the induction hypothesis:

\[ \ell'(P')\leq \frac{\tdist(u)}{1-\eps}. \]

From Invariant \ref{inv: keys via tree},  $\tdist(v)=\tdist_v(u)+\ell'(u,v)$, and from Invariant \ref{inv: updates of neighbors}, $\tdist_v(u)\geq \tdist(u)-\eps^2\cdot\ceil{\frac{ d \cdot 2^i}{\Gamma\cdot \log n}}$. Altogether, we get that:

\[ \tdist(v)=\tdist_v(u)+\ell'(u,v)\geq \tdist(u)-\eps^2\cdot\ceil{\frac{d \cdot 2^i}{\Gamma\cdot \log n}}+\ell'(u,v)\geq (1-\eps)\cdot \ell'(P')-\eps^2\cdot \ceil{\frac{d \cdot 2^i}{\Gamma\cdot \log n}}+\ell'(u,v). \] 

Since $\ell'(u,v)=\ell(u,v)+\eps\cdot \ceil{\frac{d\cdot 2^i}{\Gamma\cdot \log n}}$, we get that $\eps^2\cdot\ceil{\frac {d \cdot 2^i}{\Gamma\cdot \log n}}\leq \eps\cdot \ell'(u,v)$.
Altogether, we get that:

\[\tdist(v)\geq (1-\eps)\cdot \ell'(P')+(1-\eps)\ell'(u,v)=(1-\eps)\ell'(P),\]
as required.

Lastly, it remains to establish Invariant \ref{inv: dist estimate upper bound}. We first prove that, for every vertex $v$ that lies in the tree $T$,  $\tdist(v)\leq \dist'(s,v)$ holds. The proof is by induction on the number of edges on the shortest path (with respect to edge lengths $\ell'(\cdot)$) connecting $s$ to $v$. Induction base is when the path consists of a single vertex and no edges, so $v=s$, in which case the claim clearly holds.  Assume now that for some integer $j>0$, the claim holds for all vertices $v'$, for which there exists a shortest $s$-$v'$ path that contains fewer than $j$ edges. Consider some vertex $v$, such that there exists a shortest $s$-$v$ path $P^*$ (with respect to edge lengths $\ell'(\cdot)$), containing exactly $j$ edges. Let $u$ be the penultimate vertex on this path. From the induction hypothesis, $\tdist(u)\leq \dist'(s,u)$.
From Invariant \ref{inv: updates of neighbors}, $\tdist_v(u)\leq \tdist(u)$ holds, and, from Invariant \ref{inv: keys via tree}, $\tdist(v)\leq \tdist_v(u)+\ell'(u,v)$. Altogether, we get that $\tdist(v)\leq \tdist(u)+\ell'(u,v)\leq \dist'(s,u)+\ell'(u,v)=\dist'(s,v)$.
From the analysis of the algorithm it is easy to verify that, once $\tdist(v)$ reaches $2d$, it must be the case that $\dist'_G(s,v)\geq 2d$. This establishes Invariant \ref{inv: dist estimate upper bound}.

\paragraph{The Remainder of the Algorithm.} 
We continue the algorithm as long as $\tdist(t)\leq (1+3\eps)\cdot d$ holds. Once  $\tdist(t)> (1+3\eps)\cdot d$, we terminate the algorithm and return ``FAIL''. From Invariant  \ref{inv: dist estimate upper bound}, at this time $\dist'_G(s,t)>(1+3\eps)\cdot d$ must hold, and, from  \Cref{obs: short path translation}, graph $G$ may no longer contain an $s$-$t$ path $P$ with 
$\sum_{e\in E(P)}\ell(e)\leq d$, and $\sum_{e\in E(P)}w(e)\leq \Gamma$.

Whenever a \pathquery arrives, we simply return the unique $s$-$t$ path $P$ in the current tree $T$, in time $O(|E(P)|)$. Since we are guaranteed that $\tdist(t)\leq (1+3\eps)\cdot d$, from Invariant \ref{inv: dist estimate lower bound}, $\sum_{e\in E(P)}\ell'(e)\leq 
\frac{\tdist(t)}{1-\eps}\leq (1+6\eps)d$, and so $\sum_{e\in E(P)}\ell(e)\leq \sum_{e\in E(P)}\ell'(e)\leq (1+6\eps)d$.

\paragraph{Analysis of Total Update Time.}
From the discussion above, the initialization takes time $O(m\log m)$, and the total time spent on vertex-splitting updates (excluding the time spent on edge-deletion updates generated by them) is bounded by $O(m\log m)$.

We now turn to analyze the total time spent on edge-deletion updates. Consider the algorithm for processing the deletion of some edge $e=(x,y)$, that lies in the tree $T$. Initially, we add vertex $y$ to the heap $Q$, and then inspect this vertex. Whenever any vertex $v\in Q$ is inspected, we spend time $O(\log n)$ in order to decide whether it needs to be reconnected to the tree $T$, and, if $\tdist(v)$ increases, to calculate all indices $0\leq i\leq \ceil{\log n}$ for which an integral multiple of  $\eps^2\cdot \ceil{\frac{ d \cdot 2^i}{\Gamma\cdot \log n}}$ lies between the old and the new value $\tdist(v)$. We refer to this as \emph{vertex inspection time}. Additionally, for every integer $i$, for which an integral multiple of $\eps^2\cdot \ceil{\frac{ d \cdot 2^i}{\Gamma\cdot \log n}}$ lies between the old and the new value $\tdist(v)$, we consider every vertex $u\in U_i(v)$, update $\tdist_u(v)$, and possibly insert $u$ into $Q$. We call this procedure the \emph{update of out-neighbor $u$}, and the time required to perform this update, including the possible subsequent inspection of vertex $u$ when it becomes the lowest-key vertex in $Q$, is counted as the time required to update the out-neighbor $u$. However, if such a vertex $u$ is inspected several times (in which case its value $\tdist(u)$ grows with every inspection), then only the first inspection is charged to the time that $v$ spends on updating its out-neighbors. All subsequent times are charged to the increase of value $\tdist(u)$.

To summarize, the total update time due to edge deletion updates can be bounded as follows.

\begin{itemize}
	\item Whenever an edge $e=(x,y)$ is deleted from $G$, we may spend $O(\log m)$ time on inspecting vertex $y$. In total this may contribute up to $O(m\log m)$ to total update time.
	
	\item Whenever, for some vertex $a\in V$, $\tdist(a)$ increases, we may spend $O(\log m)$ time on a subsequent inspection of $a$. Since, when $\tdist(a)$ reaches $2d$, it is set to be $\infty$, and since every increase is by $\eps$ units, there can be at most $O(d/\eps)$ such increases for each vertex $a$. In total this may contribute up to $O\left(\frac{nd\log m}{\eps}\right )$ to total update time.

		\item The remaining time contribution is due to out-neighbor update operations. Consider a vertex $v\in V$,
	and consider an index $0\leq i\leq \ceil{\log n}$. Whenever $\tdist(v)$ passes a threshold that is an integral multiple of 
	$\eps^2\cdot \ceil{\frac{d \cdot 2^i}{\Gamma\cdot \log n}}$, we need to update all vertices in $U_i(v)$. Since $\tdist(v)\leq 2d$, and $\tdist(v)$ only grows throughout the algorithm, there are in total at most $O\left(\frac{\Gamma\cdot\log n}{\eps^2\cdot 2^i}\right )$ such update operations of vertices in $U_i(v)$. 
	Let $n_i(v)$ be the largest number of vertices that are ever present at any given time in set $U_i(v)$. Then each  such update operation takes time $O(n_i(v)\cdot \log n)$, and the total time spent on updating all out-neighbors of $v$ in $U_i(v)$ over the course of the algorithm is bounded by $O\left(\frac{n_i(v)\cdot \Gamma\cdot\log^2 n}{2^i\cdot \eps^2}\right )$. 
	Summing this over all indices $0\leq i \leq\ceil{\log m}$, the total time required for all updates of out-neighbors of $v$ is bounded by:
	
	\begin{equation}\label{eq: updating out-neighbors}
	O\left(\frac{\Gamma\cdot\log^2 n}{\eps^2}\right )\cdot\sum_{i=0}^{\ceil{\log m}}\frac{n_i(v)}{2^i}.
		\end{equation}
		
	From Property \ref{prop: few close neighbors}, if $v\neq t$, then $n_i(v)\leq 2^{i+2}$, and so the total time required to update all out-neighbors of $v$ is bounded by:

\[	O\left(\frac{\Gamma\cdot\log^3 m}{\eps^2}\right ).\]

	The total time required for all updates of out-neighbors of all vertices in $V(G)\setminus\set{t}$ is bounded by $O\left(\frac{\Gamma\cdot n\cdot \log^3 m}{\eps^2}\right )$.
	
	Lastly, if $v=t$, then for all $0\leq i\leq \ceil{\log m}$, $n_i(v)\leq n$ must hold. From Equation \ref{eq: updating out-neighbors}, the total time required for updating all out-neighbors of $t$ is bounded by $O\left(\frac{\Gamma\cdot n\cdot \log^3 m}{\eps^2}\right )$. Altogether, the total time required for all updates of out-neighbors of all vertices of $G$ is bounded by:
	
	\[O\left(\frac{\Gamma\cdot n\cdot \log^3 m}{\eps^2}\right ).\]

\end{itemize}

We conclude that the total update time of the algorithm is bounded by:

\[O\left(\frac{nd\log m}{\eps}+ m\log m + \frac{\Gamma\cdot n\log^3 m}{\eps^2}\right )\leq \tilde O\left(\frac{n^2+m+\Gamma\cdot n}{\eps^2}\right),  \]

since $d\leq \max\set{4n/\eps,4\Gamma\cdot \log n}$ holds.

There are two minor points that we have ignored in our analysis so far. Consider a time when vertex $v$ is split, and a collection $U=\set{u_1,\ldots,u_r}$ of new vertices is inserted into $G$. If $x$ is the parent-vertex of $v$ in the tree $T$, then we may temporarily insert edges $(x,u_i)$ into $G$, for all $u_i\in U$. These new edges are then immediately deleted from $G$, but between the time that they are inserted and until the time they are deleted, Property \ref{prop: few close neighbors} may not hold for $x$. However, it is easy to verify that, during the same time interval, we never update $\tdist(x)$, as the insertions and the deletions of these new edges do not affect the $s$-$x$ path in the current tree, and so we do not need to update out-neighbors of $x$ during this time.
Additionally, these edges that were temporarily inserted into $G$ 
are not included in the total number of edges $m$ that ever belong to $G$. However, every edge that ever lied in $G$ contributes at most one such extra edge, so the total number of edges ever present in $G$ is at most $n+m\leq 2m$, and the above asymptotic bound on total update time remains unchanged.

\section{Proof of \Cref{obs: ball growing2}}
\label{sec: appx: ball growing 2}
%\begin{proof}
Let $\phi'=\frac{64\log n}{d}$.
	For all $i\geq 1$, let $S_i=B_H(v,i)=\set{u\in V(H)\mid \dist_H(v,u)\leq i}$ -- the ball of radius $i$ around $v$. Assume first that $v\in R$. Then for all odd indices $i\geq 1$, $S_i\setminus S_{i-1}$ only contains vertices of $L$, while for all even indices $i>1$, $S_i\setminus S_{i-1}$ only contains vertices of $R$. Therefore, for every even index $i>1$, all edges in $E_H(S_i,V(H)\setminus S_i)$ are special edges. We say that an even index $i>1$ is \emph{acceptable} if $|E_H(S_i,V(H)\setminus S_i)|\leq \phi'\cdot |S_i|$. We now claim that some even index $1<i\leq d-1$ must be acceptable. Indeed, assume for contradiction that each such even index $i$ is unacceptable. Consider any such even index $i$. 
	Then $|E_H(S_i,V(H)\setminus S_i)|>\phi'\cdot |S_i|$. Since all edges in $E_H(S_i,V(H)\setminus S_i)$ are special, every vertex in $S_{i+1}\setminus S_i$ is incident to exactly one such edge, and so $|S_{i+1}\setminus S_i|\geq |E_H(S_i,V(H)\setminus S_i)|>\phi'\cdot |S_i|$. In particular, $|S_{i+2}|\geq |S_{i+1}|\geq (1+\phi')\cdot |S_i|$. Let $r$ be the largest even integer that is at most $d-1$, so $r\geq d-2\geq d/2$. Then, since 	$\phi'=\frac{64\log n}{d}$:
	
	\[|S_r|\geq (1+\phi')^{r/2}\geq (1+\phi')^{d/4}=\left(1+\frac{64\log n}{d}\right )^{d/4}>n,\]
	
	a contradiction. 
	
	Our algorithm performs a BFS starting from $v$, until it reaches the first even index $i$  that is acceptable. It then returns a cut $(A,B)$ with $S_i=A$ and $B=V(H)\setminus A$. Clearly, $|E_H(A,B)|\leq \phi'\cdot |A|\leq \phi\cdot |A|$ holds from the definition of acceptable index. Moreover, since $|E_H(A,B)|\leq \phi'\cdot n= \frac{64\log n}{d}\cdot n$, while $|B|\geq |X|\geq \frac{n}{32}$, we get that $|E_H(A,B)|\leq \frac{2^{11}\log n}{d}|B|=\phi\cdot |B|$. Therefore, the sparsity of cut $(A,B)$ is at most $\phi$. It is immediate to verify that this cut has all other required properties. It is also immediate to verify that the running time of the algorithm is $O(\vol_H(A))$.
	
	If $v\in L$, then for all odd indices $i\geq 1$, $S_i\setminus S_{i-1}$ only contains vertices of $R$, while for all even indices $i>1$, $S_i\setminus S_{i-1}$ only contains vertices of $L$. Therefore, for every odd index $i\geq 1$, all edges in $E_H(S_i,V(H)\setminus S_i)$ are special edges.
	The algorithm is the same as before, except that we use odd indices $i$ instead of even indices.
%\end{proof}

%===========================
%===========================
%===========================
%===========================
%===========================
%===========================
%===========================

\bibliographystyle{alpha}

\bibliography{faster-classical-matching}

\newcommand{\etalchar}[1]{$^{#1}$}
\def\cprime{$'$} \def\cprime{$'$}
\begin{thebibliography}{vdBLN{\etalchar{+}}20}

\bibitem[AAP93]{AAP93}
Baruch Awerbuch, Yossi Azar, and Serge~A. Plotkin.
\newblock Throughput-competitive on-line routing.
\newblock In {\em 34th Annual Symposium on Foundations of Computer Science,
  Palo Alto, California, USA, 3-5 November 1993}, pages 32--40, 1993.

\bibitem[ABF23]{DBLP:conf/stoc/AbboudBF23}
Amir Abboud, Karl Bringmann, and Nick Fischer.
\newblock Stronger 3-sum lower bounds for approximate distance oracles via
  additive combinatorics.
\newblock In Barna Saha and Rocco~A. Servedio, editors, {\em Proceedings of the
  55th Annual {ACM} Symposium on Theory of Computing, {STOC} 2023, Orlando, FL,
  USA, June 20-23, 2023}, pages 391--404. {ACM}, 2023.

\bibitem[ABKZ22]{abboud2022hardness}
Amir Abboud, Karl Bringmann, Seri Khoury, and Or~Zamir.
\newblock Hardness of approximation in p via short cycle removal: Cycle
  detection, distance oracles, and beyond.
\newblock {\em arXiv preprint arXiv:2204.10465}, 2022.

\bibitem[AMV20]{AMV20}
Kyriakos Axiotis, Aleksander Madry, and Adrian Vladu.
\newblock Circulation control for faster minimum cost flow in unit-capacity
  graphs.
\newblock In {\em 61st {IEEE} Annual Symposium on Foundations of Computer
  Science, {FOCS} 2020, Durham, NC, USA, November 16-19, 2020}, pages 93--104,
  2020.

\bibitem[Ber17]{Bernstein}
Aaron Bernstein.
\newblock Deterministic partially dynamic single source shortest paths in
  weighted graphs.
\newblock In {\em LIPIcs-Leibniz International Proceedings in Informatics},
  volume~80. Schloss Dagstuhl-Leibniz-Center for Computer Science, 2017.

\bibitem[BGS20]{SCC}
Aaron Bernstein, Maximilian~Probst Gutenberg, and Thatchaphol Saranurak.
\newblock Deterministic decremental reachability, scc, and shortest paths via
  directed expanders and congestion balancing.
\newblock In {\em 2020 IEEE 61st Annual Symposium on Foundations of Computer
  Science (FOCS)}, pages 1123--1134. IEEE, 2020.

\bibitem[BGS22]{bernstein2022deterministic}
Aaron Bernstein, Maximilian~Probst Gutenberg, and Thatchaphol Saranurak.
\newblock Deterministic decremental sssp and approximate min-cost flow in
  almost-linear time.
\newblock In {\em 2021 IEEE 62nd Annual Symposium on Foundations of Computer
  Science (FOCS)}, pages 1000--1008. IEEE, 2022.

\bibitem[BGWN20]{AlmostDAG2}
Aaron Bernstein, Maximilian~Probst Gutenberg, and Christian Wulff-Nilsen.
\newblock Near-optimal decremental sssp in dense weighted digraphs.
\newblock In {\em 2020 IEEE 61st Annual Symposium on Foundations of Computer
  Science (FOCS)}, pages 1112--1122. IEEE, 2020.

\bibitem[CGL{\etalchar{+}}19]{detbalanced}
Julia Chuzhoy, Yu~Gao, Jason Li, Danupon Nanongkai, Richard Peng, and
  Thatchaphol Saranurak.
\newblock A deterministic algorithm for balanced cut with applications to
  dynamic connectivity, flows, and beyond.
\newblock {\em CoRR}, abs/1910.08025, 2019.

\bibitem[CK23]{SKnew}
Julia Chuzhoy and Sanjeev Khanna.
\newblock Maximum bipartite matching in $n^{2+o(1)}$ time via a combinatorial
  algorithm.
\newblock Under submission, 2023.

\bibitem[CKL{\etalchar{+}}22]{ChenKLPGS22}
Li~Chen, Rasmus Kyng, Yang~P. Liu, Richard Peng, Maximilian~Probst Gutenberg,
  and Sushant Sachdeva.
\newblock Maximum flow and minimum-cost flow in almost-linear time.
\newblock In {\em 63rd {IEEE} Annual Symposium on Foundations of Computer
  Science, {FOCS} 2022, Denver, CO, USA, October 31 - November 3, 2022}, pages
  612--623, 2022.

\bibitem[CMSV17]{CMSV17}
Michael~B. Cohen, Aleksander Madry, Piotr Sankowski, and Adrian Vladu.
\newblock Negative-weight shortest paths and unit capacity minimum cost flow in
  {\~{o}} (\emph{m}\({}^{\mbox{10/7}}\) log \emph{W}) time (extended abstract).
\newblock In {\em Proceedings of the Twenty-Eighth Annual {ACM-SIAM} Symposium
  on Discrete Algorithms, {SODA} 2017, Barcelona, Spain, Hotel Porta Fira,
  January 16-19}, pages 752--771, 2017.

\bibitem[CS21]{APSP-old}
Julia Chuzhoy and Thatchaphol Saranurak.
\newblock Deterministic algorithms for decremental shortest paths via layered
  core decomposition.
\newblock In {\em Proceedings of the 2021 ACM-SIAM Symposium on Discrete
  Algorithms (SODA)}, pages 2478--2496. SIAM, 2021.

\bibitem[DHZ00]{DorHZ00}
Dorit Dor, Shay Halperin, and Uri Zwick.
\newblock All-pairs almost shortest paths.
\newblock {\em {SIAM} J. Comput.}, 29(5):1740--1759, 2000.

\bibitem[Din06]{Dinitz}
Yefim Dinitz.
\newblock Dinitz' algorithm: The original version and {Even's} version.
\newblock In {\em Theoretical computer science}, pages 218--240. Springer,
  2006.

\bibitem[DS08]{DS08}
Samuel~I. Daitch and Daniel~A. Spielman.
\newblock Faster approximate lossy generalized flow via interior point
  algorithms.
\newblock In Cynthia Dwork, editor, {\em Proceedings of the 40th Annual {ACM}
  Symposium on Theory of Computing, Victoria, British Columbia, Canada, May
  17-20, 2008}, pages 451--460. {ACM}, 2008.

\bibitem[DWZ22]{DWZ22}
Ran Duan, Hongxun Wu, and Renfei Zhou.
\newblock Faster matrix multiplication via asymmetric hashing, 2022.

\bibitem[ES81]{EvenS}
Shimon Even and Yossi Shiloach.
\newblock An on-line edge-deletion problem.
\newblock {\em Journal of the ACM (JACM)}, 28(1):1--4, 1981.

\bibitem[FF56]{FF56}
L.~R. {Ford, Jr.} and D.~R. Fulkerson.
\newblock Maximal flow through a network.
\newblock {\em Canadian Journal of Mathematics}, 8:399--404, 1956.

\bibitem[Fle00]{Fleischer00}
Lisa Fleischer.
\newblock Approximating fractional multicommodity flow independent of the
  number of commodities.
\newblock {\em {SIAM} J. Discrete Math.}, 13(4):505--520, 2000.

\bibitem[GG81]{GabberG81}
Ofer Gabber and Zvi Galil.
\newblock Explicit constructions of linear-sized superconcentrators.
\newblock {\em J. Comput. Syst. Sci.}, 22(3):407--420, 1981.
\newblock announced at FOCS'79.

\bibitem[GK98]{GK98}
Naveen Garg and Jochen K{\"{o}}nemann.
\newblock Faster and simpler algorithms for multicommodity flow and other
  fractional packing problems.
\newblock In {\em 39th Annual Symposium on Foundations of Computer Science,
  {FOCS} '98, November 8-11, 1998, Palo Alto, California, {USA}}, pages
  300--309, 1998.

\bibitem[GKK10]{GKK10}
Ashish Goel, Michael Kapralov, and Sanjeev Khanna.
\newblock Perfect matchings in o(\emph{n} log \emph{n}) time in regular
  bipartite graphs.
\newblock In {\em Proceedings of the 42nd {ACM} Symposium on Theory of
  Computing, {STOC} 2010, Cambridge, Massachusetts, USA, 5-8 June 2010}, pages
  39--46, 2010.

\bibitem[GWN20]{gutenberg2020decremental}
Maximilian~Probst Gutenberg and Christian Wulff-Nilsen.
\newblock Decremental sssp in weighted digraphs: Faster and against an adaptive
  adversary.
\newblock In {\em Proceedings of the Fourteenth Annual ACM-SIAM Symposium on
  Discrete Algorithms}, pages 2542--2561. SIAM, 2020.

\bibitem[HK73]{HK73}
John~E. Hopcroft and Richard~M. Karp.
\newblock An n\({}^{\mbox{5/2}}\) algorithm for maximum matchings in bipartite
  graphs.
\newblock {\em {SIAM} J. Comput.}, 2(4):225--231, 1973.

\bibitem[HK95]{HenzingerKing}
Monika~Rauch Henzinger and Valerie King.
\newblock Fully dynamic biconnectivity and transitive closure.
\newblock In {\em Foundations of Computer Science, 1995. Proceedings., 36th
  Annual Symposium on}, pages 664--672. IEEE, 1995.

\bibitem[HKNS15]{HenzingerKNS15}
Monika Henzinger, Sebastian Krinninger, Danupon Nanongkai, and Thatchaphol
  Saranurak.
\newblock Unifying and strengthening hardness for dynamic problems via the
  online matrix-vector multiplication conjecture.
\newblock In {\em Proceedings of the forty-seventh annual ACM symposium on
  Theory of computing}, pages 21--30, 2015.

\bibitem[IM81]{IM81}
Oscar~H. Ibarra and Shlomo Moran.
\newblock Deterministic and probabilistic algorithms for maximum bipartite
  matching via fast matrix multiplication.
\newblock {\em Inf. Process. Lett.}, 13(1):12--15, 1981.

\bibitem[Kin99]{ES-tree-directed}
Valerie King.
\newblock Fully dynamic algorithms for maintaining all-pairs shortest paths and
  transitive closure in digraphs.
\newblock In {\em 40th Annual Symposium on Foundations of Computer Science
  (Cat. No. 99CB37039)}, pages 81--89. IEEE, 1999.

\bibitem[KKOV07]{KKOV}
Rohit Khandekar, Subhash Khot, Lorenzo Orecchia, and Nisheeth~K Vishnoi.
\newblock On a cut-matching game for the sparsest cut problem.
\newblock {\em Univ. California, Berkeley, CA, USA, Tech. Rep.
  UCB/EECS-2007-177}, 6(7):12, 2007.

\bibitem[KRV09]{KRV}
Rohit Khandekar, Satish Rao, and Umesh Vazirani.
\newblock Graph partitioning using single commodity flows.
\newblock {\em Journal of the ACM (JACM)}, 56(4):19, 2009.

\bibitem[Lou10]{directed-CMG}
Anand Louis.
\newblock Cut-matching games on directed graphs.
\newblock {\em CoRR}, abs/1010.1047, 2010.

\bibitem[LRS13]{LRS13}
Yin~Tat Lee, Satish Rao, and Nikhil Srivastava.
\newblock A new approach to computing maximum flows using electrical flows.
\newblock In {\em Symposium on Theory of Computing Conference, STOC'13, Palo
  Alto, CA, USA, June 1-4, 2013}, pages 755--764, 2013.

\bibitem[LS19]{LS19}
Yin~Tat Lee and Aaron Sidford.
\newblock Solving linear programs with sqrt(rank) linear system solves.
\newblock {\em CoRR}, abs/1910.08033, 2019.

\bibitem[LS20]{LS20_stoc}
Yang~P. Liu and Aaron Sidford.
\newblock Faster energy maximization for faster maximum flow.
\newblock In {\em Proccedings of the 52nd Annual {ACM} {SIGACT} Symposium on
  Theory of Computing, {STOC} 2020, Chicago, IL, USA, June 22-26, 2020}, pages
  803--814, 2020.

\bibitem[Mad10]{madry2010faster}
Aleksander Madry.
\newblock Faster approximation schemes for fractional multicommodity flow
  problems via dynamic graph algorithms.
\newblock In {\em Proceedings of the forty-second ACM symposium on Theory of
  computing}, pages 121--130, 2010.

\bibitem[Mad13]{Madry13}
Aleksander Madry.
\newblock Navigating central path with electrical flows: From flows to
  matchings, and back.
\newblock In {\em 54th Annual {IEEE} Symposium on Foundations of Computer
  Science, {FOCS} 2013, 26-29 October, 2013, Berkeley, CA, {USA}}, pages
  253--262, 2013.

\bibitem[Mad16]{Madry16}
Aleksander Madry.
\newblock Computing maximum flow with augmenting electrical flows.
\newblock In {\em {IEEE} 57th Annual Symposium on Foundations of Computer
  Science, {FOCS} 2016, 9-11 October 2016, Hyatt Regency, New Brunswick, New
  Jersey, {USA}}, pages 593--602, 2016.

\bibitem[Mar73]{Margulis}
G.~A. Margulis.
\newblock Explicit construction of concentrators.
\newblock {\em Problemy Peredafi Iqfiwmacii}, 9(4):71--80, 1973.
\newblock (English translation in Problems Inform. Transmission (1975)).

\bibitem[MS04]{MS04}
Marcin Mucha and Piotr Sankowski.
\newblock Maximum matchings via gaussian elimination.
\newblock In {\em 45th Symposium on Foundations of Computer Science {(FOCS}
  2004), 17-19 October 2004, Rome, Italy, Proceedings}, pages 248--255, 2004.

\bibitem[RZ11]{RodittyZ11}
Liam Roditty and Uri Zwick.
\newblock On dynamic shortest paths problems.
\newblock {\em Algorithmica}, 61(2):389--401, 2011.

\bibitem[Sch03]{Sch03}
A.~Schrijver.
\newblock {\em Combinatorial Optimization: Polyhedra and Efficiency}.
\newblock Springer, 2003.

\bibitem[ST04]{ST04}
Daniel~A. Spielman and Shang{-}Hua Teng.
\newblock Nearly-linear time algorithms for graph partitioning, graph
  sparsification, and solving linear systems.
\newblock In {\em Proceedings of the 36th Annual {ACM} Symposium on Theory of
  Computing, Chicago, IL, USA, June 13-16, 2004}, pages 81--90, 2004.

\bibitem[SW19]{expander-pruning}
Thatchaphol Saranurak and Di~Wang.
\newblock Expander decomposition and pruning: Faster, stronger, and simpler.
\newblock In {\em Proceedings of the Thirtieth Annual {ACM-SIAM} Symposium on
  Discrete Algorithms, {SODA} 2019, San Diego, California, USA, January 6-9,
  2019}, pages 2616--2635, 2019.

\bibitem[vdBLN{\etalchar{+}}20]{BrandLNPSS0W20}
Jan van~den Brand, Yin~Tat Lee, Danupon Nanongkai, Richard Peng, Thatchaphol
  Saranurak, Aaron Sidford, Zhao Song, and Di~Wang.
\newblock Bipartite matching in nearly-linear time on moderately dense graphs.
\newblock In {\em 61st {IEEE} Annual Symposium on Foundations of Computer
  Science, {FOCS} 2020, Durham, NC, USA, November 16-19, 2020}, pages 919--930,
  2020.

\end{thebibliography}

\end{document}